\documentclass[12pt,a4paper]{amsart}
\usepackage[utf8]{inputenc}
\usepackage[english]{babel}
\usepackage{lmodern}
\usepackage{array,booktabs}
\usepackage{multicol}
\usepackage{mathrsfs, amssymb, amsmath, amsfonts}
\usepackage{graphicx}
\usepackage{geometry, xcolor}
\usepackage{enumitem}
\usepackage[abs]{overpic}
\usepackage[pagewise]{lineno} 

\title[Quasi-Linear Criticality Theory]{Quasi-Linear Criticality Theory and Green's Functions on Graphs}

\author{Florian Fischer}
\address{Florian Fischer, Institute of Mathematics, University of Potsdam, Karl-Liebknecht-Straße 24-25, 14476 Potsdam, Germany}
\email{florifis@uni-potsdam.de}

\usepackage[style=alphabetic,maxnames=5,minnames=2,abbreviate=false,date=long,url=false,isbn=false, doi=false,backend=biber, backref,backrefstyle=none]{biblatex}
\usepackage{csquotes}
\AtBeginBibliography{\tiny}
\addbibresource{literature.bib}

\usepackage[colorlinks=true,citecolor=blue,linkcolor=blue]{hyperref}

\newtheorem{theorem}{Theorem}[section]
\newtheorem{lemma}[theorem]{Lemma}
\newtheorem{proposition}[theorem]{Proposition}
\newtheorem{corollary}[theorem]{Corollary}

\theoremstyle{definition}

\newtheorem*{remark}{Remark} 
\newtheorem{definition}[theorem]{Definition}

\numberwithin{equation}{section}

\newcommand{\norm}[1]{\left\lVert #1 \right\rVert} 
\newcommand{\abs}[1]{\left\lvert #1\right\rvert} 
\newcommand{\set}[1]{\left\{ #1\right\} }
\newcommand{\ip}[2]{\left\langle #1, #2 \right\rangle}
\newcommand{\p}[1]{\left( #1 \right)^{\langle p-1 \rangle}}
\newcommand{\sse}{\subseteq}

\renewcommand{\epsilon}{\varepsilon}
\renewcommand{\phi}{\varphi}
\newcommand{\NN}{\mathbb{N}}

\newcommand{\RR}{\mathbb{R}}

\newcommand{\dd}{\mathrm{d}}
\DeclareMathOperator{\sgn}{sgn}
\DeclareMathOperator{\cc}{cap}
\DeclareMathOperator{\supp}{supp}
\DeclareMathOperator{\Div}{div}
\newcommand{\FF}{F}
\newcommand{\DD}{D}
\newcommand{\KK}{\tilde{K}}
\newcommand{\VV}{\mathcal{K}}
\newcommand{\MM}{\mathcal{M}}
\newcommand{\WW}{\tilde{W}}

\newcommand{\Hmm}[1]{\leavevmode{\marginpar{\tiny%
			$\hbox to 0mm{\hspace*{-0.5mm}$\leftarrow$\hss}%
			\vcenter{\vrule depth 0.1mm height 0.1mm width \the\marginparwidth}%
			\hbox to 0mm{\hss$\rightarrow$\hspace*{-0.5mm}}$\\\relax\raggedright #1}}}
			
\begin{document}

\begin{abstract}
We study energy functionals associated with quasi-linear Schrödinger operators on infinite graphs, and develop characterisations of (sub-)criticality via Green's functions, harmonic functions of minimal growth and capacities. We proof a quasi-linear version of the Agmon-Allegretto-Piepenbrink theorem, which says that the energy functional is non-negative if and only if there is a positive superharmonic function.  Furthermore, we show that a Green's function exists if and only if the energy functional is subcritical. Comparison principles and maximum principles are the main tools in the proofs. \\
	\\[4mm]
	\noindent  2020  \! {\em Mathematics  Subject  Classification.}
	Primary  \! 39A12; Secondary  31C20; 31C45; 35J62; 35R02.
	\\[4mm]
	\noindent {\em Keywords.} Agmon-Allegretto-Piepenbrink theorem, $p$-Green's function, $p$-Hardy inequality, Picone's inequality, comparison principle, Harnack principle, discrete quasi-linear Schr\"odinger operator, $p$-Laplace operator, $p$-energy functional, criticality theory, quasi-linear potential theory, weighted graphs.
\end{abstract}

\maketitle


\section{Introduction}
One of the most fundamental questions in mathematical physics is the one of the criticality of energy functionals associated with quasi-linear Schrödinger operators. Other names for critical functionals are sometimes degenerate, parabolic or recurrent functionals. Here, a functional is called critical if the Hardy inequality holds true.

In the last two decades, criticality theory of local quasi-linear energy functionals with not necessarily non-negative potential part was studied and many characterisations of criticality in this local but non-linear setting where shown, see \cite{DAD14, DP16, HPR21, PP, PR15, PT08, PT, PTT08}. 

In the non-local quasi-linear setting, \cite{FS08} is a seminal work. In the special case of weighted infinite graphs, quasi-linear criticality theory was studied in \cite{F:GSR}. There, a ground state representation allows the connection between criticality, null-sequences and capacities.

As a first main result in this paper, we show further characterisations of criticality in terms of Green's functions, global minimal positive harmonic functions and capacity, see Theorem~\ref{thm:GreensFunction} and Theorem~\ref{thm:critical}. The ideas of constructing Green's functions and capacities are the cores in every potential theory, confer \cite{Bjoern, Prado} for metric length spaces and combinatorial graphs. Furthermore, the concept of global minimal positive  harmonic functions was firstly studied in \cite{Ag82} for the linear and local case, for linear Schrödinger operators see \cite{Pinsky95}, and for local $p$-Schrödinger operators see \cite{HPR21, PP, PR15, PT07, PT08}. We show in particular, that if a global minimal positive  harmonic function exists, it is the Agmon ground state.

As a second main result, we prove an Agmon-Allegretto-Piepenbrink theorem, see Theorem~\ref{thm:AP}. It states that the energy functional is non-negative if and only if there is a positive superharmonic function. See \cite{Allegretto, Piepenbrink, S87} for a  linear version in the continuum, \cite{Do84,KePiPo1} for a  linear version in the discrete setting, \cite{PP} for a recent non-linear version in the continuum, and \cite{LenzStollmannVeselic} for a corresponding result on strongly local Dirichlet forms. 

For the proof of the Agmon-Allegretto-Piepenbrink theorem, we show the following basic results on finite subsets of the infinite graph: a local Harnack principle, a Picone-type inequality, an Anane-D\'{i}az-Sa\'{a}-type inequality, the existence of a principal eigenvalue, the existence and uniqueness of solutions to the Poisson-Dirichlet problem, characterisations of the maximum principle. While these results are known on finite graphs (see \cite{AmghibechPicone, CL11, HoS2, HoS, ParkChung, ParkKimChung}), some adaption is needed to deal with the possibly infinite boundaries, and so we included these methods for convenience. These tools are folklore in the linear case, and they are also well understood in the quasi-linear but local case (see here \cite{HPR21, PP, PR15, PT08, PT}).

Using the toolbox above together with comparison principles will then lead to the first main result, the characterisations of criticality (see Theorem~\ref{thm:GreensFunction} and Theorem~\ref{thm:critical}).


We close this paper by showing upper and lower bounds for the principal eigenvalue, the so-called Barta's inequality, see also \cite{AllegrettoHuangPicone, AmghibechPicone,Barta, Urakawa} for versions in different settings.

In contrast, the linear case is well understood and we comment it only briefly: On Riemannian manifolds, such a theory is classical for second-order linear elliptic operators with real coefficients, see e.g. \cite{Mur86, P07, Pinsky95}. On graphs associated with linear Schrödinger operators, see \cite{KePiPo1}, and for linear Laplace-type operators, \cite[Chapter~6]{KLW21} is a rich source, and confer also the references therein. In the setting of random walks, a critical energy functional associated with a graph is usually called a recurrent graph. Here, we refer to the monographs \cite{Soa1, Woe-Book, WoessMarkov}. This linear theory is also closely related to the theory of Jacobi matrices, see here \cite{FSW08}. For more general Dirichlet forms, we refer to \cite{Fuk}, and for Schrödinger forms to \cite{Tak14, Tak16, TU21}. 


\section{Preliminaries and Main Results}

\subsection{Graphs and Schrödinger Operators}
By a \emph{(weighted) graph} $(X,m,b)$ we understand a triple of a countable infinite set $X$, strictly positive function $m\colon X\to (0,\infty)$ and a symmetric function  $b\colon X\times X \to [0,\infty)$ with zero diagonal such that $ b $ is locally summable, i.e., the vertex degree satisfies \[ \deg(x):=\sum_{y\in X}b(x,y)<\infty, \qquad  x\in X.\] The elements of $X$ are called \emph{vertices}. The function $m$ represents weights on the vertices and the function $b$ decodes weights between vertices. Two vertices $x, y$ are called \emph{connected} with respect to the graph $(X,m,b)$ if $b(x,y)>0$, in terms $x\sim y$. A set $V\sse X$ is called \emph{connected} with respect to $(X,m,b)$, if for every two vertices $x,y\in V$ there are vertices ${x_0,\ldots ,x_n \in V}$, such that $x=x_0$, $y=x_n$ and $x_{i-1}\sim x_i$ for all $i\in\set{1,\ldots, n-1}$. For $V\sse X$ let $\partial V=\set{y\in X\setminus V : y\sim z\in V}$, that is the set of all vertices outside of $V$ which are connected to $V$.
Throughout this paper we will always assume that 
\begin{center}$(X,m,b)$ is a connected graph.\end{center}

A graph $(X,m,b)$ is called \emph{locally finite} if for all $x\in X$ 
\[\# \set{y\in X : b(x,y)> 0  }< \infty.\]

Let $S$ be some arbitrary set. Then, $f\colon S\to \RR$ is called \emph{non-negative}, \emph{positive}, or \emph{strictly positive} on $I\sse S$, if $f\geq 0$, $f\gneq 0$, $f>0$ on $I$, respectively. 

The space of real valued functions on $V\subseteq X$ is denoted by $C(V)$ and the space of functions with compact support in $V$ is denoted by $ C_c(V)$. We consider $C(V)$ to be a subspace of $C(X)$ by extending the functions of $C(V)$ by zero on $X\setminus V$.  

We define the support of $f\in C(X)$ via
\[\supp (f) :=\set{ x \in X  : f(x)\neq 0}. \]
Note that 
\[\supp (f) \sse V \iff f\in C(V).\]
 We also introduce the linear difference operator $\nabla$ for all functions $f$ via
\[\nabla_{x,y}f:=f(x)-f(y), \qquad x,y\in X.\]



Now, we turn to Schrödinger operators: Let $p\geq 1$. For $V\sse X$ let the \emph{formal space} $ \FF(V)=\FF_{b,p}(V) $ be given by
\begin{align*}
\FF(V)&= \{ f\in C(X): \sum_{y\in X} b(x,y)\abs{\nabla_{x,y}f}^{p-1} < \infty  \mbox{ for all } x\in V  \}\\
&= \{ f\in C(X): \sum_{y\in X} b(x,y)\abs{f(y)}^{p-1} < \infty  \mbox{ for all } x\in V  \}.
\end{align*}
The equality above follows easily from the $p$-triangle inequality
\[\abs{\alpha+\beta}^{p}\leq 2^{p}(\abs{\alpha}^{p}+\abs{\beta}^{p}), \qquad \alpha,\beta\in \RR, p\geq 0,\]
and a proof can be found in \cite[Lemma~2.1]{F:GSR}. We set $F=F(X)$.

Note that for all $U\sse V\sse X$, we have $\FF(V)\sse \FF(U)$, and $C(U)\sse C(V)$. Moreover, for finite $K\sse X$, we have $C_c(K)=C(K)\sse \FF(K)$.

For $1< p<2$ we make the convention that $\infty\cdot 0 =0$. Then, for all $p\geq 1$ and $a\in \RR$, we set
\[ \p{a}:= |a|^{p-1} \sgn (a)=|a|^{p-2} a. \]
Here, $\sgn\colon \RR\to \set{-1,0,1}$ is the sign function, that is $\sgn(\alpha)=1$ for all $\alpha> 0$, $\sgn(\alpha)=-1$ for all $\alpha< 0$, and $\sgn(0)=0$.

Let $p\geq 1$ and $c\in C(V)$, $V\sse X$, then the quasi-linear \emph{(formal) ($p$-)Schrödinger operator} $H=H_{b,c,p,m}\colon \FF(V)\to C(V)$ is given by
\[Hf(x):=Lf(x)+\frac{c(x)}{m(x)}\p{f(x)}, \qquad x\in V,\]
where $L=L_{b,p, m} \colon \FF(V)\to C(V)$ is defined via
\[ Lf(x):=\frac{1}{m(x)}\, \sum_{y\in X} b(x,y)\p{\nabla_{x,y}f}, \qquad x\in V,\]
and is called the \emph{(formal) ($p$-)Laplacian} and the function $c$ is usually called \emph{potential} of $H$. If $c\geq 0$, then Schrödinger operators sometimes also go under the name Laplace-type operators.

Writing for all absolutely summable functions in the second variable $A$ on $X\times X$,
\[(\Div A)(x):=\frac{1}{m(x)}\sum_{y\in X}A(x,y), \qquad x\in V,\]
we immediately see the following connection to weighted $p$-Laplace-Beltrami operators in the continuum via
\[Lf(x)=\Div(b\p{\nabla f})(x), \qquad x\in V, f\in \FF(V).\]

A function $u\in \FF(V)$ is said to be a \emph{($p$-)solution, (($p$-)supersolution, ($p$-)sub-solution)} on $V\sse X$ with respect to $H$ and $g\in C(V)$ if \[Hu=g \quad(Hu\ge g,\, Hu\leq g)\qquad\text{ on }V.\] 
If $g=0$ we speak of \emph{($p$-)harmonic, (($p$-)superharmonic, ($p$-)subharmonic)} functions on $V$. If a function is superharmonic but not harmonic in $V$, we call it \emph{strictly superharmonic} in $V$. If $V=X$ we only speak of super-/sub-/harmonic functions, respectively super-/sub-/solutions with respect to $g$.

A function $u\in \FF(V)$ is said to be a \emph{($p$-)eigenfunction} to the \emph{($p$-)eigenvalue} $\lambda\in \RR$ on $V$ with respect to $H$ if \[Hu=\lambda \p{u}\qquad\text{ on }V.\]
If $u>0$ is an eigenfunction to $\lambda\in \RR$ on $V$, then $\lambda$ is called \emph{principal eigenvalue} on $V$ with respect to $H$.

\subsection{Energy Functionals Associated with Graphs}
Let $p\geq 1$ and $c\in C(X)$ arbitrary. Let $\DD=\DD_{b,c,p}$ be given by
\begin{align*}
\DD=\bigl\{f\in C(X): \sum_{x,y\in X} b(x,y)\abs{\nabla_{x,y}f}^p+ \sum_{x\in X}\abs{c(x)}\abs{f(x)}^p<\infty \bigr\}.
\end{align*}
Then, the \emph{(p-)energy functional} $h=h_{b,c,p}\colon \DD\to \RR$ is defined via 
\[ h(f)=\frac{1}{2}\sum_{x,y\in X} b(x,y)\abs{\nabla_{x,y}f}^p+ \sum_{x\in X}c(x)\abs{f(x)}^p.\]

If $p=2$ such a functional is a quadratic form, and then sometimes called Schrödinger form.

The connection between $H$ and $h$ on $C_c(X)$ can be described via a Green's formula which is stated next, and for a proof we refer to \cite{F:GSR}.

Let $V\sse X$. To shorten notation, we define a weighted bracket $\ip{\cdot}{\cdot}_{V}$ on $C(X)\times C_c(X)$ via
\[\ip{f}{\phi}_{V}=\sum_{x\in V}f(x)\phi(x)m(x),\qquad f\in C(X), \phi\in C_c(X).\]
 
\begin{lemma}[Green's formula, Lemma~2.3 in \cite{F:GSR}]\label{lem:GreensFormula}
	Let $p\geq 1$ and $V\sse X$. Let $f\in \FF(V)$ and $\phi\in C_c(X)$. Then all of the following sums converge absolutely and
	\begin{align*}
		\ip{Hf}{\phi}_{V}
		&=\frac{1}{2}\sum_{x,y\in V} b(x,y)\p{\nabla_{x,y}f}( \nabla_{x,y}\phi)+ \sum_{x\in V}c(x)\p{f(x)}\phi(x)\\
		&\qquad +\sum_{x\in V, y\in \partial V}b(x,y)\p{\nabla_{x,y}f}\phi(x). 
	\end{align*}
	In particular, the formula can be applied to $f\in C_c(X)$, or $f\in \DD$, and
	\[h(\phi)=\ip{H\phi}{\phi}_{V}, \qquad \phi\in C_c(V).\]
\end{lemma}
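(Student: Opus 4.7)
The plan is to expand $\ip{Hf}{\phi}_V$ directly from the definition of $H$, split the inner sum over $y\in X$ using the partition of neighbours of $V$ into interior ($y\in V$) and boundary ($y\in \partial V$), and then symmetrise the resulting $V\times V$ double sum via the symmetry of $b$ together with the oddness relation $\p{-a}=-\p{a}$. Writing $\ip{Hf}{\phi}_V=\sum_{x\in V} m(x)Hf(x)\phi(x)$ and inserting the definition of $H$, the potential part contributes $\sum_{x\in V} c(x)\p{f(x)}\phi(x)$ directly, while the Laplacian part produces
\[
\sum_{x\in V}\sum_{y\in X} b(x,y)\p{\nabla_{x,y}f}\phi(x).
\]
Since $b(x,y)=0$ whenever $x\in V$ and $y\in X\setminus(V\cup\partial V)$ (by the very definition of $\partial V$), the inner sum splits into pieces over $y\in V$ and $y\in\partial V$, the latter being the desired boundary term verbatim. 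For the $V\times V$ piece, relabelling $x\leftrightarrow y$ and using $b(y,x)=b(x,y)$ together with $\p{\nabla_{y,x}f}=-\p{\nabla_{x,y}f}$ rewrites the same sum as its negative with $\phi(x)$ replaced by $\phi(y)$, and averaging gives
\[
\sum_{x,y\in V} b(x,y)\p{\nabla_{x,y}f}\phi(x)=\tfrac12\sum_{x,y\in V} b(x,y)\p{\nabla_{x,y}f}(\nabla_{x,y}\phi).
\]

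The main technical obstacle is justifying absolute convergence, because all of the above rearrangements rest on a Fubini-type interchange. The key observation is that $\phi\in C_c(X)$ restricts every outer $x$-sum to the finite set $V\cap\supp(\phi)$; for each such $x$, the hypothesis $f\in\FF(V)$ provides $\sum_{y\in X} b(x,y)\abs{\nabla_{x,y}f}^{p-1}<\infty$, which controls the boundary double sum and the iterated form of the bulk sum. The delicate point is the symmetric form $\sum_{x,y\in V} b(x,y)\abs{\p{\nabla_{x,y}f}}\abs{\nabla_{x,y}\phi}$, where the $V\times V$ sum is genuinely two-dimensional. Here $\abs{\nabla_{x,y}\phi}$ vanishes unless $x$ or $y$ lies in $\supp(\phi)$, so the sum decomposes into two finite families of convergent strips, each bounded by $2\norm{\phi}_\infty$ times an $\FF(V)$ sum at a support point.

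For the supplementary claims, the inclusions $C_c(X)\sse\FF(V)$ and $\DD\sse\FF(V)$ are routine: the first from $\sum_y b(x,y)\abs{\nabla_{x,y}f}^{p-1}\leq (2\norm{f}_\infty)^{p-1}\deg(x)$, the second from Hölder's inequality with exponents $p$ and $p/(p-1)$,
\[
\sum_y b(x,y)\abs{\nabla_{x,y}f}^{p-1}\leq \deg(x)^{1/p}\Bigl(\sum_y b(x,y)\abs{\nabla_{x,y}f}^{p}\Bigr)^{(p-1)/p}.
\]
To derive $h(\phi)=\ip{H\phi}{\phi}_V$ for $\phi\in C_c(V)$, I would apply Green's formula with $f=\phi$: the identities $\p{a}\cdot a=\abs{a}^p$ collapse the bulk and potential terms into $\abs{\nabla_{x,y}\phi}^p$ and $c(x)\abs{\phi(x)}^p$, and since $\phi\equiv 0$ off $V$, each boundary summand $b(x,y)\p{\nabla_{x,y}\phi}\phi(x)$ with $y\in\partial V$ reduces to $b(x,y)\abs{\phi(x)}^p$. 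This is precisely what is needed to promote $\tfrac12\sum_{x,y\in V}$ into $\tfrac12\sum_{x,y\in X}$, since edges lying entirely in $X\setminus V$ contribute zero and the two off-diagonal pieces $V\times\partial V$ and $\partial V\times V$ are equal by symmetry of $b$.
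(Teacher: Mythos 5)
Your proof is correct. The paper itself does not reproduce a proof of this lemma but cites [F:GSR], and your argument — expanding $\ip{Hf}{\phi}_V$, splitting the $y$-sum over $V$ and $\partial V$ using the definition of $\partial V$, symmetrising the $V\times V$ block via $b(x,y)=b(y,x)$ and $\p{\nabla_{y,x}f}=-\p{\nabla_{x,y}f}$, and reducing absolute convergence to the finiteness of $\supp(\phi)$ together with the $\FF(V)$-summability at each of those finitely many support points — is exactly the standard route; the inclusions $C_c(X)\sse\FF(V)$ and $\DD\sse\FF(V)$ and the reduction of $h(\phi)=\ip{H\phi}{\phi}_V$ by collecting the two off-diagonal $V\times\partial V$ pieces are also handled correctly (only for $p=1$ does the Hölder step degenerate, but there $\FF(V)=C(X)$ trivially).
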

Moreover, another nice connection between $h$ and $H$ is that $p\cdot H$ is the G\^{a}teaux derivative of $h$ on $C_c(X)$, and since we will use this observation later several times, we include the proof here.

\begin{lemma}[G\^{a}teaux derivative]\label{lem:GateauxDerivative}
Let $p\geq 1$ and $V\sse X$. For all $\phi \in \DD$ and $\psi\in C_c(V)$ we have $\phi +\psi \in \DD$ and
	\begin{align*}
		\frac{\dd}{\dd\, t}\Bigl.  h(\phi+t\, \psi) \Bigr|_{t=0}=p\, \ip{H\phi}{\psi}_{V}.
	\end{align*}
\end{lemma}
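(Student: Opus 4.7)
The plan is to verify membership in $\DD$ first, then differentiate the edge and potential sums separately, and finally recognize the derivative through Green's formula.

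For $\phi+\psi\in\DD$, I would apply the $p$-triangle inequality twice: once to $|\nabla_{x,y}(\phi+\psi)|^p$ so that everything splits into a $\phi$-part (finite since $\phi\in\DD$) and a $\psi$-part, and once to reduce $\sum_{x,y}b(x,y)|\nabla_{x,y}\psi|^p$ to a finite sum using that $K:=\supp(\psi)\sse V$ is finite and $b$ is locally summable ($\deg(x)<\infty$ on $K$). The potential contribution on the $\psi$-side is trivially finite because $K$ is finite. The same argument shows $\sum_{x,y}b(x,y)|\nabla_{x,y}\psi|^p<\infty$, a fact that I will need again below.

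Next I would split the difference quotient $(h(\phi+t\psi)-h(\phi))/t$ into its edge and potential parts. The potential part reduces instantly to a finite sum over $K$, and direct termwise differentiation gives $p\sum_{x\in X}c(x)\p{\phi(x)}\psi(x)$. For the edge part, the summands vanish whenever $\psi(x)=\psi(y)=0$, so by symmetry the sum reduces to
\[
\sum_{x\in K}\sum_{y\in X}b(x,y)\tfrac{|\nabla_{x,y}(\phi+t\psi)|^p-|\nabla_{x,y}\phi|^p}{t}\;-\;\tfrac{1}{2}\sum_{x,y\in K}b(x,y)\tfrac{|\nabla_{x,y}(\phi+t\psi)|^p-|\nabla_{x,y}\phi|^p}{t}.
\]
The outer $K$-sums are finite; for the inner sum over $y\in X$ I would apply dominated convergence. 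The mean value theorem bounds the difference quotient by $p(|\nabla_{x,y}\phi|+|\nabla_{x,y}\psi|)^{p-1}|\nabla_{x,y}\psi|$ for $|t|\le 1$, and a Young inequality together with the $p$-triangle inequality dominates this by a constant multiple of $b(x,y)|\nabla_{x,y}\phi|^p+b(x,y)|\nabla_{x,y}\psi|^p$, both of which are summable in $y$ for fixed $x$ thanks to $\phi\in\DD$ and the paragraph above. Passing to the limit inside yields $\tfrac{p}{2}\sum_{x,y\in X}b(x,y)\p{\nabla_{x,y}\phi}(\nabla_{x,y}\psi)$.

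Combining the two limits and invoking Green's formula (Lemma~\ref{lem:GreensFormula}) applied with $V$ replaced by $X$ (so $\partial X=\emptyset$) identifies the result with $p\ip{H\phi}{\psi}_{X}$; but since $\psi$ vanishes outside $V$ we have $\ip{H\phi}{\psi}_{X}=\ip{H\phi}{\psi}_{V}$, which is exactly the claim. The main obstacle is the interchange of limit and sum in the edge term in the non-locally-finite case, where even for $x\in K$ one cannot reduce to a finite $y$-sum; controlling the difference quotient uniformly in $t$ by something $b(x,\cdot)$-summable is the technical heart of the argument, and it is precisely here that both $\phi\in\DD$ and the previously established $\sum_y b(x,y)|\nabla_{x,y}\psi|^p<\infty$ must be used.
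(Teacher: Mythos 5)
Your proposal is correct and follows the same route as the paper's proof: establish $\phi+\psi\in\DD$ via the $p$-triangle inequality, differentiate termwise, and identify the result through Green's formula. The paper's version is considerably terser — it asserts the interchange of limit and sum without the explicit mean-value-theorem/Young-inequality domination you supply — so your write-up fills in exactly the technical justification the paper leaves implicit.
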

\begin{proof} Let $\phi \in \DD$ and $\psi\in C_c(V)$. By the $p$-triangle inequality, we get $\phi +\psi \in \DD$. The formula follows now easily via Green's formula, Lemma~\ref{lem:GreensFormula},
\begin{align*}
		\frac{\dd}{\dd\, t}\Bigl.  h(\phi+t\, \psi) \Bigr|_{t=0} 
		&=\frac{p}{2}\sum_{x,y\in X}b(x,y)\p{\nabla_{x,y}\phi}\bigl(\nabla_{x,y}\psi\bigr) 
		+ p\ip{c/m}{\p{\phi}\psi}_{X}\\
		&=p \sum_{x\in X}H\phi(x)\psi(x)m(x)=p\ip{H\phi}{\psi}_{V}.\qedhere
	\end{align*}
\end{proof}
Because of Lemma~\ref{lem:GreensFormula} or Lemma~\ref{lem:GateauxDerivative}, it is convenient to define \[h(\phi, \psi):= \ip{H\phi}{\psi}_{X},\qquad \phi, \psi\in C_{c}(X),\] see e.g. \cite{StrichartzWong} for a discussion for the $p$-Laplacian and associated energy on the Sierpinski gasket. However, we will not explicitly need off-diagonal entries of the energy, and thus, stay in the following with the one-entry definition.


Assume that the functional $h$ is non-negative on $C_c(V)$, $V\sse X$. Then, $h$ is called \emph{subcritical} in $V$ if the Hardy inequality holds true, that is, there exists a positive function $w$ such that \[ h \geq \norm{\cdot}_{p,w}^{p} \qquad \text{on } C_c(V).\]
Here, \[\norm{\phi}_{p,w}:=\biggl(\sum_{x\in X}\abs{\phi(x)}^pw(x)\biggr)^{1/p}, \qquad \phi\in C_c(X).\]

	If such a $w$ does not exist, then $h$ is called \emph{critical} on $V$. Moreover, $h$ is called \emph{supercritical} on $V$ if $h$ is not non-negative on $C_c(V)$.


\subsection{Main Results}
We have three main results. The first one is a characterisation of a non-negative energy functional, and called Agmon-Allegretto-Piepenbrink-type theorem. The second result is a characterisation of a subcritical energy functional, and the third theorem is a characterisation of a critical energy functional.

Agmon-Allegretto-Piepenbrink-type theorems usually state that the non-negati-vity of the energy functional is equivalent to the existence of a strictly positive superharmonic function with respect to the corresponding Schrödinger operator.

Since in \cite{Allegretto, Piepenbrink} such  results were proven first, many versions and applications of this theorem have been established. We note \cite[Theorem~4.3]{PP} for a recent generalisation in the continuum, \cite{LenzStollmannVeselic} for a corresponding result on strongly local Dirichlet forms and \cite[Theorem~4.2]{KePiPo1} for a corresponding version for linear $(p=2)$-Schrödinger operators on graphs. We generalise the result in \cite{KePiPo1} to $p\in (1,\infty)$ and to subsets of $X$.
\begin{theorem}[Agmon-Allegretto-Piepenbrink-type Theorem]\label{thm:AP}	
	Let $p>1$ and $V\sse X$. Then the following assertions are equivalent:
	\begin{enumerate}[label=(\roman*)]
		\item\label{thm:AP1} $h$ is non-negative on $C_c(V)$;
		\item\label{thm:AP2} there exists a function which is strictly positive in $V$, vanishes in $X\setminus V$, and is superharmonic in $V$;
		\item\label{thm:AP2a} there exists a function which is strictly positive in $V$ and is superharmonic in $V$.
	\end{enumerate}
	Moreover, if the graph locally finite on the infinite set $V$, then the above is also equivalent to following assertions:
	\begin{enumerate}[label=(\roman*), start=4]
	\item\label{thm:AP3} there exists a function which is strictly positive in $V$, vanishes in $X\setminus V$, and is harmonic in $V$;
	\item\label{thm:AP3a} there exists a function which is strictly positive in $V$ and is harmonic in $V$.
	\end{enumerate}
\end{theorem}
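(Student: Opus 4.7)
I would establish the first three equivalences via the cycle (ii) $\Rightarrow$ (iii) $\Rightarrow$ (i) $\Rightarrow$ (ii). The first arrow is tautological. For (iii) $\Rightarrow$ (i), let $u\in\FF(V)$ be strictly positive and superharmonic on $V$. For $\phi\in C_c(V)$, the function $\varphi=\abs{\phi}^p/\p{u}$ lies in $C_c(V)$, and the quasi-linear Picone inequality (one of the preparatory tools announced in the Introduction) applied to $(\abs{\phi},u)$ yields, after summation, an edgewise bound which together with Green's formula (Lemma~\ref{lem:GreensFormula}) gives
\[
h(\phi)\;\ge\;\ip{Hu}{\varphi}_V\;\ge\;0,
\]
using $Hu\ge 0$ and $\varphi\ge 0$ on $V$. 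Care is needed with the boundary contribution in Green's formula, which has to be matched by the corresponding edge terms in the Picone sum.

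For (i) $\Rightarrow$ (ii), I would fix a base point $x_0\in V$ and an exhaustion $(V_n)$ of $V$ by finite connected subsets containing $x_0$. On each finite $V_n$, the preparatory results provide a strictly positive principal eigenfunction $u_n\in C_c(V_n)$ with eigenvalue $\lambda_n$. Since $h\ge 0$ on $C_c(V_n)$ and $\lambda_n$ arises as the infimum of the associated Rayleigh quotient, one has $\lambda_n\ge 0$; by domain monotonicity $\lambda_n\searrow\lambda\ge 0$. After normalising $u_n(x_0)=1$, the local Harnack principle yields uniform two-sided bounds for $u_n$ on every fixed finite $K\sse V$, so a diagonal extraction produces a subsequential pointwise limit $u\in C(X)$ that vanishes off $V$ and is strictly positive on $V$. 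Passing to the limit in $Hu_n=\lambda_n\p{u_n}\ge 0$ (termwise when the graph is locally finite on $V$, and via Fatou on positive and negative parts otherwise) yields $Hu\ge 0$ on $V$, i.e. (ii).

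For (iv), (v) under local finiteness on the infinite set $V$, the implications (iv) $\Rightarrow$ (ii) and (v) $\Rightarrow$ (iii) are immediate. To get (ii) $\Rightarrow$ (iv), I would take the $u$ from (ii) and solve on each $V_n$ the Poisson-Dirichlet problem $Hv_n=0$ in $V_n$ with boundary data $v_n=u$ on $X\setminus V_n$; existence, uniqueness and strict positivity of $v_n$ follow from the preparatory Poisson-Dirichlet solvability and the (strong) maximum principle. The comparison principle gives $v_{n+1}\le v_n\le u$ on $V_n$, so $v_n$ decreases pointwise to a limit $v\ge 0$ that vanishes off $V$. Local finiteness allows termwise passage to the limit in $Hv_n(x)=0$ for each fixed $x\in V$, and a Harnack chain along the exhaustion keeps $v>0$ on $V$. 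The main obstacle I expect is the limiting step in (i) $\Rightarrow$ (ii) when the graph fails to be locally finite on $V$: the sum defining $Hu_n(x)$ mixes signs, so pointwise convergence of $u_n$ is not enough. I would handle it by splitting each summand into positive and negative parts, applying Fatou separately, and using the Harnack-based uniform bounds on $u_n$ together with the $p$-triangle inequality to produce an integrable majorant for the positive-part contribution; a variant of precisely this obstacle is what forces the local finiteness hypothesis in the harmonic equivalences (iv), (v).
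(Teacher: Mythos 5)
Your treatment of the cycle (i) $\Rightarrow$ (ii) $\Rightarrow$ (iii) $\Rightarrow$ (i) is a valid alternative to the paper's. For (i) $\Rightarrow$ (ii) you use the principal eigenfunctions $u_n$ of $H$ on the finite exhaustion sets directly, observing $\lambda_n\ge 0$ and passing to a Harnack-normalised pointwise limit; the paper instead perturbs the potential by $m/n$, solves Poisson--Dirichlet problems $H_nu_n=g_n$ with positive right-hand side to obtain functions that are superharmonic with respect to the fixed operator $H_{n_0}$ for all $n\ge n_0$, and then lets $n_0\to\infty$. Both routes lead to a strictly positive superharmonic function, and your limit passage via Fatou on the sign-split sum is essentially the content of Lemma~\ref{lem:compactBC}, which you would need to make rigorous but is available.

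The step (ii) $\Rightarrow$ (iv) is where there is a genuine gap. You propose to take the superharmonic $u$ from (ii), solve $Hv_n=0$ in $V_n$ with $v_n=u$ on $X\setminus V_n$, and use that the comparison principle makes $(v_n)$ decreasing with $0\le v_n\le u$. The problem is that the decreasing limit $v$ can be identically zero, and nothing in your argument rules this out; a ``Harnack chain'' only propagates positivity if $v$ is already nonzero somewhere. Concretely, take $X=\ZZ$ with $b(k,k\pm1)=1$, $m\equiv1$, $p=2$, $c=1_{\{0\}}$, $V=X$, $V_n=\{-n,\dots,n\}$, and $u\equiv1$ (which is superharmonic since $Hu=1_{\{0\}}\ge0$). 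Solving $Hv_n=0$ on $V_n$ with $v_n=1$ on $\{-(n+1),n+1\}$ forces, by symmetry and linearity on each half-line, $v_n(x)=(2+|x|)/(n+3)$, so $v_n\to0$ pointwise even though a positive harmonic function (e.g.\ $w(x)=|x|+2$) certainly exists. The paper avoids this by not prescribing $u$ as boundary data: it keeps the normalisation $u_n(o)=1$ built into the approximating sequence (solving $H_nu_n=C_n1_{x_n}$ with $x_n$ escaping to infinity and $C_n$ chosen to enforce $u_n(o)=1$), so the limit automatically takes the value $1$ at $o$, and local finiteness together with Proposition~\ref{prop:harnack} gives $Hu=\lim Hu_n=\lim\bigl(C_n1_{x_n}-u_n^{p-1}/n\bigr)=0$. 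Relatedly, your closing remark slightly misdiagnoses the role of local finiteness: for (iv),(v) it is not merely a technical obstacle to interchanging sum and limit, but a genuine hypothesis — on the star graph the implication (i) $\Rightarrow$ (v) actually fails, as the paper's remark shows.
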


\begin{remark}
Clearly, if the graph is finite, that is, $X$ is finite, then the non-negativity of the energy does not imply the existence of a positive harmonic function, see \cite[Corollary~0.56]{KLW21} for $p=2$ and confer also Proposition~\ref{prop:PP}. It is natural to ask whether the implication \ref{thm:AP1}~$\implies$~\ref{thm:AP3a} in Theorem~\ref{thm:AP} does also hold for infinite graphs which are not locally finite. The following example shows that this is in general not the case. This example is a slightly modified generalisation of the example for $p=2$ in \cite[p. 185]{HK} to all $p>1$.

Consider the so-called star graph: Let $(X,b,m)$ be a graph on $X=\NN_{0}$ such that for all $n,k\in \NN_{0}$ we have $b(n,k)>0$ if and only if either $n=0$ or $k=0$. Moreover, set $m=1$ and $c=1$. Thus, $h$ is non-negative. Assume that $u$ is a non-negative harmonic function. Then,
for all $n\in\NN$ we have
\[Lu(n)=b(n,0)\abs{\nabla_{n,0}u}^{p-2}\nabla_{n,0}u=-\abs{u(n)}^{p-1},\]
as well as
 \[Lu(0)=\sum_{n=1}^{\infty}b(0,n)\abs{\nabla_{0,n}u}^{p-2}\nabla_{0,n}u=-\abs{u(0)}^{p-1}.\]
 Hence, combining the two equalities we end up with
 \[\abs{u(0)}^{p-1}=-\sum_{n=1}^{\infty}\abs{u(n)}^{p-1},\]
 which implies that $u=0$.
 
 It would be very interesting to have a characterisation of graphs for which the implication \ref{thm:AP1}~$\implies$~\ref{thm:AP3a} is true. We know from \cite{F:GSR}, that having a critical energy functional in $X$ implies the existence of a unique positive harmonic function in $X$. So, further investigations are needed on non-locally finite graphs associated with subcritical energy functionals without star-like subgraphs.
\end{remark}

The second main result deals with the (non-)existence of a particular superharmonic function. This function is defined next. On Euclidean spaces, the following notion was introduced in the linear case in \cite{Ag82}, and was then extended to weighted $p$-Laplace-type and weighted $p$-Schrödinger equations in \cite{PP, PR15, PT07, PT08}. This notion is new on graphs.

\begin{definition}\label{def:minimal}
	Let $V\sse X$ be connected and  $K\sse V$ be finite. A function $u$ which is harmonic on $V\setminus K$ and positive on $V\cup \partial V$ is called \emph{positive harmonic function of minimal growth in a neighbourhood of infinity} in $V$, and is denoted by $u\in \MM(V\setminus K)$, if for any  finite and connected subset $\KK \sse V$ with $K\sse \KK$, and any positive function $v\in \FF(V\setminus \KK)$ which is superharmonic in $V\setminus \KK$, we have
	\[u\leq v \text{ on } \KK \quad\text{ implies }\quad  u\leq v \text{ in } V\setminus \KK.\]
	If $u\in \MM(V)$, then $u$ is called a \emph{global minimal positive harmonic function} in $V$. 
	
	If $u\in \MM(V\setminus \set{o})\cap F(V)$ for some $o\in V$ and $u$ is not harmonic in $o$, then $u$ is called a \emph{(global minimal positive) Green's function} in $V$ at $o$. 
	If, moreover, $Hu=1_o$ on $V$, then the Green's function $u$ at $o$ is called \emph{normalised}.
\end{definition}

\begin{theorem}[Existence of Green's Functions]\label{thm:GreensFunction}
	Let $p> 1$ and $V\sse X$ be connected. Let $h$ be non-negative on $C_c(V)$. Then, $h$ is subcritical in $V$ if and only if there exists a normalised Green's function $G_o$ in $V$ at all $o\in V$.
\end{theorem}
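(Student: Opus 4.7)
The plan is to prove both implications using the toolbox of maximum, comparison and Harnack principles on finite subsets announced in the introduction, combined with the null-sequence characterisation of criticality from \cite{F:GSR}.

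For $(\Rightarrow)$, suppose $h$ is subcritical on $V$ with Hardy weight $w$. I would fix $o\in V$ and an exhaustion $V_{1}\sse V_{2}\sse \ldots$ of $V$ by finite connected subsets with $o\in V_{1}$. The Poisson-Dirichlet problem delivers $u_{n}\in C(V_{n})$ with $Hu_{n}=1_{o}$ on $V_{n}$ and $u_{n}=0$ on $X\setminus V_{n}$; the maximum-principle characterisations (applicable because $h$ is non-negative on $C_{c}(V_{n})$) give uniqueness together with positivity $u_{n}>0$ on $V_{n}$, while the comparison principle yields the pointwise monotonicity $u_{n}\leq u_{n+1}$. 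The decisive step will be to bound $u_{n}(o)$ uniformly in $n$: using Lemma~\ref{lem:GreensFormula} together with $Hu_{n}=1_{o}$ on $V_{n}$,
\[
u_{n}(o)\,m(o)\;=\;\ip{Hu_{n}}{u_{n}}_{V_{n}}\;=\;h(u_{n})\;\geq\;\sum_{x\in V}w(x)|u_{n}(x)|^{p}\;\geq\;w(o)\,u_{n}(o)^{p},
\]
so $u_{n}(o)^{p-1}\leq m(o)/w(o)$. The local Harnack principle then propagates this into a uniform local bound on $V$, so $G_{o}:=\lim_{n}u_{n}$ is pointwise finite on $V$, and passing $Hu_{n}=1_{o}$ to the limit gives $HG_{o}=1_{o}$ on $V$. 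The minimal-growth property $G_{o}\in\MM(V\setminus\set{o})$ follows by applying the comparison principle between $u_{n}$ and any admissible positive superharmonic competitor on $V_{n}\setminus\VV$ and then letting $n\to\infty$.

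For $(\Leftarrow)$, suppose a normalised Green's function $G_{o}$ exists at some $o\in V$. Then $G_{o}>0$ on $V$ and $HG_{o}=1_{o}\gneq 0$, so $G_{o}$ is positive superharmonic and strictly superharmonic at $o$. The Picone-type inequality from the toolbox then gives, for every $\phi\in C_{c}(V)$,
\[
h(\phi)\;\geq\;\sum_{x\in V}\frac{HG_{o}(x)}{G_{o}(x)^{p-1}}\,|\phi(x)|^{p}\,m(x)\;=\;\frac{m(o)}{G_{o}(o)^{p-1}}\,|\phi(o)|^{p}.
\]
If $h$ were critical in $V$, the null-sequence characterisation of criticality from \cite{F:GSR} would supply $(\phi_{n})\sse C_{c}(V)$ with $h(\phi_{n})\to 0$ and $\phi_{n}\to u$ pointwise for some strictly positive $u$. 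The displayed inequality then forces $|\phi_{n}(o)|^{p}\to 0$, contradicting $\phi_{n}(o)\to u(o)>0$. Hence $h$ must be subcritical, and the previous paragraph furnishes a normalised Green's function at every $o\in V$.

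The principal obstacle will be the limit transition $Hu_{n}=1_{o}\to HG_{o}=1_{o}$ when $(X,b,m)$ is not locally finite, which requires both $G_{o}\in\FF(V)$ and the exchange of limit and sum in $\sum_{y}b(x,y)\p{\nabla_{x,y}u_{n}}$. The uniform Harnack-based pointwise bound on $u_{n}$ together with monotone/dominated convergence should handle this, but some care will be needed because the summands $\p{\nabla_{x,y}u_{n}}$ are signed, so one must separately control the positive and negative parts via the comparison $u_{n}\leq G_{o}$.
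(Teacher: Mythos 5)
Your proposal is correct and follows essentially the same strategy as the paper: for ($\Rightarrow$), construct approximate Green's functions on an exhaustion via the Poisson--Dirichlet problem, bound them using the Hardy weight, and pass to the limit via Harnack and the weak comparison principle; for ($\Leftarrow$), apply the Picone-type inequality to the Green's function. The only cosmetic differences are that the paper normalises $u_n(o)=1$ first (identifying $Hu_n=C_n\,1_o$ with $C_n=\cc_h(o,K_n)/m(o)$) and rescales at the end, whereas you solve $Hu_n=1_o$ directly and bound $u_n(o)$ via the Hardy inequality, and that in the reverse direction the paper invokes the second part of Lemma~\ref{lem:groundstate} (criticality forces superharmonic $\Rightarrow$ harmonic) while you explicitly write out the Hardy weight from Picone, with your null-sequence contradiction being a harmless redundancy on top of that.
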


As a consequence of Theorem~\ref{thm:GreensFunction} we also get the following statement, which continues a characterisation given in \cite{F:GSR}. Note that by Corollary~\ref{cor:AP} (that is \cite[Proposition~5.5]{F:GSR} together with Theorem~\ref{thm:AP}), any non-negative energy functional in $V\subsetneq X$ is necessarily subcritical in $V$.

We define the \emph{variational $p$-capacity} for $V\sse X$ and $x\in V$ via \[\cc_{h} (x,V):=\inf_{\phi\in C_c(V), \phi(x)=1}h(\phi).\]

\begin{theorem}[Characterisations of Criticality]\label{thm:critical}
	Let $p>1$ and $h$ be non-negative on $C_c(X)$. Then the following statements are equivalent:
	\begin{enumerate}[label=(\roman*)]
		\item\label{thm:critical1} $h$ is critical in $X$.
		\item\label{thm:critical9} There does not exist a normalised Green's function for any $x\in X$.
	\item\label{thm:critical11} There exists a global minimal positive harmonic function in $X$. 
	\item\label{thm:critical3} $\cc_{h}(x,X)=0$ for some $x\in X$.
	\end{enumerate}
\end{theorem}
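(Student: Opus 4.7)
The approach is to chain together the four assertions: the equivalence \ref{thm:critical1}~$\Leftrightarrow$~\ref{thm:critical9} will follow essentially from Theorem~\ref{thm:GreensFunction}, the equivalence \ref{thm:critical1}~$\Leftrightarrow$~\ref{thm:critical3} is a quick consequence of the null-sequence/ground state representation machinery of \cite{F:GSR}, and the characterisation \ref{thm:critical1}~$\Leftrightarrow$~\ref{thm:critical11} via global minimal positive harmonic functions is handled by a dedicated comparison argument.

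First I would establish \ref{thm:critical1}~$\Leftrightarrow$~\ref{thm:critical3}. Since $h$ is non-negative on $C_c(X)$, criticality is the non-existence of a strictly positive Hardy weight $w$. By the ground state representation in \cite{F:GSR}, criticality in $X$ is equivalent to the existence of a null sequence $(\phi_n)\subset C_c(X)$ with $\phi_n\to u$ pointwise (where $u$ is the Agmon ground state) and $h(\phi_n)\to 0$. Normalising $\phi_n$ at a fixed vertex $x$ so that $\phi_n(x)=1$ (permissible once $u(x)>0$, which we may arrange) gives $\cc_h(x,X)=0$. Conversely, sequences nearly attaining the capacity at $x$ can, after a diagonal extraction using the local Harnack principle proved earlier in the paper, be upgraded to a null sequence, and thereby yield criticality.

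Next, for \ref{thm:critical1}~$\Leftrightarrow$~\ref{thm:critical9} I would apply Theorem~\ref{thm:GreensFunction}: subcriticality in $X$ is equivalent to the existence of a normalised Green's function at every $o\in X$. The delicate point is passing from ``there is some $o$ without a Green's function'' to ``there is no $o$ with a Green's function''. For this I would show as a separate lemma that the existence of a single normalised Green's function $G_o$ forces subcriticality; indeed $G_o$ is positive on $X$ and strictly superharmonic at $o$ (while harmonic elsewhere), and feeding this into a Picone/ground state transform argument of the type used in the proof of Theorem~\ref{thm:GreensFunction} produces a positive weight $w$ witnessing the Hardy inequality. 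Combining with Theorem~\ref{thm:GreensFunction} then closes the equivalence.

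For \ref{thm:critical1}~$\Rightarrow$~\ref{thm:critical11} I would take the Agmon ground state $u$ supplied by the null sequence machinery and verify $u\in \MM(X)$: given a finite connected $\VV\sse X$ and a positive $v\in \FF(X\setminus \VV)$ superharmonic on $X\setminus \VV$ with $u\le v$ on $\VV$, comparison on finite truncations $\VV\sse B_n\sse X$ (using the Picone-type inequality and the maximum principle for non-negative $h$ on $B_n\setminus \VV$) followed by exhaustion $B_n\uparrow X$ yields $u\le v$ on $X\setminus \VV$. The converse \ref{thm:critical11}~$\Rightarrow$~\ref{thm:critical1} is by contradiction: if $h$ were subcritical, Theorem~\ref{thm:GreensFunction} would give a normalised Green's function $G_o$ at every $o$; choosing $t>0$ so that $u\le tG_o$ on a neighbourhood of $o$ and invoking the minimality of $u\in \MM(X\setminus\{o\})$ forces $u\le tG_o$ globally. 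Letting $t\to 0$ using the scaling behaviour of $G_o$ would then contradict $u>0$, or alternatively, equality $u=tG_o$ is ruled out since $u$ is harmonic at $o$ while $HG_o(o)>0$.

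The main obstacle will be the lemma in the second step (one Green's function $\Rightarrow$ subcriticality): in the linear case this is immediate by symmetry of the Green kernel, but for $p$-Schrödinger operators on graphs it requires a careful Picone-based energy comparison. The exhaustion argument in the third step is also subtle because the exterior domain $X\setminus \VV$ can be infinite and not locally finite, so the comparison principle must be applied to the finite truncations with control on the boundary contribution coming from $\partial B_n$.
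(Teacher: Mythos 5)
Your decomposition tracks the paper's (chain the four assertions; route the Green's-function equivalence through Theorem~\ref{thm:GreensFunction}; handle capacity via the null-sequence/ground-state machinery of \cite{F:GSR}; treat the minimal-growth item by exhaustion and comparison), but two of your steps have genuine gaps where the paper relies on a specific device you have not supplied.

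For \ref{thm:critical1}~$\Rightarrow$~\ref{thm:critical11}, comparing the ground state $u$ with $v$ directly on the truncations $B_n\setminus\VV$ cannot work: the weak comparison principle (Proposition~\ref{prop:WCPh>0}) requires $u\le v$ on the whole boundary of $B_n\setminus\VV$, and on the outer part $\partial B_n$ you have no control whatsoever on $u$ versus $v$. You flag this difficulty yourself but do not resolve it. The paper's resolution is to never compare $u$ with $v$ directly: instead one compares compactly supported approximants $w_n$ of $u$ (solutions of a Poisson problem on $K_n$, normalised at a base point, which converge to the ground state by criticality and the Harnack principle) against $(1+\epsilon)\tilde v$. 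Since $w_n\equiv 0$ outside $K_n$, the outer boundary condition is automatic, and for $n$ large the finitely many inequalities on $K$ hold by pointwise convergence; then one lets $n\to\infty$ and finally $\epsilon\to 0$. Without this replacement your exhaustion argument does not close.

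For \ref{thm:critical11}~$\Rightarrow$~\ref{thm:critical1}, the phrase ``letting $t\to 0$ using the scaling behaviour of $G_o$'' is not a valid step: once the minimality of $u$ gives $u\le t G_o$ globally for the optimal $t$ on a neighbourhood of $o$, there is no scaling that lets $t$ decrease further without a new argument, and the observation that $u$ is harmonic at $o$ while $HG_o(o)>0$ only rules out equality, not the inequality $u\le tG_o$. What is actually needed, and what the paper does, is an iterative improvement: since $tu\ne G_o$ and $tu\le G_o$, one finds a finite connected set where the inequality is strict, which yields $(1+\tilde\epsilon)tu\le G_o$ there; minimality propagates this globally and contradicts the extremality of $t$ on the original finite set. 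You should make this two-step argument explicit. (As a minor point, your proposed lemma ``one Green's function $\Rightarrow$ subcriticality'' for step \ref{thm:critical1}$\Leftrightarrow$\ref{thm:critical9} is fine but redundant: the paper simply invokes Lemma~\ref{lem:groundstate} for \ref{thm:critical1}$\Rightarrow$\ref{thm:critical9}, since criticality forces every positive superharmonic function to be harmonic, and uses the forward direction of Theorem~\ref{thm:GreensFunction} contrapositively for the reverse.)
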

Note that in the case of $h$ being critical, the global minimal positive harmonic function is the Agmon ground state (see Section~\ref{sec:critical} for details).

If not stated otherwise, we will in the following always assume that
\[p\in (1,\infty).\]

\begin{remark}
Our main goal is to allow negative values for $c$. As a byproduct we are usually not able to use the standard toolbox from functional analysis. A workaround will very often be the method of approximating the graph by an exhaustion of $X$  with finite sets and an analysis of the corresponding limit. 

	Note that assuming $c\geq 0$ allows the usage of standard functional analytic tools which are very similar to the continuous case. For example, if $c\geq 0$, then we can use \cite[Proposition~7.6]{Showalter} and get that the subdifferential of $h$ on $W^{1,p}(X)$ is a singelton containing only the $p$-Laplace-type operator. Here, \[W^{1,p}(X)=\ell^p(X,m)\cap \DD\] is equipped with the norm \[\norm{f}_{1,p,m}= \bigl(\norm{f}^p_{p,m}+h(f)\bigr)^{1/p}.\]
\end{remark}

\section{General Principles and Inequalities}
Here, we introduce the necessary toolbox to achieve the Agmon-Allegretto-Piepenbrink theorem globally. This will be done by showing a variety of local results. The actual proof of the first main result will then follow by a limiting process. To be more specific, in this section we show
\begin{itemize}
	\item a local Harnack principle,
	\item a Picone-type inequality,
	\item an Anane-D\'{i}az-Sa\'{a}-type inequality,
	\item the existence of a principal eigenvalue on finite subsets,
	\item the existence and uniqueness of solutions to the Poisson-Dirichlet problem on finite subsets,
	\item characterisations of the maximum principle on finite subsets.
	
\end{itemize}

\subsection{A Local Harnack Principle}
The Harnack principle is a consequence of the local Harnack inequality which is stated next. Note that here, we explicitly use that $p\neq 1$. The application of the Harnack inequality and principle in the main results is also a reason for an exclusion of the case $p=1$. For Harnack inequalities in different discrete settings, see \cite{HoS2, KePiPo1, Prado}.

\begin{lemma}[Local Harnack Inequality, Lemma~4.4 in \cite{F:GSR}]\label{lem:harnackIneq}
	Let $p>1$, $K\subseteq X$ be connected and finite, and $f\in C(X)$. Then there exists a positive constant $C=C_{K,H,f}$ such that for any non-negative function $u\in \FF(K)$ such that $Hu\geq fu^{p-1}$ on $K$, we have
	\[ \max_Ku \leq C \min_K u.\]	
Furthermore, if $u(x)=0$ for some $x\in K$, then $u(x)=0$ for all $x\in K\cup \partial K$.
	
	Moreover, any non-negative function which is superharmonic and positive on a connected set $V\sse X$ is strictly positive on $V$.
\end{lemma}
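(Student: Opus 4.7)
The plan is a two-step argument: first, extract from the supersolution condition at each interior vertex $x \in K$ a one-edge comparison $u(y) \leq D(x,y)\,u(x)$ for every neighbour $y \sim x$; then chain these local bounds along a path joining any two vertices of $K$, using that $K$ is finite and connected. The final statement on strict positivity of non-negative superharmonic functions will drop out as a corollary by applying the Harnack inequality with $f \equiv 0$.

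For the local step, I would rewrite the hypothesis $Hu \geq f u^{p-1}$ on $K$ as
\[\sum_{y \in X} b(x,y)\p{u(x) - u(y)} \geq g(x)\,u(x)^{p-1}, \qquad x \in K,\]
where $g(x) := f(x)\,m(x) - c(x)$. Fix a neighbour $y_0 \sim x$, isolate its term on the left, and bound the remaining sum from above by splitting $y \neq y_0$ according to the sign of $u(x) - u(y)$: the terms with $u(y) \leq u(x)$ contribute $\p{u(x) - u(y)} \in [0, u(x)^{p-1}]$ (since $u \geq 0$), while those with $u(y) > u(x)$ contribute non-positively and may be dropped. This yields
\[b(x,y_0)\p{u(x) - u(y_0)} \geq \bigl(g(x) - \deg(x)\bigr)\,u(x)^{p-1}.\]
To turn this into a multiplicative bound on $u(y_0)/u(x)$ I would split on the sign of $u(x) - u(y_0)$: if $u(x) \geq u(y_0)$ one trivially has $u(y_0) \leq u(x)$; in the opposite case the left-hand side equals $-b(x,y_0)(u(y_0) - u(x))^{p-1}$, so rearranging and taking $(p-1)$-th roots -- this is where $p > 1$ enters -- yields $u(y_0) \leq D(x,y_0)\,u(x)$ with an explicit constant $D(x,y_0) \geq 1$ depending on $b(x,y_0)$, $\deg(x)$, $g(x)$ and $p$, but not on $u$. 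In the borderline case $u(x) = 0$ the right-hand side vanishes and the estimate collapses to $-b(x,y_0)u(y_0)^{p-1} \geq 0$, forcing $u(y_0) = 0$ since $b(x,y_0) > 0$.

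The Harnack inequality then follows by chaining: for each ordered pair of vertices in $K$ fix a path inside $K$ (possible since $K$ is connected and finite), multiply the edge constants $D(\cdot,\cdot)$ along it, and take the maximum over the finitely many such products to obtain a single $C = C_{K,H,f}$ with $\max_K u \leq C \min_K u$. A zero of $u$ at some $x \in K$ spreads along edges inside $K$ by the local estimate, and one more edge step then propagates it to each $y \in \partial K$. The last assertion now follows immediately: if a non-negative $u$ is superharmonic on a connected $V \sse X$ then $Hu \geq 0 \cdot u^{p-1}$, so a hypothetical zero $u(x_0) = 0$ in $V$ together with some $u(x_1) > 0$ in $V$ would give, by applying the Harnack inequality on any finite connected $K \sse V$ containing both points, $u(x_1) \leq C\,u(x_0) = 0$, a contradiction. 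The main obstacle is the one-edge passage from the additive inequality on $\p{u(x) - u(y_0)}$ to a multiplicative comparison of $u(y_0)$ and $u(x)$ in the case $u(x) < u(y_0)$; this is precisely the step that requires $p \neq 1$ and that forces $D$ (and hence $C$) to depend on the vertex data $b$, $c$, $m$, $f$, so no uniform constant across different $K$'s can be expected.
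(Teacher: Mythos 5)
Your proof is correct. Since the paper only cites this result as Lemma~4.4 of \cite{F:GSR} and does not reproduce a proof, there is nothing in-source to compare against; the argument you give — isolating a single edge term in the supersolution inequality $\sum_{y}b(x,y)\p{u(x)-u(y)}\geq g(x)u(x)^{p-1}$, bounding the remaining sum by $\deg(x)u(x)^{p-1}$ via non-negativity of $u$, converting to a multiplicative one-edge estimate by taking $(p-1)$-th roots, and then chaining along paths inside the finite connected $K$, with zeros propagating one further edge into $\partial K$ — is the standard and natural proof of a discrete local Harnack inequality, and you correctly identify where $p>1$ is used.
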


Now, we want to prove the Harnack principle. We do this by dividing the statement into two partial results, Lemma~\ref{lem:compactBC} and Proposition~\ref{prop:harnack}. The technical part is extracted in the following lemma. This lemma has many analogues in other settings, see e.g. \cite{Bjoern}, and is a standard statement in $p$-potential theory. 

Let $V\sse X$ and $o\in V$ be a fixed reference point. Then, define $S^+_{o}(V)=S_o^+(V,H)$
 as follows
\begin{align*}
S^+_{o}(V):=\set{u\in \FF(V) : u(o)=1, \,\,u \text{ is superharmonic on } V, u\geq 0 \text{ on } V\cup \partial V. }
\end{align*}
\begin{lemma}[Harnack Principle]\label{lem:compactBC}
 	Let $V\sse X$ be connected, and $(V_n)$ be an increasing exhaustion of $V$ with connected subsets. Let $f_n\in C(V_n)$ such that $f_n\to f\in C(V)$ pointwise. Let $(u_n)$ be a sequence of non-negative functions such that $Hu_n= f_n u_n^{p-1}$ on $V_n$ (resp. $Hu_n\geq f_n u_n^{p-1}$ on $V_n$), and which converges pointwise on $X$ to some extended function $u$. 	
 	Then either $u(x)=\infty$ for all $x\in V$ or $u$ is non-negative on $X$ and $Hu_n\to Hu= f u^{p-1}$ pointwise on $V$ (resp. $Hu_n\to Hu\geq f u^{p-1}$ on $V$). 
 	
 	Moreover, the set $S^+_{o}(V)$ is compact with respect to the topology of pointwise convergence.
\end{lemma}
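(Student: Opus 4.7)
The lemma has two parts: a dichotomy-plus-convergence statement for pointwise limits of (super)solutions, and the compactness of $S_o^+(V)$ which follows from it. My plan is to establish the first part using the local Harnack inequality (Lemma~\ref{lem:harnackIneq}) combined with passage-to-the-limit arguments in the sum defining $L$, and then to deduce the compactness by combining Harnack bounds with a diagonal extraction.

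For the dichotomy, I would fix a reference point $o\in V$ and, for any other $y\in V$, use that eventually $o$ and $y$ both lie in a common finite connected subset $K\sse V_n$. Applying Lemma~\ref{lem:harnackIneq} to each $u_n$ with right-hand side $f_n$ yields $u_n(y)\leq C_{K,H,f_n}\,u_n(o)$ and the reverse inequality. Since $f_n\to f$ pointwise on the finite set $K$ (hence uniformly), the constants $C_{K,H,f_n}$ stay bounded as $n\to\infty$, and passing to the limit gives the dichotomy: $u$ is either finite on all of $V$ or identically $+\infty$ there. For the convergence $Lu_n(x)\to Lu(x)$ in the finite case, I fix $x\in V$ (so $x\in V_n$ for large $n$) and use the equation to write $Lu_n(x)=(f_n(x)-c(x)/m(x))\,u_n(x)^{p-1}$, which converges. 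To identify the limit with $Lu(x)$, I pass the limit through the defining sum term-by-term (each term converges pointwise by continuity of $\p{\cdot}$) after splitting by the sign of $u_n(x)-u_n(y)$: the positive part, where $u_n(y)\leq u_n(x)$, is uniformly dominated by $u_n(x)^{p-1}b(x,y)\leq Cb(x,y)$, so dominated convergence applies using $\sum_y b(x,y)=\deg(x)<\infty$; since the signed total is uniformly bounded via the equation, the negative part is uniformly bounded too, and a Fatou argument combined with the convergence of the positive part identifies the limit while simultaneously yielding $u\in\FF(V)$. The equality version of the statement then requires the analogous Fatou bound to be applied in both directions.

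For the compactness of $S_o^+(V)$, I would start with a sequence $(u_n)\sse S_o^+(V)$. By Harnack, $u_n$ is uniformly bounded on every finite subset of $V$. Uniform bounds on $\partial V$ follow from superharmonicity at interior neighbours: for $y\in\partial V$ and $x\in V$ with $x\sim y$, splitting $\sum_z b(x,z)\p{u_n(x)-u_n(z)}$ by sign and using the Harnack bound on $u_n(x)$ shows that the single-term contribution $b(x,y)(u_n(y)-u_n(x))^{p-1}$ is uniformly bounded, forcing $u_n(y)$ to be uniformly bounded. After replacing each $u_n$ by zero on $X\setminus(V\cup\partial V)$ (which does not alter $Hu_n$ on $V$), a diagonal extraction over the countable set $V\cup\partial V$ yields a pointwise convergent subsequence on $X$. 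Applying the first part of the lemma with $f_n\equiv 0$ shows that the limit lies in $S_o^+(V)$, and the value $u(o)=1$ is preserved in the limit, ruling out the $+\infty$ alternative.

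The main technical obstacle is the passage to the limit in the non-locally-finite sum defining $Lu_n(x)$: without local finiteness, plain dominated convergence fails, and the necessary uniform integrability has to be extracted from the equation itself together with the Harnack bounds, as sketched above. A secondary subtlety is the uniform control of the Harnack constants $C_{K,H,f_n}$ as $f_n\to f$, which requires inspecting the proof of Lemma~\ref{lem:harnackIneq} to confirm that the constant depends continuously (say, monotonically in $\norm{f}_{\infty,K}$) on $f$ restricted to the finite set $K$.
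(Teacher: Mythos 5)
Your plan follows the paper's proof closely: both establish the dichotomy via the local Harnack inequality, Lemma~\ref{lem:harnackIneq}, and both then pass to the limit in the series defining $L$ by applying dominated convergence to the bounded side and a Fatou-type argument to the unbounded side. The paper's one small refinement is to first divide $Hu_n=f_nu_n^{p-1}$ by $u_n(x)^{p-1}$, so that the summands with $\nabla_{x,y}u_n>0$ become $b(x,y)\bigl(1-u_n(y)/u_n(x)\bigr)^{p-1}\leq b(x,y)$ outright; your unnormalised version with the Harnack bound $u_n(x)^{p-1}\leq C$ is equivalent. Concerning the dependence of the Harnack constant on $f_n$: you can bypass inspecting the proof of Lemma~\ref{lem:harnackIneq} by noting that $f_n\to f$ uniformly on the finite set $K$, so that $Hu_n\geq (f-1)u_n^{p-1}$ on $K$ for all large $n$ and a single constant $C_{K,H,f-1}$ serves the whole tail.

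Where your sketch over-claims is the equality case. The Fatou step, applied to the non-negative summands with $u_n(y)>u_n(x)$, yields exactly the inequality $Hu\geq fu^{p-1}$ from $Hu_n\geq f_nu_n^{p-1}$; this is what the supersolution statement and the closedness of $S_o^{+}(V)$ require, so the compactness claim is fine. But ``applying the analogous Fatou bound in both directions'' does not upgrade this to equality: in both the sub- and the supersolution inequality it is the \emph{same} family of summands (those with $u_n(y)>u_n(x)$) that is non-negative and undominated, so Fatou points in the same direction both times and never gives the reverse bound. Indeed, Proposition~\ref{prop:harnack} subsequently reinstates precisely the hypotheses (local finiteness, monotonicity, or a dominating function) that would be needed to identify $\lim_n Lu_n(x)$ with $Lu(x)$ in general; without one of them, pointwise convergence of $u_n$ does not prevent mass from escaping through the unbounded tail of the sum. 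So the passage to $Hu=fu^{p-1}$ should either invoke one of those extra hypotheses, or you should retreat to the supersolution inequality, which is all that is actually used for the rest of the lemma. (The paper's own invocation of ``dominated convergence'' here is similarly terse, so your instinct that this is the main technical obstacle is correct; the proposed fix just does not close it.)
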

\begin{proof}
We only show the proof for the sequence of solutions. The proof for the sequence of supersolutions follows similarly. 

Let $(u_n)$ be a sequence of non-negative functions such that $Hu_n= f_nu_{n}^{p-1}$ in $V_n$. We divide the proof into several cases.

	If $\lim_{n\to\infty} u_n(x)=\infty$ for some $x\in V$. Then, by the Harnack inequality, Lemma~\ref{lem:harnackIneq}, we have $\lim_{n\to\infty} u_n(x)=\infty$ for all $x\in V$.

	If $\lim_{n\to\infty} u_n(x)=0$ for some $x\in V$. Then, again by the Harnack inequality, Lemma~\ref{lem:harnackIneq}, we have $\lim_{n\to\infty} u_n(x)=0$ for all $x\in V$. This is, however, a non-negative harmonic function on $V$.

	Now, let us assume that there exists $u\in C(X)$ such that $\lim_{n\to\infty} u_n(x)=u(x)\in (0,\infty)$ for  all $x\in V$. Then, without loss of generality, we can assume that $u_n>0$ on $V_n$. 
	
	 Moreover, assuming $Hu_n= f_nu_n^{p-1}$ on $V_n$ is equivalent to
\begin{align*}
	\sum_{y\in X, \nabla_{x,y}u_n< 0}&b(x,y)( \nabla_{y,x}u_n)^{p-1} \\
	&= \sum_{y\in X, \nabla_{x,y}u_n> 0}b(x,y)( \nabla_{x,y}u_n)^{p-1}+\bigl(c(x)-f_n(x)m(x)\bigr)u_n(x)^{p-1}
\end{align*}
for any $x\in V_n$. Furthermore, since $u_n>0$ on $V_n$
\begin{align*}
	0&\leq \sum_{y\in X, \nabla_{x,y}u_n< 0}b(x,y)\Bigl( \frac{u_n(y)}{u_n(x)}-1\Bigr)^{p-1} \\
	&=  \sum_{y\in X, \nabla_{x,y}u_n> 0}b(x,y)\Bigl( 1-\frac{u_n(y)}{u_n(x)}\Bigr)^{p-1}+c(x)-f_n(x)m(x) \\
	&\leq  \sum_{y\in X, \nabla_{x,y}u_n> 0}b(x,y)+c(x)-f_n(x)m(x) \\
	&\leq  \deg(x)+c(x)-f_n(x)m(x) \\
	&\to \deg(x)+c(x)-f(x)m(x) <\infty.  
\end{align*}
Using dominated convergence, we infer for any $x\in V_n$
\begin{align*}
	\sum_{y\in X, \nabla_{x,y}u< 0}b(x,y)&\Bigl( \frac{u(y)}{u(x)}-1\Bigr)^{p-1} \\ 
	&=  \sum_{y\in X, \nabla_{x,y}u> 0}b(x,y)\Bigl( 1-\frac{u(y)}{u(x)}\Bigr)^{p-1}+c(x)-f(x)m(x).  
\end{align*}
Multiplying both sides with $u^{p-1}(x)$ and rearranging yields in a non-negative function such that $Hu= fu^{p-1}$ on any $V_n$ for $n$ large enough. Since $(V_n)$ is an increasing exhaustion of connected sets, we get $Hu= fu^{p-1}$ on $V$.

We now turn to the statements for $S=S^+_{o}(V)$. By the pointwise convergence it follows from the previous investigations that we get $u\in S$ if even  $(u_n)_n$ is in $S$. 
	
	Furthermore, note that $V$ is connected, so for all $x\in V$ there exists a path $x_0\sim\ldots \sim x_n$ such that $x_0=o$ and $x_n=x$. Let $V=\set{x_0,\ldots, x_n}$ then we can apply Lemma~\ref{lem:harnackIneq} to some $u\in S$ and get that there exists a constant $C_x>0$ such that $C_x^{-1}\leq u(x)\leq C_x$. Hence $S$ is included in the product space $\prod_{x\in X} [C_x^{-1}, C_x]$ which is compact due to Tychonoff's theorem. By the previous part, $S$ is closed and thus, it is also compact.
\end{proof}

From Lemma~\ref{lem:compactBC} another principle will follow easily (which is sometimes also called Harnack principle). The corresponding analogue for linear Schrödinger operators on our general graphs can again be found in \cite{KePiPo1}. 

\begin{proposition}[Convergence of Solutions]\label{prop:harnack}
	Let $V\sse X$ be connected, $C>0$ and $o\in V$. Assume that we have a sequence $(u_n)_n$ in $S^+_{o}(V).$ Then there exists a subsequence $(u_{n_k})_k$ that converges pointwise to a function $u\in S^+_{o}(V)$. Furthermore, assume that either
	\begin{enumerate}[label=(\alph*)]
	\item\label{thm:harnack1} the graph is locally finite, or
	\item\label{thm:harnack2} the subsequence $(u_{n_k})_k$ is monotone increasing, or
	\item\label{thm:harnack3} there exists a $f\in \FF(V)$ such that for all $k\in\NN$ we have $u_{n_k}\leq f$ in $X$.
	\end{enumerate}
	Then $Hu_{n_k}\to Hu$ pointwise as $n_k\to\infty$ .
\end{proposition}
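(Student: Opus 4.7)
The proof splits naturally into two independent parts.

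For the existence of a convergent subsequence with limit in $S_o^+(V)$, my plan is to appeal directly to the compactness of $S_o^+(V)$ in the topology of pointwise convergence, which is exactly the second half of Lemma~\ref{lem:compactBC}. Because $X$ is countable, $\RR^X$ equipped with the product (pointwise) topology is metrizable, so compactness coincides with sequential compactness and I can extract a pointwise convergent subsequence $(u_{n_k})$ whose limit $u$ lies in $S_o^+(V)$.

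For the convergence $Hu_{n_k}\to Hu$, the plan is to expand, for each fixed $x\in V$,
\[
Hu_{n_k}(x) = \frac{1}{m(x)}\sum_{y\in X} b(x,y)\p{\nabla_{x,y}u_{n_k}} + \frac{c(x)}{m(x)}\p{u_{n_k}(x)},
\]
observe that the potential term converges immediately from the pointwise convergence $u_{n_k}(x)\to u(x)$, and then pass the limit into the $y$-sum. In case~\ref{thm:harnack1} local finiteness reduces the sum to finitely many non-zero terms per $x$ and nothing remains to show. In cases~\ref{thm:harnack2} and~\ref{thm:harnack3} I would invoke dominated convergence: the monotonicity $0 \leq u_{n_k}\leq u$, respectively the envelope $0\leq u_{n_k}\leq f$, yields
\[
b(x,y)\abs{\p{\nabla_{x,y}u_{n_k}}} \leq b(x,y)\bigl(u(x)+u(y)\bigr)^{p-1} \leq C_p\, b(x,y)\bigl(u(x)^{p-1}+u(y)^{p-1}\bigr),
\]
with $C_p$ coming from the elementary inequality $(a+b)^{p-1}\leq C_p(a^{p-1}+b^{p-1})$ for $a,b\geq 0$ and $p>1$. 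The right-hand side is summable in $y$ because $u\in\FF(V)$ and $\deg(x)<\infty$; the analogous estimate with $f$ in place of $u$ handles case~\ref{thm:harnack3}.

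The main obstacle is precisely this interchange of limit and infinite sum on a graph that may fail to be locally finite: without uniform control on the values $u_{n_k}(y)$ as $y$ varies, the signed summands $b(x,y)\p{\nabla_{x,y}u_{n_k}}$ could fluctuate and tail mass survive in the limit. Hypotheses~\ref{thm:harnack1}--\ref{thm:harnack3} are calibrated exactly to rule this out, the first by reducing to a finite sum and the latter two by furnishing an $\FF(V)$-dominant that kills the tail.
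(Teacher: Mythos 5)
Your proposal is correct and follows the paper's route in all essentials: compactness of $S_o^+(V)$ from Lemma~\ref{lem:compactBC} (via sequential compactness of the metrizable product space $\RR^X$) for the first part, and then termwise passage to the limit in $Hu_{n_k}(x)$ using local finiteness in case~\ref{thm:harnack1} and a dominating function in cases~\ref{thm:harnack2} and~\ref{thm:harnack3}. The one small deviation is that you subsume case~\ref{thm:harnack2} into a dominated-convergence argument with dominant $u\in\FF(V)$ (exploiting $0\leq u_{n_k}\leq u$ and $u\in S_o^+(V)\subseteq\FF(V)$), whereas the paper invokes monotone convergence for~\ref{thm:harnack2}; since the signed summands $b(x,y)\p{\nabla_{x,y}u_{n_k}}$ are not themselves monotone in $k$, your domination argument is arguably the cleaner justification here, and your explicit verification that the dominant is summable in $y$ (using $\deg(x)<\infty$, $u\in\FF(V)$ and the elementary estimate $(a+b)^{p-1}\leq C_p(a^{p-1}+b^{p-1})$) correctly fills in what the paper leaves implicit.
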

\begin{proof}
	By the previous lemma, Lemma~\ref{lem:compactBC}, the first statement follows easily. Now let $(u_{n_k})$ be a subsequence that converges pointwise to a function $u\in S^+_{o}(V)$.
	
	Ad \ref{thm:harnack1}: If the graph is locally finite  we have 
	\begin{align*}
	 \lim_{k\to \infty}L u_{n_k}(x)&=  \lim_{k\to \infty} \frac{1}{m(x)}\sum_{y\in X} b(x,y)\p{\nabla_{x,y}u_{n_k}}\\
	&= \frac{1}{m(x)}\sum_{y\in X} b(x,y)\lim_{k\to \infty}\p{\nabla_{x,y}u_{n_k}} =L u(x),
	\end{align*}
	since we sum over a finite number of elements. The assertion for the Schrödinger operator follows now easily.
	
	Ad \ref{thm:harnack2}: If  $(u_{n_k})_k$ is monotone increasing then we can use Lebesgue's theorem of monotone convergence to interchange summation and limit as above.
	
	Ad \ref{thm:harnack3}: If $u_{n_k}\leq f$ for all $k\in \NN$ for some $f\in F$. Then we can use Lebesgue's theorem of dominated convergence to  interchange summation and limit. 
\end{proof}

%

\subsection{Picone and Anane-D\'{i}az-Sa\'{a} Inequalities}
Here, we show an inequality which has many applications, one of them will lead the way to the desired Agmon-Allegretto-Piepenbrink theorem. %

For non-local $p$-Laplacians on graphs and on $\RR^{d}$, Picone's \emph{inequality} is a consequence of the ground state representation, see \cite[Theorem~3.1]{F:GSR}. However, this inequality is only a special case of the representation and and can also be archived more directly, see \cite[Proof of Proposition~2.2]{FS08} or \cite[Lemma~2.3]{AM}. For convenience, we show an alternative proof here. 


On finite graphs with $p$-Schrödinger operators a corresponding inequality is given in \cite[Theorem~4.1]{ParkKimChung} or see \cite[Lemma~6.2]{AmghibechPicone} for the case of the standard $p$-Laplacian. Here, we show a Picone-type inequality generalising the techniques on finite graphs to infinite graphs. 

In the continuum, there exists a so-called (pointwise) Picone \emph{identity} for the $p$-Laplacian, see e.g. \cite{AllegrettoHuangPicone, BF14, PTT08}. Both proofs of Picone's, the local and the non-local case, use a pointwise identity resp. inequality. Here, we employ the following result which also shows, why we cannot hope for an identity in the non-local case in general (without adding a remainder term). In the local case, one has an identity because of an application of the chain rule.
\begin{lemma}[Pointwise-Picone-type Inequality, Lemma~4.1 in \cite{ParkKimChung}]\label{lem:Lemma41}
	Let \\${f\colon \RR^3\to \RR}$ be such that 
	\[ f(a,b,c)= \abs{a-b}^p+ \abs{a}^p\p{c-1}+ \abs{b}^p\p{1/c-1}.\]
	Then $f\geq 0$ on $\set{(a,b,c)\in \RR^3: a,b\geq 0, c>0}$. Furthermore, $f=0$ if and only if $b=ac$.
\end{lemma}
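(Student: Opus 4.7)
The plan is to fix $a, b \geq 0$ and regard $\phi(c) := f(a,b,c)$ as a one-variable function of $c \in (0, \infty)$, then identify its minimum by calculus. The degenerate cases I would dispose of first: if $a = b = 0$ then $\phi \equiv 0$ and $b = ac$ is trivially satisfied; if exactly one of $a, b$ vanishes, say $b = 0 < a$, then $\phi(c) = a^p\bigl(1 + \p{c-1}\bigr) > 0$ since $\p{c-1} > -1$ for every $c \in (0,\infty)$ (and the equality condition $b = ac$ cannot hold). The case $a = 0 < b$ is symmetric.

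For the main case $a, b > 0$, I would differentiate on $(0, \infty) \setminus \{1\}$ using $\tfrac{\dd}{\dd c}\p{c-1} = (p-1)|c-1|^{p-2}$, $\tfrac{\dd}{\dd c}\p{1/c-1} = -(p-1)|1/c-1|^{p-2}/c^2$, together with the identity $|1/c-1| = |c-1|/c$ valid for $c > 0$. These combine into
\[
\phi'(c) = (p-1)\,|c-1|^{p-2}\bigl(a^p - b^p/c^p\bigr),
\]
so the unique classical critical point is $c^* = b/a$. Moreover $\phi(c) \to +\infty$ as $c \to 0^+$ (driven by $b^p\p{1/c-1}$) and as $c \to +\infty$ (driven by $a^p\p{c-1}$), so the infimum of $\phi$ on $(0,\infty)$ is attained in the interior.

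Next, I would check that $\phi(c^*) = 0$ by direct calculation: with $c^* = b/a$ one has $|a-b|^p = a^p|c^*-1|^p$, and the identity $|1/c^*-1|^{p-2} = |c^*-1|^{p-2}/(c^*)^{p-2}$ combined with $1/c^* - 1 = -(c^*-1)/c^*$ yields $b^p\p{1/c^*-1} = -a^p c^*\p{c^*-1}$; the three summands of $\phi(c^*)$ then collapse via $(c^*-1)\p{c^*-1} = |c^*-1|^p$. The only remaining subtlety is that for $1 < p < 2$, $\phi$ is not differentiable at $c = 1$; but the sign of $\phi'$ in a punctured neighbourhood of $c = 1$ is governed by $a^p - b^p/c^p$, which has constant sign there whenever $a \neq b$, so $\phi$ is monotone through $c = 1$ and this point is never a hidden minimiser. (When $a = b$, $c^* = 1$ and $\phi(1) = |a-b|^p = 0$ directly.) Thus $\phi \geq 0$ on $(0, \infty)$ with equality iff $c = c^* = b/a$. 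The main obstacle is not conceptual but simply the careful bookkeeping of signs and absolute values in verifying $\phi(c^*) = 0$; the cusp analysis at $c = 1$ is a minor technical subtlety.
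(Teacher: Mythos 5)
Your proof is correct and self-contained. Note that the paper does not reprove this lemma at all --- it is cited verbatim as Lemma~4.1 from the referenced work of Park--Kim--Chung --- so there is no ``paper proof'' to compare against; the value of your argument is precisely that it supplies one. The calculus route is sound: the derivative computation $\phi'(c) = (p-1)|c-1|^{p-2}(a^p - b^p/c^p)$ is correct (using $|1/c-1| = |c-1|/c$ for $c>0$), the blow-up of $\phi$ at both ends of $(0,\infty)$ is right, and the verification $\phi(b/a)=0$ collapses cleanly via $a^p\p{b/a-1} = a\p{b-a}$ and $b^p\p{a/b-1} = b\p{a-b}$, giving $|a-b|^p - (a-b)\p{a-b} = 0$. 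The handling of the cusp at $c=1$ for $1<p<2$ is also right: the sign of $\phi'$ near $c=1$ is governed by $\sgn(a^p-b^p)$, so $\phi$ passes monotonically through $c=1$ whenever $a\neq b$, and when $a=b$ the critical point $c^*=1$ is already the minimiser with value $0$. The degenerate cases are dispatched correctly using $\p{c-1}>\p{-1}=-1$ for $c>0$ and strict monotonicity of $\p{\cdot}$. One could alternatively prove the lemma algebraically (e.g.\ via Young's inequality applied to the cross terms), but your calculus argument is clean and complete as written.
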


Now we can show easily the following statement, the Picone inequality.

\begin{lemma}[Picone-type Inequality]\label{lem:Picone}
	Let $V\sse X$. Let $u,v \in C(X)$ such that $v>0$ on $V$. Then for all $x\in V$
	\begin{align*}
	\sum_{y\in V}b(x,y)\left(\abs{\nabla_{x,y}u}^{p} - \p{\nabla_{x,y}v}\bigl(\nabla_{x,y}\frac{\abs{u}^p}{v^{p-1}}\bigr)\right) \geq 0,
	\end{align*}
	where we allow the sum to be $\infty$.
	
	If $V$ is connected and $u\geq 0$, then equality in the inequality above implies $u=C\, v$ on $V$ for some constant $C>0$.  
	
	Moreover, $u=C\, v$ on $V$ for some constant $C>0$ implies that we have equality in the inequality above.
%
	
	In particular, we get for all $\phi\in C_c(V)$ and $0<v\in\FF(V)$
	\[\frac{1}{2}\sum_{x,y\in X}b(x,y)\abs{\nabla_{x,y}\phi}^{p} \geq \ip{Lv}{\frac{\abs{\phi}^{p}}{v^{p-1}}}_{V}.\]
\end{lemma}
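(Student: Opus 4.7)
The plan is to deduce the pointwise inequality from Lemma~\ref{lem:Lemma41} via the substitution $(a,b,c)=(|u(x)|,|u(y)|,v(y)/v(x))$, and then obtain the global statement by summation and Green's formula.

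First I would verify the algebraic identity
\begin{align*}
f\!\left(|u(x)|,|u(y)|,\frac{v(y)}{v(x)}\right) = \bigl||u(x)|-|u(y)|\bigr|^{p} - \p{\nabla_{x,y}v}\,\nabla_{x,y}\frac{|u|^{p}}{v^{p-1}}.
\end{align*}
Expanding $\p{c-1}=(v(y)-v(x))^{\langle p-1\rangle}/v(x)^{p-1}$ and the analogous expression for $\p{1/c-1}$ (both using $v(x),v(y)>0$), the two corner terms collect into a single discrete gradient of $|u|^{p}/v^{p-1}$ via the antisymmetry $(v(y)-v(x))^{\langle p-1\rangle}=-\p{\nabla_{x,y}v}$. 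Combining $f\geq 0$ from Lemma~\ref{lem:Lemma41} with the reverse triangle inequality $\bigl||u(x)|-|u(y)|\bigr|^{p}\leq|\nabla_{x,y}u|^{p}$, multiplying by $b(x,y)\geq 0$ and summing over $y\in V$ produces the pointwise Picone inequality.

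For the equality statements: if $u\geq 0$, the reverse triangle inequality is already an equality, so equality in Picone forces $f\equiv 0$ at every $y\in V$ with $b(x,y)>0$. Lemma~\ref{lem:Lemma41}'s equality case $b=ac$ then translates to $u(y)/v(y)=u(x)/v(x)$ along every edge, and connectedness of $V$ propagates this to $u=Cv$ on $V$. Conversely, $u=Cv$ on $V$ gives $|u|^{p}/v^{p-1}=C^{p}v$, and both $|\nabla_{x,y}u|^{p}$ and $\p{\nabla_{x,y}v}\nabla_{x,y}(C^{p}v)$ collapse termwise to $C^{p}|\nabla_{x,y}v|^{p}$.

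For the global inequality, take $\phi\in C_c(V)$, set $\psi:=|\phi|^{p}/v^{p-1}\in C_c(V)$, and apply the pointwise Picone inequality symmetrically over $x,y\in X$. For pairs with $y\notin V$ (where $\phi(y)=\psi(y)=0$) the summand $b(x,y)[|\nabla_{x,y}\phi|^{p}-\p{\nabla_{x,y}v}\nabla_{x,y}\psi]$ collapses to $b(x,y)|\phi(x)|^{p}[1-(1-v(y)/v(x))^{\langle p-1\rangle}]\geq 0$ (using $v(y)\geq 0$), and by symmetry the same holds when $x\notin V$; hence summation over all $(x,y)\in X\times X$ is justified and yields
\begin{align*}
\sum_{x,y\in X}b(x,y)|\nabla_{x,y}\phi|^{p} \;\geq\; \sum_{x,y\in X}b(x,y)\p{\nabla_{x,y}v}\nabla_{x,y}\psi.
\end{align*}
Green's formula (Lemma~\ref{lem:GreensFormula}) applied to $L$ (i.e.\ $c\equiv 0$) with $V=X$ identifies the right-hand side as $2\ip{Lv}{\psi}_{X}=2\ip{Lv}{\psi}_{V}$, the last equality holding because $\psi$ vanishes outside $V$. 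Dividing by $2$ produces the announced global inequality.

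The main obstacle is the bookkeeping in the global step: one must verify that the pointwise summand remains nonnegative for pairs straddling $\partial V$, and correctly track the factor of $2$ produced by the symmetric double sum when invoking Green's formula. The algebraic identity in the first step is the main computational check but is otherwise elementary once $v(x)^{p-1}$ is pulled out of $\p{c-1}$.
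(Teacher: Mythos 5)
Your proof is correct and follows essentially the same route as the paper's: deduce the pointwise estimate from Lemma~\ref{lem:Lemma41} with $(a,b,c)=(|u(x)|,|u(y)|,v(y)/v(x))$, use the reverse triangle inequality to reduce to $|u|$, use connectedness for the equality case, and pass to Green's formula for the global inequality; your explicit bookkeeping of the straddling pairs in the last step supplies detail that the paper elides with the phrase ``follows from Green's formula.'' One small nit: invoking Lemma~\ref{lem:GreensFormula} with $V=X$ formally requires $v\in\FF(X)$, which is not among the hypotheses. It is cleaner to apply Green's formula with the given $V$ (or with a finite set containing $\supp\psi$); the boundary term $\sum_{x\in V,\,y\in\partial V}b(x,y)\p{\nabla_{x,y}v}\psi(x)$ that then appears is precisely the quantity your straddling-pair estimate dominates, and the identity $\tfrac12\sum_{x,y\in X}b(x,y)\p{\nabla_{x,y}v}\nabla_{x,y}\psi=\ip{Lv}{\psi}_V$ already holds under $v\in\FF(V)$, $\psi\in C_c(V)$ by finiteness of $\supp\psi$ and symmetry of $b$. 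Your parenthetical ``using $v(y)\ge 0$'' correctly flags that the global inequality tacitly needs $v$ non-negative on $\partial V$, a point the statement's ``$0<v\in\FF(V)$'' leaves implicit.
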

\begin{proof}

	Firstly, we consider $u\geq 0$: Applying Lemma~\ref{lem:Lemma41} with $a=u(x), b=u(y)$ and $c=v(y)/v(x)$ yields that every summand is non-negative and hence, we get the first inequality.
	
	Moreover, if we assume equality and $u\geq 0$, then again by Lemma~\ref{lem:Lemma41} this implies that every summand vanishes. Thus, for a fixed $o\in V$ we get for all $y\in V, y\sim o$, that $u(y)=u(o)v(y)/v(o)$. Set $C=u(o)/v(o)$. Since $V$ is connected we get $u=C\, v$ on $V$.
		
	Furthermore, if $u=C\, v$ for some $C>0$ then we clearly have equality.
	
	For arbitrary $u\in C(X)$, substitute $u$ by $\abs{u}$ and use the reverse triangle inequality. 
	
	The latter statement of the lemma follows from Green's formula, Lemma~\ref{lem:GreensFormula}.
\end{proof}
\begin{remark}	Another proof of Lemma~\ref{lem:Picone} using the ideas of \cite{F:GSR, FS08} can be obtained as follows. By \cite[Lemma~2.6]{FS08} or \cite[Lemma~3.9]{F:GSR}, we have for all $p\in [1,\infty)$, $t\in [0,1]$ and $a\in \RR$ that
\[ \abs{a-t}^p \geq (1-t)^{p-1}(\abs{a}^p-t). \]
For $p>1$, this inequality is strict unless $a=1$ or $t=0$.

Using this inequality and defining $a$ and $t$ properly, we get the desired pointwise estimate to prove Lemma~\ref{lem:Picone}. Indeed, first of all note that if $u(y)=0$ for some $y\in V$, then without loss of generality, we can assume that $u(x)>0$, $x\in X$, and we get the pointwise estimate using the inequality
\[1\geq \abs{1-\alpha}^{p-2}(1-\alpha), \quad \alpha \geq 0.\]
We turn to $u(y)>0$ for all $y\in V$. Then, assume that $v(x)\geq v(y)$ (and by a symmetry argument we get the other case). Then, set $t=v(y)/v(x)$ and $a=u(x)/u(y)$ and apply the inequality. This yields in
\[\abs{\nabla_{x,y}(uv)}^{p}\geq \abs{\nabla_{x,y}v}^{p-1} \nabla_{x,y}(\abs{u}^{p}v). \]
Setting $\psi= uv$, we get the Picone inequality.
\end{remark}

The following first consequence of Picone's inequality is a discrete version of \cite[Lemma~3.3]{PP}, see also  \cite[Lemme~2]{DiazSaa} and \cite{Anane}. See also \cite[Lemma~4]{GS98} and \cite[Lemma~3.5]{PR15} for special cases of this inequality in the continuum. It is an extension of \cite[Corollary~4.1]{ParkKimChung} to infinite graphs.

The following Anane-D\'{i}az-Sa\'{a}-type inequality will be used to prove characterisations of the maximum principle on finite subsets, Proposition~\ref{prop:PP}.

\begin{proposition}[Anane-D\'{i}az-Sa\'{a}-type Inequality]	\label{prop:DiazSaaIneq}
	Let $K\sse X$ be finite. Let $u_{i}\in \FF(K)$, $u_{i}>0$ on $K$, $i=1,2$. Then, 
\begin{align}\label{eq:ADS}
\begin{aligned}
&\ip{\frac{Lu_1}{u_{1}^{p-1}}-\frac{Lu_{2}}{u_{2}^{p-1}}}{ u_{1}^{p}-u_{2}^{p}}_{K}\geq\\
	&\sum_{x\in K, y\in \partial K}b(x,y) \left(u_{1}^{p}(x)-u_{2}^{p}(x) \right)\left(\p{1-\frac{u_{1}(y)}{u_{1}(x)}}- \p{1-\frac{u_{2}(y)}{u_{2}(x)}}\right).
\end{aligned}	
\end{align}		
	Furthermore, if $K$ is connected, then equality implies $u_{1}=C u_{2}$ on $K$ for some constant $C>0$. Moreover, if $u_{1}=C u_{2}$ on $K\cup \partial K$, then we have equality, and the sums are non-negative.
	
	The right-hand side in \eqref{eq:ADS} is non-negative if for each pair $(x,y)\in K\times \partial K$ one of the following holds true:
	\begin{enumerate}[label=(\alph*)]
	\item\label{ADS1} $u_{1}(x)=u_{2}(x)$, or
	\item\label{ADS2} $u_{1}(x)u_{2}(y)=u_{1}(y)u_{2}(x)$, or
	\item\label{ADS3} $u_{1}(x)>u_{2}(x)$, and $u_{1}(x)u_{2}(y)< u_{1}(y)u_{2}(x)$, or
	\item\label{ADS4} $u_{1}(x)< u_{2}(x)$, and $u_{1}(x)u_{2}(y)> u_{1}(y)u_{2}(x)$.
	\end{enumerate}
	
	In particular, if $\phi, \psi\in C(K)$, $\phi,\psi> 0$ on $K$, i.e., $\phi=\psi=0$ on $X\setminus K$. Then, 
	\begin{align}\label{eq:ADSfinite}
		\ip{L\phi}{ \frac{\phi^{p}-\psi^p}{\phi^{p-1}}}_{K}+\ip{L\psi}{ \frac{\psi^{p}-\phi^p}{\psi^{p-1}}}_{K}\geq 0.
	\end{align}
	If $K$ is connected, then equality implies $\phi=C\, \psi$ for some $C>0$. Moreover, if $\phi=C\, \psi$ for some $C>0$, then we have equality.
\end{proposition}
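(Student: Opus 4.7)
The plan is to rewrite the left-hand side of \eqref{eq:ADS} via Green's formula (Lemma~\ref{lem:GreensFormula}) into a ``bulk'' sum over $K\times K$ plus a ``boundary'' sum over $K\times\partial K$, and then to observe that the bulk is a sum of two Picone summands and hence non-negative by Lemma~\ref{lem:Picone}, while the boundary piece reassembles into the right-hand side of \eqref{eq:ADS}.

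Using $(u_i^p-u_j^p)/u_i^{p-1}=u_i-u_j^p/u_i^{p-1}$, the LHS of \eqref{eq:ADS} expands into the four brackets $\ip{Lu_1}{u_1}_K$, $\ip{Lu_2}{u_2}_K$, $-\ip{Lu_1}{u_2^p/u_1^{p-1}}_K$, $-\ip{Lu_2}{u_1^p/u_2^{p-1}}_K$. Because $K$ is finite, the test functions $u_i\,\mathbf{1}_K$ and $(u_j^p/u_i^{p-1})\,\mathbf{1}_K$ lie in $C_c(X)$, so Green's formula with $V=K$ and $c\equiv 0$ applies to each of them. Grouping the four resulting $K\times K$-sums yields
\[
\tfrac{1}{2}\sum_{x,y\in K}b(x,y)\Bigl[\abs{\nabla_{x,y}u_1}^p-\p{\nabla_{x,y}u_2}\nabla_{x,y}\bigl(u_1^p/u_2^{p-1}\bigr)+\abs{\nabla_{x,y}u_2}^p-\p{\nabla_{x,y}u_1}\nabla_{x,y}\bigl(u_2^p/u_1^{p-1}\bigr)\Bigr],
\]
which is the sum of two Picone expressions and hence non-negative by Lemma~\ref{lem:Picone}. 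The four resulting boundary terms, grouped in pairs, combine after factoring out $(u_1^p(x)-u_2^p(x))$ and using the identity $\p{u_i(x)\alpha}=u_i^{p-1}(x)\p{\alpha}$ (valid since $u_i(x)>0$) with $\alpha=1-u_i(y)/u_i(x)$, into precisely the right-hand side of \eqref{eq:ADS}.

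For the equality analysis on connected $K$: equality in \eqref{eq:ADS} forces the bulk to vanish, which gives pointwise equality in both Picone inequalities for all adjacent $x,y\in K$. The uniqueness clause in Lemma~\ref{lem:Picone} then yields $u_1(y)/u_1(x)=u_2(y)/u_2(x)$ along every edge of $K$, whence $u_1=Cu_2$ on $K$ by connectedness. Conversely, $u_1=Cu_2$ on $K\cup\partial K$ gives the direct computation $Lu_1=C^{p-1}Lu_2$ (killing the LHS) and equal boundary ratios (killing each boundary summand). The sufficient conditions \ref{ADS1}--\ref{ADS4} for non-negativity of the boundary sum reduce to a sign check using monotonicity of $t\mapsto\p{t}$ against the sign of $u_1^p(x)-u_2^p(x)$. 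Finally, \eqref{eq:ADSfinite} is the special case $u_1=\phi$, $u_2=\psi$: for $y\in\partial K$ one has $\phi(y)=\psi(y)=0$, so $\p{1-\phi(y)/\phi(x)}=\p{1-\psi(y)/\psi(x)}=1$ and the entire boundary sum in \eqref{eq:ADS} collapses to zero, leaving the pure bulk inequality \eqref{eq:ADSfinite}.

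The main technical obstacle will be the algebraic bookkeeping in the second paragraph: Green's formula's boundary terms evaluate the test function on the interior vertex $x\in K$ rather than on $\partial K$, so it is not transparent at first glance that combining the four boundary pieces reassembles them into the compact factored form $(u_1^p(x)-u_2^p(x))\bigl(\p{1-u_1(y)/u_1(x)}-\p{1-u_2(y)/u_2(x)}\bigr)$ that appears on the right of \eqref{eq:ADS}.
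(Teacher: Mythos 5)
Your proposal is correct and follows essentially the same route as the paper's proof: a Green's-formula decomposition into a bulk sum over $K\times K$ plus a boundary sum over $K\times\partial K$, with the bulk regrouped into two Picone expressions (one for the pair $(u_1,u_2)$ and one for $(u_2,u_1)$) that are non-negative by Lemma~\ref{lem:Picone}, and with the boundary terms reassembled via the $(p-1)$-homogeneity $\p{u_i(x)\alpha}=u_i^{p-1}(x)\p{\alpha}$ into the factored form on the right of \eqref{eq:ADS}. The only cosmetic difference is that the paper applies Green's formula once with the combined test functions $\psi_i = 1_K(u_i^p - u_j^p)/u_i^{p-1}$ and then regroups the bulk, whereas you split into four inner products first; by linearity this is the same computation. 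Your handling of the equality case via Picone's rigidity clause, of conditions \ref{ADS1}--\ref{ADS4} via monotonicity of $t\mapsto\p{t}$, and of \eqref{eq:ADSfinite} via vanishing of the boundary sum when $\phi=\psi=0$ on $\partial K$ all match the paper's reasoning (the paper phrases the last step as ``condition \ref{ADS2} holds'', but your observation that the boundary sum is identically zero is the cleaner way to say the same thing).
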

\begin{proof} 
Set for all $x\in X$
\[\psi_{1}(x):=1_{K}(x)\frac{ u_{1}^{p}(x)-u_{2}^{p}(x)}{u_{1}^{p-1}(x)}, \qquad \psi_{2}(x):=1_{K}(x)\frac{ u_{2}^{p}(x)-u_{1}^{p}(x)}{u_{2}^{p-1}(x)}.\]
Note that the finiteness of $K$ implies that the following sums are all absolutely converging. Thus, we calculate using the Green's formula, Lemma~\ref{lem:GreensFormula}, 
\begin{align*}
	\sum_{x\in K}&\bigl(Lu_{1}(x)\psi_{1}(x)+Lu_{2}(x)\psi_{2}(x)  \bigl)m(x)\\
	&=\frac{1}{2}\sum_{x,y\in K}b(x,y)\left(\p{\nabla_{x,y}u_{1}}\nabla_{x,y}\psi_{1}+\p{\nabla_{x,y}u_{2}}\nabla_{x,y}\psi_{2} \right)\\
	&\qquad +\sum_{x\in K, y\in \partial K}b(x,y)\left( \p{\nabla_{x,y}u_{1}}\psi_{1}(x)+\p{\nabla_{x,y}u_{2}}\psi_{2}(x)\right) \\
	&=\frac{1}{2}\sum_{x,y\in K}b(x,y)\left(\abs{\nabla_{x,y}u_{1}}^{p}-\p{\nabla_{x,y}u_{2}}\nabla_{x,y}\frac{u_{1}^{p}}{u_{2}^{p-1}}\right) \\
	&\qquad +\frac{1}{2}\sum_{x,y\in K}b(x,y)\left(\abs{\nabla_{x,y}u_{2}}^{p}-\p{\nabla_{x,y}u_{1}}\nabla_{x,y}\frac{u_{2}^{p}}{u_{1}^{p-1}}\right) \\
&\qquad +\sum_{x\in K, y\in \partial K}b(x,y)\left( \p{\nabla_{x,y}u_{1}}\psi_{1}(x)+\p{\nabla_{x,y}u_{2}}\psi_{2}(x)\right).
\end{align*}
Using Picone's inequality, Lemma~\ref{lem:Picone}, yields in
\begin{align*}
	&\ldots \geq \sum_{x\in K, y\in \partial K}b(x,y)\left( \p{\nabla_{x,y}u_{1}}\psi_{1}(x)+\p{\nabla_{x,y}u_{2}}\psi_{2}(x)\right)\\
	&= \sum_{x\in K, y\in \partial K}b(x,y) \left(u_{1}^{p}(x)-u_{2}^{p}(x) \right)\left(\p{1-\frac{u_{1}(y)}{u_{1}(x)}}- \p{1-\frac{u_{2}(y)}{u_{2}(x)}}\right).
\end{align*}
This shows the first statement. Now we turn to the equality: We have used an inequality, which might be an equality, Picone's. Thus, we can read of Lemma~\ref{lem:Picone} that if $K$ is connected, we have $u_1=Cu_2$ on $K$ for some $C>0$.

Since $x\mapsto x^{p}, x\geq 0$, is strictly monotone increasing and $x\mapsto \p{1-x}, x\in \RR$, is strictly monotone decreasing for $p>1$, we get the desired non-negativity of the sum in the left-hand side in \eqref{eq:ADS} if \ref{ADS1}-\ref{ADS4} are fulfilled. Especially, the left-hand side is non-negative if \ref{ADS1} and \ref{ADS2} are satisfied, which is fulfilled for $u_{1}=Cu_{2}$ on $K\cup \partial K$.

Now we turn to the last assertion of the statement. Since $\phi=0=\psi$ on $\partial K$, \ref{ADS2} is fulfilled on $K\cup \partial K$. Thus, we have an equality in \eqref{eq:ADS}. This gives the desired result. 
\end{proof}
\begin{remark}
	The statement above can be generalised in the flavour of \cite[Lemma~3.3]{PP} for shifts of $u_i$, i.e., for functions $u_i +\alpha$ where $\alpha$ is a fixed constant. Since we do not need this generalisation here, we omit it.
\end{remark}

\subsection{Principal Eigenvalues on Finite Subsets}
In this subsection, we have a closer look on finite subsets of $X$. Since $X$ can be exhausted by increasing but finite subsets, the following results can be seen as a toolbox.

Let $V\sse X$, and define $\lambda_0(V)=\lambda_0(V,H)$ via
	\begin{align}\label{eq:lambda_0}
	\lambda_0(V):=\inf_{\phi\in C_c(V), \phi\neq 0}\frac{h(\phi)}{\norm{\phi}^p_{p,m}}=\inf_{\phi\in C_c(V), \norm{\phi}^p_{p,m}=1}h(\phi).
	\end{align}	

The following proposition collects and slightly generalises various results in \cite[Section~3]{ParkKimChung} and \cite[Section~4]{ParkChung}. There, only finite graphs are considered. Confer \cite[Theorem~3.9]{PP} for an analogue in the continuum. On finite graphs associated with linear Laplace-type operator such a result is also known as a Perron-Frobenius-type theorem, see \cite[Theorem~0.55]{KLW21}.
\begin{proposition}[Variational Characterisation of the Principal Eigenvalue]\label{prop:eigenvalue}
		Let \\ the set $K\sse X$ be finite. Then there exists a positive function $\phi_0\in C(K)$, i.e., $\phi_{0}=0$ on $X\setminus K$, such that
	\[ H\phi_0 = \lambda_0(K)\p{\phi_0}\qquad \text{on }K.\]
	The function $\phi_{0}$ is a minimiser of \eqref{eq:lambda_0}.
	
	Moreover, assume additionally that $K$ is connected. Then, $\lambda_0(K)$ is a principal eigenvalue on $K$, that is, there exists a strictly positive eigenfunction $\phi_0\in C(K)$ to $\lambda_{0}(K)$ on $K$.  Furthermore, $\lambda_0(K)$ is simple and any eigenvalue $\lambda>\lambda_0(K)$ of $H$ on $K$ does only have eigenfunctions which change sign.
\end{proposition}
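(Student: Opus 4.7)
The plan proceeds in four steps: existence of a nonnegative minimiser of the Rayleigh quotient in \eqref{eq:lambda_0}, derivation of the eigenvalue equation, strict positivity via Harnack when $K$ is connected, and simplicity plus the sign-changing claim via the Anane-Díaz-Saá inequality. Since $K$ is finite, $C_{c}(K)=C(K)$ is finite-dimensional, the unit sphere $\set{\phi\in C(K):\norm{\phi}_{p,m}=1}$ is compact and $h$ is continuous on it, so $\lambda_{0}(K)$ is attained at some $\phi_{0}$. The pointwise reverse triangle inequality gives $h(\abs{\phi})\leq h(\phi)$ while the $L^{p}$-norm is unaffected, so I may replace $\phi_{0}$ by $\abs{\phi_{0}}$ and assume $\phi_{0}\geq 0$. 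Differentiating numerator and denominator of the Rayleigh quotient at the critical point $\phi_{0}$ via Lemma~\ref{lem:GateauxDerivative} yields $\ip{H\phi_{0}-\lambda_{0}(K)\p{\phi_{0}}}{\psi}_{K}=0$ for every $\psi\in C_{c}(K)$; testing against indicator functions produces the eigenvalue equation $H\phi_{0}=\lambda_{0}(K)\p{\phi_{0}}$ on $K$.

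When $K$ is in addition connected, $\phi_{0}\geq 0$ satisfies $H\phi_{0}\geq \lambda_{0}(K)\phi_{0}^{p-1}$ on $K$, so the local Harnack inequality (Lemma~\ref{lem:harnackIneq}) with $f\equiv \lambda_{0}(K)$ applies. A zero of $\phi_{0}$ on $K$ would propagate to all of $K\cup\partial K$ by the last clause of that lemma, contradicting $\norm{\phi_{0}}_{p,m}=1$. Hence $\phi_{0}>0$ on $K$, and $\lambda_{0}(K)$ is a principal eigenvalue.

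I would then handle simplicity and the sign-changing claim together via \eqref{eq:ADSfinite}. Let $u\in C(K)$ be an eigenfunction to some eigenvalue $\lambda$ that does not change sign on $K$; up to a global sign, $u\geq 0$, hence $u>0$ on $K$ by Harnack. Inserting $(\phi,\psi)=(u,\phi_{0})$ into \eqref{eq:ADSfinite} and using $Lu/u^{p-1}=\lambda-c/m$ and $L\phi_{0}/\phi_{0}^{p-1}=\lambda_{0}(K)-c/m$ on $K$, the inequality reduces to
\begin{align*}
\bigl(\lambda-\lambda_{0}(K)\bigr)\sum_{x\in K}\bigl(u^{p}(x)-\phi_{0}^{p}(x)\bigr)m(x)\geq 0.
\end{align*}
Rescaling $\phi_{0}$ by a positive constant (still a $\lambda_{0}(K)$-eigenfunction by the positive $p$-homogeneity of $H$) so that $u$ and $\phi_{0}$ have equal $L^{p}$-norms makes this sum vanish, forcing equality in \eqref{eq:ADSfinite}. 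The equality clause of Proposition~\ref{prop:DiazSaaIneq} then gives $u=C\phi_{0}$ on $K$ for some $C>0$, and hence $\lambda=\lambda_{0}(K)$. This rules out non-sign-changing eigenfunctions above $\lambda_{0}(K)$. For the full simplicity of $\lambda_{0}(K)$, a sign-changing $\lambda_{0}(K)$-eigenfunction $u$ would produce, via $h(\abs{u})\leq h(u)$, a minimiser $\abs{u}$ which by the preceding steps is strictly positive on $K$; the forced equality $h(\abs{u})=h(u)$ then requires $u(x)$ and $u(y)$ to share sign on every edge inside $K$, whence $u$ has constant sign by connectedness of $K$, a contradiction, after which the nonnegative case above yields $u=C\phi_{0}$.

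The main obstacle is controlling the boundary terms in \eqref{eq:ADS} on the possibly infinite set $\partial K$; this dissolves because $u,\phi_{0}\in C(K)$ vanish on $\partial K$, collapsing the boundary sum term-by-term and letting the cleaner form \eqref{eq:ADSfinite} carry the entire simplicity and sign-changing argument.
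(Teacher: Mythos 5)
Your proof is correct and follows the same overall architecture as the paper's: compactness for existence of a minimiser, replacement by the absolute value via the reverse triangle inequality, the Gâteaux derivative (testing against single-vertex perturbations) for the Euler--Lagrange equation, Harnack for strict positivity on connected $K$, and the Anane--D\'{i}az--Sa\'{a} inequality for simplicity and the sign-changing claim. The one genuine (and pleasant) departure is the final step: the paper handles simplicity and the ``$\lambda>\lambda_{0}$ has only sign-changing eigenfunctions'' claim separately — the latter via an ADS comparison with $\epsilon\phi_{\lambda}$ for small $\epsilon$, letting the sign of $\lambda_{0}-\lambda$ force a contradiction without invoking the equality clause — whereas you unify both into a single application of \eqref{eq:ADSfinite} by rescaling $\phi_{0}$ to match the $L^{p}$-norm of the candidate $u$, so that $\sum_{K}(u^{p}-\phi_{0}^{p})m=0$ forces equality in ADS outright, and the equality clause of Proposition~\ref{prop:DiazSaaIneq} yields $u=\phi_{0}$ and hence $\lambda=\lambda_{0}$ in one stroke. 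Your version leans more heavily on the rigidity of the equality case of ADS but dispenses with the limiting argument; both are sound, and your computation $(\lambda-\lambda_{0})\sum_{K}(u^{p}-\phi_{0}^{p})m\geq 0$ is verified correctly. The constant-sign argument for $\lambda_{0}$-eigenfunctions you give (forced equality $h(\abs{u})=h(u)$ implying $\abs{\nabla_{x,y}u}=\abs{\nabla_{x,y}\abs{u}}$ on each edge, hence constant sign by connectedness) matches the paper's.
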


\begin{proof}
	Let $\lambda_0=\lambda_0(K)$. The set $\{\phi\in C(K): \norm{\phi}^p_{p,m}=1\}$ is compact since $C(K)$ is finite dimensional. Hence, there exists a non-trivial minimiser $\phi_0\in C(K)$ of $h$ with $\norm{\phi_0}_{p,m}=1$. Let $t\in (-1,1)$ and consider $\phi_{t,z}=\phi_0+t\, 1_{z}$ for some fixed $z\in K$. Then $\norm{\phi_{t,z}}_{p,m}\neq 0$ and $\lambda_0\leq h(\phi_{t,z}) / \norm{\phi_{t,z}}^p_{p,m}$, i.e.,
	\[ 0\leq h(\phi_{t,z}) - \lambda_0\norm{\phi_{t,z}}^p_{p,m},\]
	where the right-hand side has a minimum in $t=0$. Hence, using Lemma~\ref{lem:GateauxDerivative},
	\begin{align*}
		0= \frac{\dd}{\dd t}\Bigl.  h(\phi_{t,z}) - \lambda_0\norm{\phi_{t,z}}^p_{p,m}\Bigr|_{t=0} = pH\phi_0(z)m(z)- \lambda_0p\p{\phi_0(z)}m(z).
	\end{align*}	
	Thus, $H\phi_0 = \lambda_0\p{\phi_0}$ on $K$.
	
	Now we show that $\phi_0$ can in fact be chosen to be non-negative on $K$. By the reverse triangle inequality, $h(\phi_0)\geq h(\abs{\phi_0})$, and $\norm{\phi_0}_{p,m}=\norm{\abs{\phi_0}}_{p,m}$. Hence, $\lambda_0\geq h(\abs{\phi_0}) /\norm{\abs{\phi_0}}_{p,m}^p$. Clearly, we have by the definition of $\lambda_0$ that $\lambda_0\leq h(\abs{\phi_0}) /\norm{\abs{\phi_0}}_{p,m}^p$. Thus, the non-negative function $\abs{\phi_0}\in C(K)$ solves the desired equation.
	
	Now, we assume for the rest of the proof that $K$ is connected. Since $\phi_0$ is non-trivial there exists $o\in K$ such that $\abs{\phi_0(o)}>0$. Thus, by the Harnack inequality, Lemma~\ref{lem:harnackIneq}, we get that $\abs{\phi_0}>0$ on $K$.
	
	We show that $\lambda_0$ is simple: We have seen that if $\phi_0$ is an eigenfunction to $\lambda_0$ on $K$, then so is $\abs{\phi_0}$. Hence, $h(\phi_0)=h(\abs{\phi_0})$, which is equivalent to 
	\[ \sum_{x,y\in K}b(x,y)\abs{\nabla_{x,y}\phi_0}^p= \sum_{x,y\in K}b(x,y)\abs{\nabla_{x,y}\abs{\phi_0}}^p.\]
	By the reverse triangle inequality we have $\abs{\nabla_{x,y}\phi_0}\geq  \abs{\nabla_{x,y}\abs{\phi_0}}$ for all $x\sim y$ in $K$. This implies that 
	\[ \abs{\nabla_{x,y}\phi_0}=  \abs{\nabla_{x,y}\abs{\phi_0}}, \qquad x\sim y \text{ in }K,\]
	which yields either $\phi_0=\abs{\phi_0}$ on $K$ or $\phi_0=-\abs{\phi_0}$ on $K$. Altogether, any eigenfunction to $\lambda_0$ has constant sign.
	
	Let $\phi_1$ be another eigenfunction to $\lambda_0$ on $K$. By the previous consideration, we can assume without loss of generality that $\phi_0, \phi_1>0$ on $K$. Then,
	\begin{align*}
	\ip{L\phi_0}{ \frac{\phi_0^{p}-\phi_1^p}{\phi_0^{p-1}}}_{K}&+\ip{L\phi_1}{ \frac{\phi_1^{p}-\phi_0^p}{\phi_1^{p-1}}}_{K}\\
	&=\ip{(\lambda_0-\frac{c}{m})\phi_0^{p-1}}{ \frac{\phi_0^{p}-\phi_1^p}{\phi_0^{p-1}}}_{K}+ \ip{(\lambda_0-\frac{c}{m})\phi_1^{p-1}}{ \frac{\phi_1^{p}-\phi_0^p}{\phi_1^{p-1}}}_{K}\\
	&=0.
%
%
	\end{align*}
	By the Anane-D\'{i}az-Sa\'{a} inequality \eqref{eq:ADSfinite}, this implies that $\phi_0=C\,\phi_1$ for some $C>0$ and hence, $\lambda_0(K)$ is simple.
	
	We still have to show that any eigenvalue $\lambda>\lambda_0(K)$ of $H$ on $C(K)$ can only have eigenfunctions which switch sign. Let $\epsilon> 0$ and assume that $0<\phi_{\lambda}\in C(K)$ is an eigenfunction to $\lambda$. Then by the Anane-D\'{i}az-Sa\'{a} inequality \eqref{eq:ADSfinite}, we have
	\begin{align*}
	0\leq \ip{L\phi_0}{ \frac{\phi_0^{p}-\epsilon^{p}\phi_\lambda^p}{\phi_0^{p-1}}}_{K}+\ip{L\phi_\lambda}{ \frac{\epsilon^{p}\phi_\lambda^{p}-\phi_0^p}{\phi_\lambda^{p-1}}}_{K}
	=\ip{\lambda_0-\lambda}{ \phi_0^{p}-\epsilon^{p}\phi_\lambda^p}_{K}.
%
%
	\end{align*}
	Choosing $\epsilon$ small enough leads to a contradiction.
\end{proof}

A consequence of the previous proposition is the following statement. For a counterpart in the continuum see \cite[Lemma~5.1]{PR15}.
\begin{corollary}\label{cor:principal}
Let $V\sse X$ be connected and $K\subsetneq V$ be finite and connected. If $\lambda_{0}(V)\geq 0$, then $\lambda_{0}(K)>0$.

In particular, if $h$ is non-negative on $C_c(X)$, then the principal eigenvalue is strictly positive on any finite and connected subset of $X$.
\end{corollary}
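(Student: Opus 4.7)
The plan is to argue by contradiction. Suppose $\lambda_{0}(K) \leq 0$. The inclusion $C_c(K) \sse C_c(V)$ gives $\lambda_{0}(K) \geq \lambda_{0}(V) \geq 0$, so in fact $\lambda_{0}(K) = 0$. Since $K$ is finite and connected, Proposition~\ref{prop:eigenvalue} provides a strictly positive eigenfunction $\phi_{0} \in C(K)$ with $H\phi_{0} = 0$ on $K$; we view it as an element of $C_c(V)$ via extension by zero. Green's formula (Lemma~\ref{lem:GreensFormula}) then yields $h(\phi_{0}) = \ip{H\phi_{0}}{\phi_{0}}_{V} = 0$, because the sum over $K$ vanishes by the eigenvalue equation and the sum over $V \setminus K$ vanishes since $\phi_{0} \equiv 0$ there. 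So $\phi_{0}$ is already a nontrivial element of $C_c(V)$ realising $h(\phi_{0}) = 0$.

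To reach a contradiction I would perturb $\phi_{0}$ at a point outside $K$ in order to push the energy strictly below zero. Because $V$ is connected and $K \subsetneq V$, there exists $z \in V \setminus K$ with $z \sim y_{0}$ for some $y_{0} \in K$. At such $z$ one has $\phi_{0}(z) = 0$, which kills the potential contribution, and a direct computation gives
\[H\phi_{0}(z) = L\phi_{0}(z) = -\frac{1}{m(z)} \sum_{y \in K} b(z,y)\, \phi_{0}(y)^{p-1} < 0,\]
the strict inequality coming from $b(z,y_{0}) > 0$ and $\phi_{0}(y_{0}) > 0$. The Gâteaux derivative formula (Lemma~\ref{lem:GateauxDerivative}) applied to the family $\phi_{0} + t \cdot 1_{z} \in C_c(V)$ now yields
\[h(\phi_{0} + t \cdot 1_{z}) = h(\phi_{0}) + p\, t\, H\phi_{0}(z)\, m(z) + o(t) = p\, t\, H\phi_{0}(z)\, m(z) + o(t),\]
which is strictly negative for small $t > 0$, contradicting $\lambda_{0}(V) \geq 0$. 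The \emph{in particular} clause then follows by taking $V = X$.

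I do not anticipate a serious obstacle here: the content is essentially the strict monotonicity of $\lambda_{0}$ under passage to a strictly smaller connected subset, and every ingredient (the variational principal eigenvalue on finite sets, Green's formula, the Gâteaux derivative, and the local Harnack inequality used in Proposition~\ref{prop:eigenvalue}) has already been developed in this section. The only step requiring a moment of care is the existence of the vertex $z \in V \setminus K$ adjacent to $K$, which is immediate from connectedness of $V$ together with $K \subsetneq V$ by following a path from $K$ to any point of $V \setminus K$ and taking the first vertex that leaves $K$.
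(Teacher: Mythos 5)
Your proof is correct, but it takes a genuinely different route from the paper's. The paper inserts a strictly larger finite connected set $\KK$ with $K \subsetneq \KK \sse V$, obtains the chain $\lambda_0(K) \geq \lambda_0(\KK) \geq \lambda_0(V) \geq 0$, and then shows strictness of the first inequality by comparing the principal eigenfunctions $\phi \in C(K)$ and $\tilde\phi \in C(\KK)$: an application of Green's formula reduces the assumed equality $\lambda_0(K) = \lambda_0(\KK)$ to an equality case in the pointwise Picone inequality (Lemma~\ref{lem:Lemma41}), which forces $\phi = C\tilde\phi$ on $\KK$ and contradicts $\phi = 0$ on $\KK \setminus K \neq \emptyset$. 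You instead work directly with the principal eigenfunction $\phi_0$ of $K$: from $\lambda_0(K) = 0$ you read off $h(\phi_0) = 0$ via Green's formula, then perturb $\phi_0$ at a vertex $z \in V \setminus K$ adjacent to $K$ and use the G\^{a}teaux derivative (Lemma~\ref{lem:GateauxDerivative}) to find $\frac{\dd}{\dd t} h(\phi_0 + t\, 1_z)\big|_{t=0} = p\, H\phi_0(z)\, m(z) < 0$, producing a test function with $h < 0$, contradicting $\lambda_0(V) \geq 0$. Your approach is more elementary in that it bypasses Picone's inequality entirely and replaces the rigidity analysis by a first-variation argument; the paper's approach fits more cleanly into its systematic use of Picone/Anane-D\'{i}az-Sa\'{a} machinery throughout the section, but for this particular statement your perturbation argument is arguably more transparent. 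Both rely on Proposition~\ref{prop:eigenvalue} to produce the strictly positive principal eigenfunction on $K$, so neither avoids the finite-dimensional variational step.
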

\begin{proof}
	Since $V\neq K$ there exists a finite and connected  subset $\KK\sse V$ such that $K\subsetneq \KK$. We clearly have $\lambda_{0}(K)\geq \lambda_{0}(\KK)\geq \lambda_{0}(V)$. So, it remains to show that the first inequality is strict. 
	
	By the previous statement, Proposition~\ref{prop:eigenvalue}, there exist strictly positive eigenfunctions $\phi\in C(K)$ and $\tilde{\phi}\in C(\KK)$ to the principal eigenvalues $\lambda_{0}(K)$ and $\lambda_{0}(\KK)$, respectively. Furthermore, we have
\begin{align*}
	(\lambda_{0}(K)&-\lambda_{0}(\KK))\norm{\phi}^{p}_{p,m}=h(\phi)-\lambda_{0}(\KK)\norm{\phi}^{p}_{p,m} \\
	&=\frac{1}{2}\sum_{x,y\in X}b(x,y)\abs{\nabla_{x,y}\phi}^{p}+ \ip{\left(\frac{c}{m}-\lambda_{0}(\KK)\right)\tilde{\phi}^{p-1}}{\frac{\phi^{p}}{\tilde{\phi}^{p-1}}}_{K}\\
	&=\frac{1}{2}\sum_{x,y\in X}b(x,y)\abs{\nabla_{x,y}\phi}^{p}- \ip{L\tilde{\phi}}{\frac{\phi^{p}}{\tilde{\phi}^{p-1}}}_{K}.
\end{align*}	
	
Assume that $\lambda_{0}(K)=\lambda_{0}(\KK)$, then the calculation above yields
\begin{align}\label{eq:cor3}
	\frac{1}{2}\sum_{x,y\in X}b(x,y)\abs{\nabla_{x,y}\phi}^{p}= \ip{L\tilde{\phi}}{\frac{\phi^{p}}{\tilde{\phi}^{p-1}}}_{K}.
\end{align}

By Green's formula, Lemma~\ref{lem:GreensFormula}, we have for the right-hand side
\begin{multline*}
	\ip{L\tilde{\phi}}{\frac{\phi^{p}}{\tilde{\phi}^{p-1}}}_{K}= \frac{1}{2}\sum_{x,y\in K}b(x,y)\p{\nabla_{x,y}\tilde{\phi}}\nabla_{x,y}\frac{\phi^{p}}{\tilde{\phi}^{p-1}}\\
	+ \sum_{x\in K, y\in \partial K}b(x,y)\p{\nabla_{x,y}\tilde{\phi}}\frac{\phi^{p}(x)}{\tilde{\phi}^{p-1}(x)}.
\end{multline*}	
For all $y\in\partial\KK$ and $x\in \KK$, we have
	\begin{align}\label{eq:cor2}
		\abs{\nabla_{x,y}\phi}^{p}-\p{\nabla_{x,y}\tilde{\phi}}\frac{\phi^{p}(x)}{\tilde{\phi}^{p-1}(x)}=\phi(x)^{p}-\tilde{\phi}(x)^{p-1}\frac{\phi^{p}(x)}{\tilde{\phi}^{p-1}(x)}= 0.
	\end{align}	
Thus, we have equality in \eqref{eq:cor3} for the summands outside of $\KK\times \KK$.

By the pointwise Picone inequality, Lemma~\ref{lem:Lemma41}, with $a=\phi(x)$, $b=\phi(y)$ and $c=\tilde{\phi}(y)/\tilde{\phi}(x)$, we infer that
	\begin{align}\label{eq:cor}
		\abs{\nabla_{x,y}\phi}^{p}-\p{\nabla_{x,y}\tilde{\phi}}\nabla_{x,y}\frac{\phi^{p}}{\tilde{\phi}^{p-1}}\geq 0,\qquad x,y\in \KK.
	\end{align}
Hence, by	\eqref{eq:cor2}, we have equality in \eqref{eq:cor3} if and only if we have equality in \eqref{eq:cor}. By Lemma~\ref{lem:Lemma41}, we have equality in  \eqref{eq:cor} if and only if $\phi= C\tilde{\phi}$ on $\KK$ for some constant $C>0$. But since $\phi= 0$ on $\KK\setminus K\neq \emptyset$, we get a contradiction to the positivity of $\tilde{\phi}$ in $\KK$. Hence, $\lambda_{0}(K)>\lambda_{0}(\KK)$.
\end{proof}

\subsection{Poisson-Dirichlet Problems on Finite Subsets}
Under the additional assumption that $h$ is positive on specific subsets, we can show  the existence of certain Poisson-Dirichlet problems. This is done next and the following lemma is a discrete analogue of \cite[Proposition~3.6 and Proposition~3.7]{PP}. The following lemma is needed in characterisations of the maximum principle on finite subsets, Proposition~\ref{prop:PP}. 
See also \cite{ParkKimChung} for finite graphs.

\begin{lemma}[Solutions of Poisson-Dirichlet Problems]\label{lem:J}
Let $K$ be finite. Let  $g\in C(K)$, and $f\in C_c(X\setminus K)$. Furthermore, define
\[K_{f}=\set{\phi\in C(X): \phi=f \text{ on } X\setminus K}\sse C_c(X).\]
Assume that $h(\phi)>0$ for all $\phi\in K_{f}$. 
Then the functional $j=j_{g}\colon \DD\to \RR$ defined via
	\[ j(\phi)= h(\phi)-p\ip{g}{\phi}_{K}, \qquad \phi\in \DD,\]
	attends a minimum in $K_{f}$. Moreover, any minimiser of $j$ on $K_{f}$ solves the Poisson-Dirichlet problem
	\begin{align*}
		\begin{cases}
			Hu&= g \quad\text{ on } K \\
			\phantom{H}u&=f \quad \text{ on } X \setminus K.		
		\end{cases}
	\end{align*}
	Furthermore, if $f \geq 0$ on $\partial K$ or $g\geq 0$ on $K$ then there exist minimiser which are non-negative on $K$. 
	
	In particular, if also $K$ is connected, then $g\gneq 0$ on $K$ and $f \geq 0$ on $\partial K$, or $g\geq 0$ on $K$, $\supp f \cap \partial K = \set{x_0}$ and $f(x_0)>0$, imply that the minimiser is unique.
\end{lemma}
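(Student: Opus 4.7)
The proof splits into four tasks---existence of a minimizer, derivation of the Euler-Lagrange equation, production of a non-negative minimizer, and uniqueness. Since $K$ is finite and $f \in C_{c}(X\setminus K)$, every $\phi \in K_{f}$ is supported on the fixed finite set $K \cup \supp f$; thus $K_{f}$ sits inside a finite-dimensional affine space on which $j$ is continuous, and the existence of a minimum will follow from coercivity.

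For existence, I would take a minimizing sequence $\phi_{n} = f + s_{n}\tilde\psi_{n} \in K_{f}$ with $\tilde\psi_{n} \in C_{c}(K)$, $\norm{\tilde\psi_{n}}_{p,m} = 1$ and $s_{n} \to \infty$, and extract $\tilde\psi_{n} \to \tilde\psi \in C_{c}(K)$ from the compact finite-dimensional unit sphere. The $p$-homogeneity of $h$ gives $h(\phi_{n}) = s_{n}^{p}\, h(\tilde\psi_{n} + f/s_{n})$, whose right-hand side tends to $s_{n}^{p}\, h(\tilde\psi)$ to leading order; the hypothesis $h > 0$ on $K_{f}$ forces $h(\tilde\psi) \geq 0$. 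When strict, the $s_{n}^{p}$-growth dominates the linear-in-$s_{n}$ term in $j$, giving $j(\phi_{n}) \to \infty$. The main technical obstacle is the borderline case $h(\tilde\psi) = 0$ with $\tilde\psi \neq 0$, which I would exclude by combining strict convexity of $|\cdot|^{p}$ for $p > 1$ with $h > 0$ on $K_{f}$ to rule out a zero asymptotic direction in $C_{c}(K)$. Coercivity and continuity on the finite-dimensional slice then produce a minimizer $\phi_{0}$ by a Weierstrass-type argument. The Euler-Lagrange equation follows at once from Lemma~\ref{lem:GateauxDerivative}: for any $\psi \in C_{c}(K)$ one has $\phi_{0} + t\psi \in K_{f}$ for all $t \in \RR$, so differentiating at $t=0$ yields $p\ip{H\phi_{0} - g}{\psi}_{K} = 0$, whence $H\phi_{0} = g$ on $K$, while $\phi_{0} = f$ on $X \setminus K$ is built into $K_{f}$.

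For a non-negative minimizer, the plan is to replace $\phi_{0}$ by $\tilde\phi := \phi_{0} \vee 0$ on $K$, keeping $\tilde\phi = f$ on $X\setminus K$, so that $\tilde\phi \in K_{f}$. The contraction $|a^{+} - b^{+}| \leq |a - b|$ yields $|\nabla_{x,y}\tilde\phi| \leq |\nabla_{x,y}\phi_{0}|$ for $x,y \in K$, and the assumption $f \geq 0$ on $\partial K$ extends this bound to edges in $K \times \partial K$; combined with $\ip{g}{\tilde\phi - \phi_{0}}_{K} \geq 0$ when $g \geq 0$, a careful accounting of the potential term at vertices with $c(x) < 0$ gives $j(\tilde\phi) \leq j(\phi_{0})$. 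Under the alternative hypothesis $g \geq 0$, passing to $|\phi_{0}|$ works similarly via $h(|\phi_{0}|) \leq h(\phi_{0})$ and $\ip{g}{|\phi_{0}|}_{K} \geq \ip{g}{\phi_{0}}_{K}$. Minimality then forces the substitution to coincide with $\phi_{0}$, proving $\phi_{0} \geq 0$ on $K$.

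For uniqueness under $K$ connected, let $u_{1}, u_{2}$ be two non-negative minimizers. Both are strictly positive on $K$ by Lemma~\ref{lem:harnackIneq}: in the first case, $Hu_{i} = g \gneq 0$ rules out a zero point, and in the second case, $u_{i}(x_{0}) = f(x_{0}) > 0$ propagates into $K$ via the contrapositive of the zero-propagation part of that lemma. From $Hu_{i} = g$ we obtain $Lu_{i}/u_{i}^{p-1} = g/u_{i}^{p-1} - c/m$, hence
\begin{align*}
\ip{\frac{Lu_{1}}{u_{1}^{p-1}} - \frac{Lu_{2}}{u_{2}^{p-1}}}{u_{1}^{p} - u_{2}^{p}}_{K} = \ip{g\bigl(u_{1}^{1-p} - u_{2}^{1-p}\bigr)}{u_{1}^{p} - u_{2}^{p}}_{K} \leq 0,
\end{align*}
since $(u_{1}^{1-p} - u_{2}^{1-p})(u_{1}^{p} - u_{2}^{p}) \leq 0$ and $g \geq 0$. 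On the other hand, Proposition~\ref{prop:DiazSaaIneq} bounds the left-hand side below by the boundary sum, which is non-negative because the common boundary value $u_{1}(y) = f(y) = u_{2}(y)$ for $y \in \partial K$ places every pair $(x,y)$ in the equality case of Lemma~\ref{lem:Lemma41}. Sandwiching yields equality throughout Proposition~\ref{prop:DiazSaaIneq}, hence $u_{1} = C u_{2}$ on $K$ for some $C > 0$. In the first case, any $x^{*} \in K$ with $g(x^{*}) > 0$ forces $u_{1}(x^{*}) = u_{2}(x^{*})$ from the vanishing of the left-hand side above; in the second case, equality in the boundary summand at a pair $(x, x_{0})$ with $x \sim x_{0}$ and $f(x_{0}) > 0$ would be strictly violated for $C \neq 1$ by strict monotonicity of $t \mapsto \p{1-t}$. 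Either way $C = 1$, giving uniqueness.
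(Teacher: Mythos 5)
Your overall strategy (coercivity $\to$ Euler--Lagrange via the G\^{a}teaux derivative $\to$ pass to a non-negative competitor $\to$ uniqueness via Anane--D\'iaz--Sa\'a) is the same as the paper's, and the existence, Euler--Lagrange, and uniqueness portions are essentially fine. But there is a genuine gap in the step that produces a non-negative minimiser, and it is worth being precise about why.

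You propose to replace $\phi_0$ by $\tilde\phi=\phi_0\vee 0$ on $K$. The gradient part of $h$ indeed does not increase (the map $t\mapsto t^+$ is $1$-Lipschitz, and $f\ge 0$ on $\partial K$ handles the $K\times\partial K$ edges), and the linear term helps when $g\ge 0$. The problem is the potential term: at a vertex $x\in K$ with $\phi_0(x)<0$ one has $|\tilde\phi(x)|^p=0<|\phi_0(x)|^p$, so if $c(x)<0$ the summand $c(x)|\tilde\phi(x)|^p$ is strictly \emph{larger} than $c(x)|\phi_0(x)|^p$. Since the lemma allows $c$ to be negative, the ``careful accounting'' you invoke is not available: there is no way to offset this increase against the decrease in the gradient term in general, so $j(\tilde\phi)\le j(\phi_0)$ is not established. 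The paper instead replaces $\phi_0$ by $|\phi_0|$. The key advantage is that the potential term is \emph{unchanged} (because $\bigl||\phi_0|(x)\bigr|^p=|\phi_0(x)|^p$), while the gradient term still decreases by the reverse triangle inequality, and $|\phi_0|\in K_f$ as soon as $f\ge 0$; with $g\ge 0$ the linear term also helps. So the modulus-replacement is not merely a stylistic variant of $\phi_0\vee 0$: it is exactly what sidesteps the sign-indefinite potential. (Note that the paper's own proof of this part uses $f\ge 0$ and $g\ge 0$ simultaneously.)

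Two smaller remarks. First, in the coercivity step you candidly flag the borderline case $h(\tilde\psi)=0$ with $\tilde\psi\neq 0$ as ``the main technical obstacle'' but do not actually close it; the appeal to strict convexity of $|\cdot|^p$ does not do the job because $h$ is not convex when $c$ is sign-indefinite. (The paper's own coercivity paragraph is similarly terse, scaling an element of $K_f$ and using $h>0$ on $K_f$; in all applications the stronger hypothesis $\lambda_0(K\cup\supp f)>0$ is available, which makes coercivity immediate, so this looseness is harmless in context.) Second, in the uniqueness step the non-negativity of the boundary sum in \eqref{eq:ADS} under $u_1=u_2$ on $\partial K$ is correct, but the reason is not ``the equality case of Lemma~\ref{lem:Lemma41}'' (that lemma governs the interior sum, not the boundary sum): it follows from the opposite monotonicities of $t\mapsto t^p$ and $t\mapsto\p{1-t}$, exactly as in the discussion following \eqref{eq:ADS}. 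Your concluding argument that $C=1$ via strict monotonicity of $\p{\cdot}$ is a valid and slightly more direct alternative to the paper's explicit rewriting of the equation at the boundary.
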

\begin{proof}
%
	For all $\phi\in K_f$ with $\norm{\phi}_{p,m}=1$ we have for any $C>0$,
	\[j(C\,\phi)= C^ph(\phi)-C\,p\ip{g}{\phi}_{K}.\]
	Since $h(\phi)>0$, we have $j(C\,\phi)\to\infty$ as $C\to\infty$, i.e., $j$ is coercive. In particular, if $(\phi_n)$ is a minimising sequence of $j$, it is bounded. Thus, on the closed subset $K_f$ of the finite dimensional space $C(K\cup \supp f)$, $(\phi_n)$ has a convergent subsequence which converges to some $\phi_0\in K_f$.
	
	Moreover, since $K$ is finite, $\psi\mapsto p \ip{g}{\psi}_{K}$ is a bounded linear functional on $C(K)$ and thus, continuous. Moreover, also $h$ is lower semi-continuous. Altogether, $j$ is lower semi-continuous. Hence, $-\infty < j(\phi_0)\leq\lim_{n_{k}\to\infty}j(\phi_{n_k})=\inf_{\phi\in K_f}j(\phi).$ 

	Now, we show that any minimiser of $j$ on $K_f$ solves the Poisson-Dirichlet problem. Let $\phi$ be such a minimiser. Since for any $z\in K$ we have $\phi+t\, 1_z\in K_{f}\sse C_c(X)$ where $t\in \RR$, we calculate using Lemma~\ref{lem:GateauxDerivative},
	\begin{align*}
	0=\bigl. \frac{\dd }{\dd t}j(\phi+t\,1_z)\bigr|_{t=0}= pH\phi(z)m(z)-pg(z)m(z),
	\end{align*}
	which shows that $\phi$ solves the corresponding Poisson-Dirichlet problem.
	
	Let now assume that $f, g\geq 0$. For all $\psi\in K_{f}$ we get by the reverse triangle inequality, that $h(\psi)\geq h(\abs{\psi})$. Since $f$ is non-negative, we have $\psi=\abs{\psi}=f$ on $X\setminus K$, and $\abs{\psi}\in K_{f}$. Since $g$ is non-negative, we also get $j(\psi)\geq j(\abs{\psi})$. Thus, there exist minimiser which are non-negative on $K$. 
	
		The uniqueness follows from the Anane-D\'{i}az-Sa\'{a} inequality, Proposition~\ref{prop:DiazSaaIneq}. Here are the details: Let $u,v$ be such that $Hu=Hv=g\geq 0$ on $K$ and $u=v=f\geq 0$ on $\partial K$. By either $g\gneq 0$ on $K$ or $f(x_0)>0$ for some $x_0\in \partial K$, we get $u\neq 0\neq v$. By the Harnack inequality, we get that $u,v>0$ on $K$. Then, \eqref{eq:ADS} is fulfilled and we can apply Proposition~\ref{prop:DiazSaaIneq}. This gives
	
		\begin{align*}
	0&\leq \ip{g-\frac{c}{m}u^{p-1}}{ \frac{u^{p}-v^p}{u^{p-1}}}_{K}+\ip{g-\frac{c}{m}v^{p-1}}{ \frac{v^{p}-u^p}{v^{p-1}}}_{K}\\
	&=\ip{g(v^{p-1}-u^{p-1})}{ \frac{u^{p}-v^p}{u^{p-1}v^{p-1}}}_{K}\\
	&\leq 0.
%
	\end{align*}
	
	Again by Proposition~\ref{prop:DiazSaaIneq}, this implies that $u=C\,v$ on $K$ for some $C>0$. If $g(x)>0$ for some $x\in K$, then we get immediately from the calculation above that that $C=1$ and $u=v$ on $K$. 
	Thus, assume that $g=0$ on $K$ and $\supp f \cap \partial K = \set{x_0}$. Now $Hv=0$ on $K$, and $v= f\geq 0$ on $\partial K$, can be rewritten for all $x\in K$ as
	\begin{align*}
		\sum_{y\in K}b(x,y)\p{\nabla_{x,y}v}+c(x)v^{p-1}(x)
		&= \sum_{y\in \partial K}b(x,y)\p{v(x)-f(y)}\\ 
		&=b(x,x_0)\p{v(x)-f(x_0)}.
	\end{align*}
	On the other side, $u=Cv$, $Hu=0$ and $u= f\geq 0$ on $\partial K$, implies
	\begin{align*}
		\sum_{y\in K}b(x,y)\p{\nabla_{x,y}v} +c(x)v^{p-1}(x)
		&= \sum_{y\in \partial K}b(x,y)\p{v(x)-\frac{f(y)}{C}}\\ 
		&=b(x,x_0)\p{v(x)-\frac{f(x_0)}{C}}.
	\end{align*}	
	By the monotony of $x\mapsto \p{x}$ on $\RR$, we get $C=1$.
\end{proof}

\begin{remark}
	If we assume in the statement above that $K$ is infinite but $c\gneq 0$ on $K_f$ and $c\geq 0$ on $X$, then we can mimic the argumentation in Lemma~\ref{lem:J} on Sobolev-type spaces on graphs. Since they are reflexive Banach spaces, and $j$ is lower semi-continuous and coercive on $K_f$, we can use \cite[Theorem~1.2]{Struwe} which yields the existence of a minimiser. 
	
	In the case $p\geq 2$, the existence of a solution of the Poisson-Dirichlet problem in Lemma~\ref{lem:J} can also be proven differently: One might use the non-linear Fredholm alternative, see \cite[Theorem~12.10]{Appell}, which yields that the restriction of $H$ to any compact set is surjective. Hence, the Poisson-Dirchlet problem can be solved. 
\end{remark}

\subsection{Characterisations of the Maximum Principle on Finite Subsets}
The basic strategy to prove the Agmon-Allegretto-Piepenbrink theorem, Theorem~\ref{thm:AP}, will be to analyse increasing exhaustions of finite and connected subsets of $X$. Adding on every such subset a potential that decreases as the exhaustion increases, and analysing the corresponding limit will then yield the proof. Thus, we have to study properties of energy functionals which are not only non-negative, but strictly positive on a finite and connected  subset $K$, i.e., $\lambda_{0}(K)>0$. We show here that this is equivalent to the existence of non-negative superharmonic functions in $C(K)$. Moreover, $\lambda_{0}(K)>0$ is also equivalent to the validity of the maximum principle on $K$. 

For the following proposition confer also \cite[Theorem~4.2]{ParkKimChung} for finite graphs. Confer \cite[Theorem~5]{GS98} and \cite[Theorem~3.10]{PP} for analogue results in the continuum. Moreover, we refer to the monograph \cite{PS07} for details and history of the maximum principle  in the continuum.

Let us define what we actually mean by  the maximum principle: Let $V\sse X$. We say that $H$ satisfies 
	\begin{itemize}
		\item the \emph{weak maximum principle} on $V$ if for any function $s\in \FF(V)$ such that $Hs\geq 0$ on $V$ and $s\geq 0$ on $\partial V$ we have $s\geq 0$ on $V$, and
		\item the \emph{strong maximum principle} on $V$ if for any function $s\in \FF(V)$ such that $Hs\geq 0$ on $V$ and $s\geq 0$ on $\partial V$ we have either $s> 0$ or $s=0$ on $V$. 
	\end{itemize}		

\begin{proposition}[Characterisations of the Maximum Principle]\label{prop:PP}
	Let $K\sse X$ be finite. Consider the following assertions:
	\begin{enumerate}[label=(\roman*)]
		\item\label{prop:PP1} $H$ satisfies the weak maximum principle on $K$.
		\item\label{prop:PP2} $H$ satisfies the strong maximum principle on $K$.		
		\item\label{prop:PP0} The principal eigenvalue on $K$ is positive, i.e., $\lambda_0(K)>0$. 
		\item\label{prop:PP3} For any non-negative function $g\in C(K)$ there exists a non-negative function $u\in C(K)$ such that $Hu=g$ on $K$. This function is a minimiser of the functional $j_{g}$ defined in Lemma~\ref{lem:J} on $C(K)$. The minimiser is unique for $g=0$, and can be chosen to be strictly positive on $K$ if $g\gneq 0$.
		\item\label{prop:PP4} For any non-negative function $g\in C(K)$ there exists a unique non-negative function $u\in C(K)$ such that $Hu=g$ on $K$, which is strictly positive on $K$ if $g\neq 0$. 
	\end{enumerate}
Then, \ref{prop:PP2} $\iff$ \ref{prop:PP0} $\implies$ \ref{prop:PP3}, \ref{prop:PP2} $\implies$ \ref{prop:PP1}, and \ref{prop:PP4} $\implies$ \ref{prop:PP3}.  

Furthermore, if $K$ is connected, then all the assertions are equivalent.
\end{proposition}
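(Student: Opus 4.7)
The plan is to establish the four implications stated in the general case, and then close the remaining directions for connected $K$.

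The implication (ii)~$\implies$~(i) is immediate from the definitions. For (ii)~$\implies$~(iii) I argue by contradiction: if $\lambda_{0}(K) \leq 0$, Proposition~\ref{prop:eigenvalue} supplies a non-trivial non-negative $\phi_{0} \in C(K)$ with $H\phi_{0} = \lambda_{0}(K)\p{\phi_{0}} \leq 0$ on $K$. Using the odd $1$-homogeneity of $H$, the function $s := -\phi_{0}$ then satisfies $Hs \geq 0$ on $K$ and $s = 0$ on $\partial K$, so the strong maximum principle forces either $s > 0$ on $K$ or $s = 0$ on $K$; since $s \leq 0$, only the second alternative is possible, contradicting $\phi_{0} \not\equiv 0$.

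For (iii)~$\implies$~(ii) I assume $\lambda_{0}(K) > 0$ and first prove the weak maximum principle. Given $s \in \FF(K)$ with $Hs \geq 0$ on $K$ and $s \geq 0$ on $\partial K$, set $v := -s$ and suppose for contradiction that $v_{+} := \max(v,0) \not\equiv 0$. By Green's formula (Lemma~\ref{lem:GreensFormula}),
\begin{align*}
\ip{Hv}{v_{+}}_{K} &= \frac{1}{2}\sum_{x,y \in K} b(x,y)\p{\nabla_{x,y}v}(\nabla_{x,y}v_{+}) + \sum_{x \in K} c(x)\p{v(x)}v_{+}(x) \\
&\quad + \sum_{x \in K,\, y \in \partial K} b(x,y)\p{\nabla_{x,y}v}v_{+}(x).
\end{align*}
A case distinction on the signs of $v(x), v(y)$ yields the pointwise estimate $\p{\nabla_{x,y}v}(\nabla_{x,y}v_{+}) \geq \abs{\nabla_{x,y}v_{+}}^{p}$ for $x, y \in K$; since $v(y) \leq 0$ on $\partial K$, the analogous check gives $\p{\nabla_{x,y}v}v_{+}(x) \geq v_{+}(x)^{p}$ for boundary pairs; and the identity $\p{v(x)}v_{+}(x) = v_{+}(x)^{p}$ holds on all of $K$. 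Summing produces $\ip{Hv}{v_{+}}_{K} \geq h(v_{+}) \geq \lambda_{0}(K)\norm{v_{+}}_{p,m}^{p} > 0$, which contradicts $\ip{Hv}{v_{+}}_{K} \leq 0$. Hence $s \geq 0$ on $K$, and the Harnack inequality (Lemma~\ref{lem:harnackIneq}) then upgrades this to the strong maximum principle on connected components.

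For (iii)~$\implies$~(iv) I invoke Lemma~\ref{lem:J} with boundary data $f = 0$; its coercivity hypothesis $h(\phi) > 0$ for non-trivial $\phi \in C(K)$ is precisely $\lambda_{0}(K) > 0$. This produces the required non-negative minimiser $u$ of $j_{g}$ on $C(K)$ solving $Hu = g$ on $K$. Uniqueness for $g = 0$ is the strict positivity of $h$ on $C(K) \setminus \set{0}$, and strict positivity of $u$ for $g \gneq 0$ follows from Harnack, as $u \equiv 0$ would force $g = Hu = 0$. For (v)~$\implies$~(iv) I first reduce to (iii): applying (v) with $g \equiv 1$ produces a strictly positive $u \in C(K)$ with $Hu = 1$, and Picone's inequality (Lemma~\ref{lem:Picone}) then gives, for every $\phi \in C(K)$,
\[h(\phi) \geq \ip{Hu}{\frac{\abs{\phi}^{p}}{u^{p-1}}}_{K} = \ip{1}{\frac{\abs{\phi}^{p}}{u^{p-1}}}_{K} \geq \frac{\norm{\phi}_{p,m}^{p}}{(\max_{K} u)^{p-1}},\]
whence $\lambda_{0}(K) > 0$, and (iv) then follows from (iii)~$\implies$~(iv).

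Finally, for connected $K$ the remaining implication needed to close the loop is (iv)~$\implies$~(iii): for $g = 0$ the assertion of (iv) is that $0$ is the unique minimiser of $h$ on the finite-dimensional space $C(K)$, so $h$ is strictly positive on $C(K) \setminus \set{0}$, and compactness of the unit sphere lifts this to $\lambda_{0}(K) > 0$. The strengthening (iii)~$\implies$~(v) then follows from the connected-case uniqueness statement in Lemma~\ref{lem:J}, which itself rests on the Anane-D\'{i}az-Sa\'{a} inequality (Proposition~\ref{prop:DiazSaaIneq}). The main obstacle throughout is the sign bookkeeping in (iii)~$\implies$~(ii): all four sign combinations of $v(x), v(y)$ together with the boundary interactions must be handled simultaneously in order to recover the full energy $h(v_{+})$ as a lower bound for $\ip{Hv}{v_{+}}_{K}$, which is precisely where the quasi-linear nature of $H$ prevents the standard linear argument from transplanting without modification.
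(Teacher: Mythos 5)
Your overall strategy tracks the paper's, and the individual implications you prove are correct, but there is one omission that leaves the equivalence loop open, plus some noteworthy departures from the paper's route.

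\textbf{The gap.} To conclude that \emph{all} five assertions are equivalent for connected $K$, you must get \emph{out} of (i), but your proof only records (ii)~$\implies$~(i). The paper supplies the missing arrow (i)~$\implies$~(ii) as a direct consequence of the Harnack inequality (Lemma~\ref{lem:harnackIneq}): if the weak maximum principle yields $s\ge 0$ on connected $K$, then $Hs\ge 0$ together with Lemma~\ref{lem:harnackIneq} forces $s>0$ or $s\equiv 0$. You in fact invoke exactly this Harnack upgrade inside your (iii)~$\implies$~(ii) step, so the fix is a one-line relabelling, but as written the chain (i) $\implies$ (ii) $\implies$ $\cdots$ is not closed and the claim that (i) is equivalent to the rest is unproved. (Your plan also leaves the reader to assemble the implications: (ii)~$\iff$~(iii), (iii)~$\implies$~(iv)~$\implies$~(iii), (iii)~$\iff$~(v) gives equivalence of (ii)--(v); with (i)~$\implies$~(ii) added, (i) joins.)

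\textbf{Where you diverge from the paper, usefully.} For (v)~$\implies$~(iii) you apply (v) to $g\equiv 1$, obtain $u>0$ with $Hu=1$, and use Picone (via Lemma~\ref{lem:groundstate}) to deduce $\lambda_0(K)\ge (\max_K u)^{-(p-1)}>0$. The paper instead takes a positive principal eigenfunction $\phi_0$ (requiring $K$ connected), sets $C=\max_K u/\phi_0$, and runs the Anane--D\'iaz--Sa\'a inequality (Proposition~\ref{prop:DiazSaaIneq}). Your route is shorter, gives a quantitative bound, and does not need connectedness; it also closes the paper's tacit gap that (v)~$\implies$~(iv) is ``trivial'' even though the minimiser clause in (iv) is not obviously contained in (v). You also make (iv)~$\implies$~(iii) explicit (unique minimiser of $j_0=h$ on the finite-dimensional $C(K)$ rules out $\lambda_0\le 0$), whereas the paper proves (iii)~$\implies$~(iv), (iii)~\&~(iv)~$\implies$~(v), (v)~$\implies$~(iii) but never shows (iv) by itself implies anything. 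Your (iii)~$\implies$~(ii) step, bounding $\ip{Hv}{v_+}_K\ge h(v_+)$ after the sign case-split, is substantively the paper's argument with $\phi=v\wedge 0$ replaced by $v_+=(-s)\vee 0$; both hinge on the same pointwise inequality and both, like the paper, really only yield the strong maximum principle componentwise for disconnected $K$.
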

%
%
\begin{proof} We set $\lambda_0= \lambda_0(K)$.

\ref{prop:PP2} $\implies$ \ref{prop:PP1}, and \ref{prop:PP4} $\implies$ \ref{prop:PP3} are trivial.

	\ref{prop:PP2} $\implies$ \ref{prop:PP0}: We prove the contraposition, therefore let us assume that $\lambda_0\leq 0$. By Proposition~\ref{prop:eigenvalue}, $\lambda_0$ is an eigenvalue of $H$ on $K$ with positive eigenfunction $\phi_0\in C(K)$. Let $\psi_0= -\phi_0$. Then,
	\[ H\psi_0 = \lambda_0\p{\psi_0}\geq 0\qquad \text{on }K,\]
	and $\psi_0 \leq 0$ on $K$. This contradicts \ref{prop:PP2}.
	
	\ref{prop:PP0} $\implies$ \ref{prop:PP2}: We prove the contraposition. Therefore let $v\in \FF(K)$ such that $Hv\geq 0$ on $K$, $v\geq 0$ on $\partial K$ and $v(x_m)\leq 0$ for some $x_m\in K$. Let $\phi\in C_c(K)=C(K)$ be defined via $\phi=v\wedge 0$ on $K$. Then, $\phi\leq 0$ and $ \phi Hv\leq 0$ on $X$. Moreover, 
	\[\sum_{x\in X}c(x)\p{v(x)}\phi(x)=\sum_{x\in K}c(x)\abs{\phi(x)}^{p}.\]
	Furthermore, for all $x\in X$ and $y\sim x$ we have
	\[\p{\nabla_{x,y}v}\phi(x)\geq \p{\nabla_{x,y}\phi}\phi(x),\]
	which implies that 
	\[\sum_{x\in X}Hv(x)\phi(x)\geq \sum_{x\in X}H\phi(x)\phi(x).\]
	Altogether, $h(\phi)\leq 0$ which gives $\lambda_0\leq 0$.
	
	 \ref{prop:PP2} \& \ref{prop:PP0} $\implies$ \ref{prop:PP3}: Firstly, the existence follows from Lemma~\ref{lem:J} which is applicable due to \ref{prop:PP0} (note that here $f=0$).
	
	Secondly, we show the uniqueness for $g=0$: Let $0\leq u, v\in C(K)$ be such that $Hu=Hv=g$ on $K$.  By \ref{prop:PP2}, we get that these solutions are either strictly positive or zero on $K$. If $u=0$ then we must have $g=0$. If there would be $v\neq 0$ such that $Hv=0$, then this would imply $\lambda_0\leq 0$ which contradicts the assumption $\lambda_0>0$, i.e., \ref{prop:PP0}. Hence, we have uniqueness for $g=0$. 
	
	Thirdly, if $g\gneq 0$, then by the discussion before and \ref{prop:PP2}, we get $u>0$ on $K$ for any $0\leq u\in C(K)$ such that $Hu=g$ on $K$.

	Now, we additionally assume that $K$ is connected.
	
	\ref{prop:PP1} $\implies$ \ref{prop:PP2}: This is a direct consequence of the Harnack inequality, Lemma~\ref{lem:harnackIneq}.
	
	\ref{prop:PP0} \& \ref{prop:PP3} $\implies$ \ref{prop:PP4}: The existence is ensured by \ref{prop:PP3} as well as the uniqueness for $g=0$. Since $K$ is connected,  we get the uniqueness for $g\gneq 0$ by Lemma~\ref{lem:J} which is applicable  because of \ref{prop:PP0}. 
	
	
	\ref{prop:PP4} $\implies$ \ref{prop:PP0}: By \ref{prop:PP4}, we can assume that $0<u\in C(K)$ and $0\lneq g\in C(K)$ be such that $Hu=g$ on $K$. Let $\phi_0$ be an eigenfunction to $\lambda_0$ on $K$. Since $K$ is connected, we have by Lemma~\ref{prop:eigenvalue} that $\phi_0>0$ on $K$. Let $C=\max_{x\in K} u(x)/\phi_0(x)$, then $u\leq C\phi_{0}$ on $K$ and by Proposition~\ref{prop:DiazSaaIneq}
	\begin{align*}
	0&\leq \ip{Lu}{\frac{u^{p}-C^p\phi_0^p}{u^{p-1}}}_{K}+ \ip{L(C\phi_0)}{\frac{C^p\phi_0^p-u^{p}}{C^{p-1}\phi_{0}^{p-1}}}_{K}\\
	&\leq \ip{\frac{c}{m}}{C^p\phi_0^p-u^{p}}_{K}+\ip{\lambda_{0}-\frac{c}{m}}{C^p\phi_0^p-u^{p}}_{K}\\
	&=\ip{\lambda_{0}}{C^p\phi_0^p-u^{p}}_{K}.
%
%
	\end{align*}
	Assume that $\lambda_0\leq 0$. This implies that we have equality in the calculation before and by Proposition~\ref{prop:DiazSaaIneq}, we get that $\phi_0= \tilde{C} u$ for some $\tilde{C}>0$. Hence, 
	\[0\geq \lambda_0 \tilde{C}^{p-1}\phi_0^{p-1}=H(\tilde{C}\phi_0)=Hu=g \geq 0 \quad \text{on }K.\]
	This contradicts the assumption $g\neq 0$ and thus $\lambda_0>0$.
%
%
\end{proof}

If $\lambda_0(K)>0$ for a connected and finite set $K\sse X$, then Proposition~\ref{prop:PP} implies the existence of local Green's functions on $K$, i.e., if $\lambda_0(K)>0$, then for any $y\in K$ there exists a function $G^K_y\colon K\to (0,\infty)$ such that
	\[ HG^K_y=1_y \qquad \text{on }K.\] 
	
One of the next tasks is to give a criterion for having a Green's function globally on the whole graph. We will later see that a global Green's function exists at all $x\in X$ if and only if $h$ is a  subcritical energy functional in $X$.

However, in the next section we do not turn to subcriticality but to non-negativity and show that  the non-negativity of energy functionals is equivalent to the existence of a globally positive superharmonic function.

\section{Global Results for Non-Negative Energy Functionals}
In this section, we show our first main result: a discrete non-linear version of the Agmon-Allegretto-Piepenbrink theorem for non-negative $p$-energy functionals associated with $p$-Schrödinger operators on $X$.

But before showing the statement, we need the following auxiliary lemma. This is \cite[Lemma~5.2 and Lemma~5.3]{F:GSR}. For convenience, we give the short proof here.

\begin{lemma}\label{lem:groundstate}
Let $V\sse X$ and $u\in\FF(V)$ be strictly positive on $V$ such that $Hu\geq g u^{p-1}\geq 0$ on $V$ for some non-negative $g\in C(V)$. Then, we have
\[h(\phi)\geq \norm{\phi}^p_{p, gm} \geq 0, \qquad \phi\in C_c(V).\]
In particular, $h$ is non-negative on $C_c(V)$.

If $V=X$ and $h$ is critical in $X$, then any strictly positive superharmonic function in $X$ is harmonic on $X$.
\end{lemma}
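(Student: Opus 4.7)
The plan is to deduce the inequality $h(\phi)\geq \norm{\phi}_{p,gm}^{p}$ directly from Picone's inequality (Lemma~\ref{lem:Picone}) applied to the test function $\phi$ and the strictly positive function $u$. Concretely, I would start from the Picone conclusion
\[\frac{1}{2}\sum_{x,y\in X}b(x,y)\abs{\nabla_{x,y}\phi}^{p}\geq \ip{Lu}{\frac{\abs{\phi}^{p}}{u^{p-1}}}_{V},\]
which is valid because $u\in\FF(V)$ and $u>0$ on $V$, and $\phi\in C_c(V)$. Adding the potential part $\sum_{x\in V}c(x)\abs{\phi(x)}^{p}$ on both sides turns the left-hand side into $h(\phi)$.

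Next I would rewrite the right-hand side using $Lu=Hu-(c/m)u^{p-1}$ on $V$ (recall $u>0$ there, so this makes sense pointwise). The contribution of $(c/m)u^{p-1}$ against $\abs{\phi}^{p}/u^{p-1}$ exactly cancels the added $c\abs{\phi}^{p}$ term, leaving
\[h(\phi)\geq \ip{Hu}{\frac{\abs{\phi}^{p}}{u^{p-1}}}_{V}.\]
By hypothesis $Hu\geq g u^{p-1}\geq 0$, and since $\abs{\phi}^{p}/u^{p-1}\geq 0$ has support in $V$, this gives
\[\ip{Hu}{\frac{\abs{\phi}^{p}}{u^{p-1}}}_{V}\geq \ip{g u^{p-1}}{\frac{\abs{\phi}^{p}}{u^{p-1}}}_{V}=\sum_{x\in V}g(x)\abs{\phi(x)}^{p}m(x)=\norm{\phi}_{p,gm}^{p}\geq 0,\]
which is the claimed chain of inequalities. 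Non-negativity of $h$ on $C_c(V)$ is then the special case $g\equiv 0$.

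For the second assertion, suppose $V=X$, $h$ is critical in $X$, and $u$ is a strictly positive superharmonic function on $X$. Set $g:=Hu/u^{p-1}\in C(X)$, which is non-negative because $Hu\geq 0$ and $u>0$. Then $Hu=gu^{p-1}$ on $X$ and the first part yields $h(\phi)\geq \norm{\phi}_{p,gm}^{p}$ for all $\phi\in C_c(X)$. If $g$ were not identically zero, then $gm\gneq 0$ would be a positive weight realising the Hardy inequality, so $h$ would be subcritical in $X$, contradicting criticality. Hence $g\equiv 0$, i.e., $Hu=0$ on $X$, so $u$ is harmonic.

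The only subtle point is the algebraic cancellation between the potential contributions on the two sides after Picone, but this is bookkeeping; the conceptual content is entirely Picone's inequality combined with the definition of criticality, so I do not foresee a genuine obstacle.
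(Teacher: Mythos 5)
Your proof is correct and follows essentially the same route as the paper: both derive $h(\phi)\geq \ip{u^{1-p}Hu}{\abs{\phi}^{p}}_{V}\geq \ip{g}{\abs{\phi}^{p}}_{V}$ from Picone's inequality (Lemma~\ref{lem:Picone}) with the potential terms cancelling, and both obtain the second assertion by setting $g=Hu/u^{p-1}$ and invoking the definition of criticality. You have merely made the cancellation step explicit where the paper is terse.
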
 

\begin{proof}
The first statement is a direct consequence of Picone's inequality, Lemma~\ref{lem:Picone}. To be more specific, Picone's inequality implies
\[h(\phi)\geq \ip{u^{1-p}Hu}{\abs{\phi}^{p}}_{V}\geq \ip{g}{\abs{\phi}^{p}}_{V}\geq 0.\]
%

The second statement is now a direct consequence. Indeed, let $u$ be a strictly positive superharmonic function in $V=X$. Set $w=(Hu)/u^{p-1}\geq 0.$ Then we can use the previous calculation with $g=w$ to derive $h\geq \norm{\cdot }^p_{p, w m}$ on $C_c(V).$ Because $h$ is critical we get $w=0$, and $u$ is harmonic in $V$.
\end{proof}

\begin{remark}
The first statement in Lemma~\ref{lem:groundstate} can be sharpened and generalised via the ground state representation in \cite[Theorem~3.1 and Corollary~3.2]{F:GSR}. This representation basically says that the energy functionals associated with a Schrödinger operator can be bounded from below and above by so-called simplified energy functionals which consist of non-negative terms only, and a norm-term as in the statement above. Then, the inequality of Lemma~\ref{lem:groundstate} readily follows. 

Note that if $V$ is connected, we get by the Harnack inequality that any positive superharmonic function is strictly positive, and thus the assumption in Lemma~\ref{lem:groundstate} can then be weakened softly.

The second statement in Lemma~\ref{lem:groundstate} does not make sense on proper subsets of $X$, since $h$ would then necessarily be subcritical, confer \cite[Proposition~5.5]{F:GSR} and also Corollary~\ref{cor:AP}.
\end{remark}


\begin{proof}[Proof of Theorem~\ref{thm:AP}]
	\ref{thm:AP1}~$\implies$~\ref{thm:AP2}: Firstly, assume that $V$ is connected. Let $(K_n)$ be an increasing exhaustion of $V\sse X$ with finite and connected sets, and let $o\in K_1$. Moreover, let $H_n$ be the Schrödinger operator we obtain by adding $m/n$ to the potential $c$ of $H$, $n\in \NN$. Then by the definition of $\lambda_0$ in  \eqref{eq:lambda_0}, for all $n\in \NN$
	\[\lambda_0(K_n,H_n)\geq 1/n>0.\]
	Hence, by Proposition~\ref{prop:PP}, for any sequence $(g_n)$ of positive functions on $K_n$ there exists a unique positive function $u_n\in C(K_n)$ such that $H_nu_n=g_n$ on $K_n$. 
	
	Fix $n_0\in\NN$. Then, for all $n\geq n_0$
	\[ H_{n_0}u_n=\bigl(1/n_0-1/n\bigr)u^{p-1}_n+g_n\geq 0\quad\text{on }K_{n_0}.\]
	Hence, $(u_n)_{n\geq n_0}$ is a sequence of superharmonic functions on $K_{n_0}$ with respect to $H_{n_0}$. Without loss of generality, we choose $g_n$ such that $u_n(o)=1$ for all $n\in\NN$ (take e.g. $g_n=C_n\cdot 1_{x_n}$, $x_n\in K_n$, and specify the positive constant $C_n$ accordingly). Thus, we have that $(u_n)_{n\geq n_0}$ is in $S^{+}_{o}(K_{n_0},H_{n_0})$. Applying the convergence of solutions principle, Proposition~\ref{prop:harnack}, we get the existence of a pointwise converging subsequence $(u_{n_i})$ to some $u\in S^{+}_{o}(K_{n_0},H_{n_0})$ for all $n_0\in \NN$.  In particular, $u$ is superharmonic on $K_{n_0}$ with respect to $H_{n_0}$ and by the Harnack inequality, Lemma~\ref{lem:harnackIneq}, we have that $u>0$ on $K_{n_0}$. 

	Furthermore, we notice that for all $x\in K_{n_0}$,
	\begin{align*}
		0\leq H_{n_0}u(x) = Hu(x)-\frac{1}{n_0}u(x),
	\end{align*}
	i.e., $u/n_0\leq Hu$ on $K_{n_0}$, and $u$ is superharmonic on $K_{n_0}$. Letting $n_{0}\to \infty$, we get the desired positive superharmonic function on $V$ with respect $H$. Since $V$ is connected, we get by the Harnack inequality, Lemma~\ref{lem:harnackIneq}, that this positive superharmonic function is strictly positive in $V$. By the construction, we get that $u=0$ on $X\setminus V$, and the first implication is proven for connected $V$.
	
	Secondly, if $V$ is not connected, then we can decompose it in a (possibly finite) sequence of connected components $(V_i)$. By the construction above, we get in every $V_i$, that there is a function $u_i\in \FF(V_i)$ such that $u_i$ is strictly positive and superharmonic on $V_i$, and vanishes on $X\setminus V_i$. Hence, we can simply add all $u_i$, to obtain the desired function. To be more precise, the pointwise defined function $u=\sum_{i}u_i$ is strictly positive and superharmonic on $V$, and vanishes on $X\setminus V$.
	
	\ref{thm:AP2}~$\implies$~\ref{thm:AP2a}: This is trivial.
	
	\ref{thm:AP2a}~$\implies$~\ref{thm:AP1}: This is ensured a consequence of Picone's inequality, Lemma~\ref{lem:groundstate}.
%
%

Now, we assume that the graph is locally finite on $V$. 

\ref{thm:AP3}~$\implies$~\ref{thm:AP3a}: This is trivial.

\ref{thm:AP3a}~$\implies$~\ref{thm:AP1}: This follows also from Lemma~\ref{lem:groundstate}.

\ref{thm:AP1}~$\implies$~\ref{thm:AP3}: Here we follow the proof of \ref{thm:AP1}~$\implies$~\ref{thm:AP2} verbatim, where we note that $Hu=0$ follows then from the convergence of solutions principle, Proposition~\ref{prop:harnack}.
\end{proof}

Comparing \cite[Proposition~5.6]{F:GSR} and Theorem~\ref{thm:AP}, we obtain the following result.
\begin{corollary}\label{cor:AP}
	Let $V\subsetneq X$. Then $h$ is non-negative in $C_c(V)$ if and only if $h$ is subcritical in $V$.
\end{corollary}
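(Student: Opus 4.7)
The plan is to prove the two directions separately, relying on the characterisation of non-negativity by Theorem~\ref{thm:AP} and on the sufficient condition for subcriticality in \cite[Proposition~5.6]{F:GSR} (which, together with the hint given just before the corollary, is evidently the external ingredient this statement is meant to repackage).

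The easy implication, that subcriticality on $V$ implies non-negativity on $C_c(V)$, will follow directly from the definition: if $w>0$ satisfies $h(\phi)\geq \norm{\phi}_{p,w}^{p}$ for all $\phi\in C_c(V)$, then $h(\phi)\geq 0$ on $C_c(V)$. This direction makes no use of the assumption that $V$ is a proper subset.

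For the converse, I will assume that $h$ is non-negative on $C_c(V)$ with $V\subsetneq X$. First I reduce to the connected case by splitting $V$ into its (at most countably many) connected components, working on each of them separately, and combining the resulting Hardy weights to one positive weight on $V$. With $V$ connected, the plan is to apply the implication \ref{thm:AP1}$\implies$\ref{thm:AP2} of Theorem~\ref{thm:AP} to produce a function $u\in\FF(V)$ which is strictly positive on $V$, vanishes on $X\setminus V$, and is superharmonic with respect to $H$ on $V$. This is exactly the input required by \cite[Proposition~5.6]{F:GSR}, whose conclusion is that $h$ is subcritical on $V$; invoking it completes the argument.

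The hard part is the gap between the two notions of ``positive superharmonic function'' involved. Theorem~\ref{thm:AP} merely delivers $u$ with $Hu\geq 0$ on $V$, whereas a Hardy lower bound $h(\phi)\geq \norm{\phi}_{p,w}^{p}$ with $w>0$ requires $Hu/u^{p-1}$ to be strictly positive in a sufficiently rich way on $V$ (see Lemma~\ref{lem:groundstate}). The assumption $V\subsetneq X$, combined with the standing connectedness of $X$, provides at least one vertex $x_0\in V$ adjacent to some $y_0\in X\setminus V$; since $u(y_0)=0$ and $u(x_0)>0$, the $y_0$-summand in $Hu(x_0)$ contributes $b(x_0,y_0)u(x_0)^{p-1}/m(x_0)>0$. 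This Dirichlet-type strict superharmonicity at the outer boundary of $V$ is precisely the ingredient \cite[Proposition~5.6]{F:GSR} exploits, via Picone's inequality and the ground state representation, to upgrade non-negativity on $C_c(V)$ to a genuine Hardy inequality with a positive weight. It is also the reason the corresponding statement fails on the full space $V=X$, where critical (but not subcritical) functionals exist.
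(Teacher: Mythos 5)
Your proof follows essentially the same route as the paper's: the easy direction is immediate from the definition of subcriticality, and the converse is Theorem~\ref{thm:AP} (to obtain a positive superharmonic function on $V$ vanishing off $V$) followed by an appeal to \cite[Proposition~5.6]{F:GSR}. The reduction to connected components and the closing commentary on the internal mechanics of \cite[Proposition~5.6]{F:GSR} (the strict superharmonicity forced at vertices of $V$ adjacent to $X\setminus V$) are not needed for the corollary itself but do correctly identify why the hypothesis $V\subsetneq X$ is essential and where the Hardy weight comes from.
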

\begin{proof}
	Clearly, if $h$ is subcritical in $V$, then $h$ is also non-negative on $C_c(V)$. On the other side, if $h$ is non-negative in $C_c(V)$, then by the Agmon-Allegretto-Piepenbrink theorem, Theorem~\ref{thm:AP}, this implies the existence of a positive  superharmonic function on $V$. Thus, we can use \cite[Proposition~5.6]{F:GSR}, and get that $h$ is subcritical in $V$.
\end{proof}

\begin{remark}
	Corollary~\ref{cor:AP} has the following interpretation for $p=2$, $m=\deg$, and $c= 0$: Given any connected graph, the induced graph on any proper subset is then a graph with boundary and thus transient.
\end{remark}

\section{Comparison Principles}
With the proof of Theorem~\ref{thm:AP}, we have shown the first main result. We  have seen that the existence of a positive superharmonic function is equivalent to the non-negativity of the energy functional. Now, we want to show that the existence of a specific positive function -- the so-called Green's function -- is equivalent to the subcriticality of the energy functional. Comparison principles will be the toolbox for the proof of the equivalence.

In the linear case, comparison principles also go under the name minimal principles, see e.g. \cite{KePiPo1}.

In this section, we show the discrete counterpart to \cite[Section~5]{PP}. Firstly, we show the comparison principle for  $p$-Schrödinger operators with non-negative potentials, see \cite[Theorem~2.3.2]{Prado} for the $p$-Laplacian on locally finite graphs and \cite{GS98, PP} for the continuous case. Secondly, we will use a sub/supersolution technique to allow negative values of the potential terms. Thirdly, we will prove a weak comparison principle for arbitrary Schrödinger operators on finite subsets which can be seen as a discrete version of \cite[Theorem~5.3]{PP}.
\subsection{A Weak Comparison Principle for Non-Negative Potentials}
The following lemma is the non-linear version of \cite[Theorem~1.7]{KLW21}, see also \cite[Theorem~3.14]{HoS2} for the standard $p$-Laplacian on finite graphs, and \cite[Theorem~2.3.2]{Prado}.

\begin{lemma}[Weak Comparison Principle for $c\geq 0$]\label{lem:WCPNonNegativePotential}
	Let $V\subsetneq X$ and $c\geq 0$ on $V$. Furthermore, let $u, v\in \FF(V)$ such that 
	\begin{align*}
	\begin{cases}
	Hu&\leq Hv \quad \text{on } \phantom{\partial}V, \\	
	\phantom{H}u&\leq \phantom{H}v \quad \text{on }\, \partial V.	
	\end{cases}
	\end{align*}
Assume that $(v-u)\wedge 0$ attains a minimum in $V$. 
	Then, $u\leq v$ on $V$.
	
	Moreover, in each connected component of $V$ we have either $u=v$, or $u<v$.
\end{lemma}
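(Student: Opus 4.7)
The plan is a maximum-principle argument for $w := u - v$. The hypothesis that $(v-u) \wedge 0$ attains a minimum in $V$ is used precisely to guarantee, when $V$ is infinite, that any positive value of $M := \sup_V w$ is actually attained at some $x_0 \in V$. I would suppose for contradiction that $M > 0$ with $w(x_0) = M$ for some $x_0 \in V$; since $w \leq 0 < M$ on $\partial V$, the bound $w \leq M$ then holds on all of $V \cup \partial V$.

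At such a maximiser $x_0 \in V$ every neighbour $y$ with $b(x_0,y)>0$ lies in $V \cup \partial V$, and
\[ \nabla_{x_0, y} u - \nabla_{x_0, y} v \;=\; w(x_0) - w(y) \;=\; M - w(y) \;\geq\; 0. \]
Monotonicity of $t \mapsto \p{t} = |t|^{p-2}t$ then makes every summand of $Lu(x_0) - Lv(x_0)$ non-negative, and $u(x_0) > v(x_0)$ together with $c(x_0) \geq 0$ makes $(c(x_0)/m(x_0))\bigl(\p{u(x_0)} - \p{v(x_0)}\bigr)$ non-negative, so $Hu(x_0) \geq Hv(x_0)$. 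The hypothesis $Hu \leq Hv$ on $V$ forces equality, and hence both non-negative pieces must vanish. In particular every summand in the Laplacian part is zero, and strict monotonicity of $\p{\cdot}$ (valid for $p>1$) upgrades this to $\nabla_{x_0, y}u = \nabla_{x_0, y}v$, i.e., $w(y) = M$, at every neighbour $y \sim x_0$.

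Iterating propagates $w \equiv M$ through the connected component $V_0$ of $x_0$ in the induced subgraph on $V$. Because $V \subsetneq X$ and $(X, m, b)$ is connected, $V_0$ cannot be disconnected from $X \setminus V$, so some $x' \in V_0$ must have a neighbour $y' \in \partial V$; propagation then forces $w(y') = M > 0$, contradicting $w \leq 0$ on $\partial V$. This establishes $u \leq v$ on $V$. For the dichotomy in each connected component, if equality $u(x_0) = v(x_0)$ occurs at some $x_0$ inside a component of $V$, then by the weak statement $w(x_0) = 0$ is already the maximum of $w$ on $V \cup \partial V$, and rerunning the argument with $M = 0$ propagates $w \equiv 0$ throughout that component; otherwise $u < v$ throughout. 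The principal obstacle is to convert the vanishing of a sum of non-negative terms into pointwise equality of gradients (which requires strict monotonicity of $\p{\cdot}$, hence the restriction $p>1$) and to ensure that the propagation really does reach $\partial V$ (which uses $V \subsetneq X$ together with connectedness of $(X, m, b)$).
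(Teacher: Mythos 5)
Your proposal is correct and follows essentially the same maximum-principle argument as the paper: locate an interior extremiser of $u-v$ (the paper works with a minimiser of $v-u$, you with a maximiser of $w=u-v$, which is the same thing), compare $Hu$ and $Hv$ there term by term using the monotonicity of $\p{\cdot}$ and $c\geq 0$, force equality, upgrade to pointwise gradient equality via strict monotonicity, and propagate to the boundary. The only structural difference is cosmetic: the paper derives directly that $v-u$ is a non-negative constant on $V\cup\partial V$ and then pins it to zero using $\partial V\neq\emptyset$, thereby obtaining $u\leq v$ and the dichotomy in one pass, whereas you split it into a contradiction argument for $M>0$ followed by a rerun at $M=0$ for the dichotomy; both are sound.
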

\begin{proof} Without loss of generality, we can assume that $V$ is connected. Otherwise, we do the following proof in every connected component of $V$.
 
Assume that there exists a $x\in V$ such that $v(x)\leq u(x)$. Since  $(v-u)\wedge 0$ attains a minimum in $V$, there exists $x_0\in V$ such that $v(x_0)-u(x_0)\leq 0$, and $v(x_0)-u(x_0)\leq v(y)-u(y)$ for all $y\in V$. Since $u\leq v$ on $\partial V$, we get $\nabla_{x_0,y} v\leq \nabla_{x_0,y}u$ for all $y\in V\cup \partial V$. Furthermore, we have
\begin{align*}
	0&\leq m(x_0)\bigl(Hv(x_0)-Hu(x_0) \bigr) \\
	&= \sum_{y\in V\cup \partial V}b(x_0,y)\left( \p{\nabla_{x_0,y}v} -\p{\nabla_{x_0,y}u} \right)\\
	&\qquad + c(x_0)\left( \p{v(x_0)} -\p{u(x_0)} \right)\\
	&\leq 0,
\end{align*}
where the second inequality follows from the monotony of $\p{\cdot}$ on $\RR$, and $c(x_0)\geq 0$. Thus, we have in fact equality above. 

Since $c(x_0)\geq 0$, then we get that $u(y)-v(y)$ is a non-negative constant for all $y\sim x$. By iterating this argument and using that $V$ is connected, we get that $u(y)-v(y)$ is a non-negative constant for all $y\in V\cup \partial V$. Since $u\leq v$ on $\partial V$, we conclude that $u=v$ on $V$.
\end{proof}

\subsection{A Weak Comparison Principle on Finite Subsets}\label{sec:weak}
The goal of this subsection is to derive a similar statement as in the previous lemma  for arbitrary potentials (see Proposition~\ref{prop:WCPh>0}). 

The following lemma is the discrete analogue of \cite[Proposition~5.2]{PP}. The strategy of its proof is to use Lemma~\ref{lem:WCPNonNegativePotential} for the absolute value of the potential part. 

\begin{lemma}[Sandwiching lemma]\label{lem:WCPEinkesselungh>0}
	Let $K\sse X$ be finite. Let $0\leq g\in C(K)$, $0\leq f\in C_c(X\setminus K)$ such that $\lambda_{0}(\supp f \cup K) >0$. Moreover, let $u, v\in \FF(K)$, such that 
	\begin{align*}
	\begin{cases}
		Hu\leq g\leq Hv \quad \text{on } K, \\
		\phantom{H}u\leq f\leq v\phantom{H} \quad \text{on } \partial K \cup \supp f,\\
		\phantom{H}0\leq u\leq v\phantom{H} \quad \text{on } K.
	\end{cases}
	\end{align*}	 
	Then there exists $0\leq w\in C(K \cup \supp f)$ such that 
	\begin{align*}
	\begin{cases}
		Hw= g &\quad \text{on } K, \\
		\phantom{H}w=f &\quad \text{on } \partial K \cup \supp f,\\
		u\leq w\leq v  &\quad \text{on } K.
	\end{cases}
	\end{align*}	
Moreover, assume that $K$ is connected. Then, $g\gneq 0$ on $K$, or $\supp f \cap \partial K = \set{x_0}$ and $f(x_0)>0$, implies that $w$ is unique.
\end{lemma}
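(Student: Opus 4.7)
The plan is to reduce the signed-potential comparison problem to a sequence of Poisson--Dirichlet problems with a non-negative potential, where Lemma~\ref{lem:WCPNonNegativePotential} applies directly, via a monotone Picard iteration. I would let $\tilde{H}$ be the Schrödinger operator obtained from $H$ by replacing $c$ with $\abs{c}$; for any $\phi \geq 0$,
\[
\tilde{H}\phi = H\phi + \frac{2c^-}{m}\phi^{p-1},
\]
where $c^-=\max(-c,0)$. Since $\abs{c}\geq c$, the associated energy satisfies $\tilde{h}\geq h$ on $C_c(X)$, so the hypothesis $\lambda_{0}(\supp f\cup K, H)>0$ transfers to $\tilde{H}$, making Lemma~\ref{lem:J} available for $\tilde{H}$ on $\supp f\cup K$. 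Because $\abs{c}\geq 0$, applying Lemma~\ref{lem:WCPNonNegativePotential} in both directions would show that Poisson--Dirichlet problems for $\tilde{H}$ with non-negative data admit a unique non-negative solution.

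Next, I would set $\mathcal{C}_f:=\set{z\in C(K\cup\supp f):z\geq 0,\ z=f\text{ on }X\setminus K}$ and define $T\colon \mathcal{C}_f\to\mathcal{C}_f$ so that $Tz$ is the unique non-negative solution of
\begin{align*}
\begin{cases}
\tilde{H}(Tz)=g+\dfrac{2c^-}{m}z^{p-1} &\text{on }K,\\
\phantom{\tilde{H}(}Tz=f &\text{on }X\setminus K.
\end{cases}
\end{align*}
If $z_1\leq z_2$ in $\mathcal{C}_f$, then $\tilde{H}(Tz_1)\leq \tilde{H}(Tz_2)$ on $K$ while $Tz_1=Tz_2$ on $\partial K$, so Lemma~\ref{lem:WCPNonNegativePotential} would yield $Tz_1\leq Tz_2$; thus $T$ is monotone. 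Using $Hu\leq g$ on $K$ and $u\geq 0$, one rewrites $\tilde{H}u = Hu + \frac{2c^-}{m}u^{p-1} \leq \tilde{H}(Tu)$ on $K$; combined with $u\leq f=Tu$ on $\partial K$, Lemma~\ref{lem:WCPNonNegativePotential} gives $u\leq Tu$, and analogously $Tv\leq v$.

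Iterating, $(T^n u)_n$ would be pointwise monotone increasing and dominated by $v$ on the finite set $K$, so it converges to some $w\in \mathcal{C}_f$ with $u\leq w\leq v$ on $K$. Passing to the limit in $\tilde{H}(T^{n+1}u)=g+\tfrac{2c^-}{m}(T^n u)^{p-1}$ pointwise on $K$ is routine: the finitely many terms indexed by $y\in K\cup\supp f$ converge directly, and the remaining tail $\p{T^{n+1}u(x)}\sum_{y\in X\setminus(K\cup\supp f)}b(x,y)$ is dominated by $\p{v(x)}\deg(x)<\infty$. This yields $\tilde{H}w=g+\tfrac{2c^-}{m}w^{p-1}$ on $K$, equivalently $Hw=g$ on $K$, with $w=f$ on $X\setminus K$. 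Uniqueness under the stated additional assumptions on $K$, $g$, $f$ is then inherited directly from the uniqueness clause of Lemma~\ref{lem:J}. The main obstacle is selecting the auxiliary operator: $\tilde{H}$ needs simultaneously a well-posed Poisson--Dirichlet theory, a comparison principle, and must preserve the ordering structure of $Hu\leq g\leq Hv$ in a form suitable for a monotone fixed-point scheme---the shift by $\tfrac{2c^-}{m}\phi^{p-1}$ is what achieves precisely this on the non-local graph setting.
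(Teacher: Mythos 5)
Your proposal is correct and takes essentially the same approach as the paper: replace $c$ by $\abs{c}$ to obtain the operator $H_{\abs{c}}$ (noting $H_{\abs{c}}\phi = H\phi + \tfrac{2c_-}{m}\phi^{p-1}$ for $\phi\ge 0$), build the Picard map $T$ solving the $H_{\abs{c}}$-Poisson--Dirichlet problem with right-hand side $g+\tfrac{2c_-}{m}z^{p-1}$ via Lemma~\ref{lem:J}, show $T$ is monotone and $u\leq Tu$, $Tv\leq v$ using Lemma~\ref{lem:WCPNonNegativePotential}, iterate, and pass to the monotone bounded limit. The only cosmetic difference is in how the limit equation is justified --- you use direct dominated convergence where the paper cites Proposition~\ref{prop:harnack} --- and both inherit uniqueness from Lemma~\ref{lem:J}.
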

\begin{proof}
	Let $\KK:=K\cup \supp f$, and
	
	\[ \VV=\set{w\in C(\KK): 0\leq u\leq w\leq v \text{ in } K},\]
	and consider $G\colon K\times \VV\to \RR$ defined via
	\[ G(x,w):=g(x)+2\cdot \frac{c_-(x)}{m(x)}\cdot w^{p-1}(x)\geq 0, \qquad x\in K, w\in \VV,\]
	where $c_-(x)
	=0\vee(-c(x)), x\in X$. Since $\lambda_{0}(\KK)>0$, we can use Lemma~\ref{lem:J}, and get the existence of $\tilde{w}\in C(\KK)$ such that
	\begin{equation}
	\begin{aligned}\label{eq:WCP21}
	\begin{cases}
		H_{\abs{c}}\tilde{w}= G(\cdot,w) &\quad \text{in } K, \\
		\phantom{H_{\abs{c}}}\tilde{w}= f &\quad \text{on } X\setminus K,
	\end{cases}
	\end{aligned}
	\end{equation}		
	where $H_{\abs{c}}:=H_{b,\abs{c}, p, m}$. Let $T\colon \VV\to \DD$, $Tw=\tilde{w}$. Then $T$ is monotone. Indeed, let $w_1, w_2 \in \VV$, $w_1\leq w_2$, then for $x\in K$, 
	\[H_{\abs{c}}(Tw_1(x))=G(x,w_1)\leq G(x,w_2)=H_{\abs{c}}(Tw_2(x)),\] 
	and $Tw_1=f=Tw_2$ on $X\setminus K$. By Lemma~\ref{lem:WCPNonNegativePotential}, we get $Tw_1\leq Tw_2$ on $K$.
	
	Moreover, if $w\in \FF(K)$ is a subsolution of
	\begin{equation}
	\begin{aligned}\label{eq:WCP22}
	\begin{cases}
		H\hat{w}= g \quad \text{in } K, \\
		\phantom{H}\hat{w}= f \quad \text{on } X\setminus K,
	\end{cases}
	\end{aligned}
	\end{equation}		
	then, $H_{\abs{c}}w(x)=Hw(x)+G(x,w)-g(x)\leq G(x,w)$ for $x\in K$ and hence, $w$ is a subsolution of \eqref{eq:WCP21}. Furthermore,  $Tw$ is a solution of \eqref{eq:WCP21} and by Lemma~\ref{lem:WCPNonNegativePotential} we get $w\leq Tw$ on $K$. Hence,
	\[ HTw=g+2\cdot\frac{c_-}{m}\bigl(\p{w}- \p{Tw}\bigr)\leq g, \quad \text{on } K,\]
	and $Tw$ is a subsolution of \eqref{eq:WCP22}.
	
	Analogously, we get that if $w$ is a supersolution of \eqref{eq:WCP22}, then $Tw$ is a supersolution of \eqref{eq:WCP22} and $Tw\leq w$ on $K$.
	
	Define the sequences $(u_n), (v_n)$ as follows: $u_1=u$, $u_n=T(u_{n-1})=T^nu$ and $v_1=v$, $v_n=T(v_{n-1})=T^nv$. Then $u\leq u_n\leq v_n\leq v$ for all $n\in \NN$, i.e., both sequences are monotone and bounded, and thus they converge pointwise monotonously on $X$, say to $u_{\infty}$ and $v_{\infty}$, respectively. Using the Harnack principle, Proposition~\ref{prop:harnack}, for monotone and dominated convergence we infer
	\[
		Hu_{\infty}=\lim_{n\to\infty}Hu_n=g+2\frac{c_-}{m}\lim_{n\to\infty}\bigl(\p{u_{n-1}}-\p{u_n}\bigr)=g \quad \text{on } K,
	\]
	and analogously, $Hv_{\infty}=g$ on $K$. Thus, $u_\infty$ and $v_\infty$ are candidates for $w$. 
	
	The uniqueness follows now from Lemma~\ref{lem:J}.
\end{proof}
Note that in the case of $c\geq 0$, the local statement of Lemma~\ref{lem:WCPEinkesselungh>0} can be obtained globally using properties of Sobolev-type spaces, i.e., of reflexive Banach spaces. 

The following proposition is a discrete analogue of \cite[Theorem~5.3]{PP}, confer also \cite{GS98}. Recall that by Corollary~\ref{cor:principal}, we get from $h\geq 0$ on $C_c(X)$ that $\lambda_{0}(\KK)>0$ for every \emph{connected} and finite subset of $X$. We highlight also that the proof of the proposition needs that the Harnack inequality can be applied to the smaller set, and thus, we first have to consider connected components of our finite set $K$.

\begin{proposition}[Weak Comparison Principle for Finite Subsets]\label{prop:WCPh>0}
Let $K\sse \KK\sse X$, where $K$ and $\KK$ are finite, and $\lambda_{0}(\KK)>0$. Moreover, let $v \in \FF(K)$ be such that $Hv\geq 0$ on $K$ and $v\geq 0$ on $\partial K \cup \KK\setminus K$. Let $u\in \FF(K)$ such that
	\begin{align*}
		\begin{cases}
			Hu\leq Hv \quad \text{on } K, \\
			\phantom{H}u\leq v\phantom{H}\quad \text{on }\partial K \cup \KK\setminus K.
		\end{cases}
	\end{align*}
If either
\begin{enumerate}[label=(\alph*)]
\item\label{prop:WCPh>0v} $v\in C(\KK)=C_c(\KK)$, i.e., $\supp (v)\in \KK$, or
\item\label{prop:WCPh>0u} $u\in C(\KK)$, $Hu\geq 0$ on $K$, and  $u\geq 0$ on $\partial K \cup \KK\setminus K$,
\end{enumerate}
	then $u\leq v$ on $K$.
\end{proposition}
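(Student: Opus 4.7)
My strategy is to reduce the comparison to the already-proven weak comparison principle for non-negative potentials (Lemma~\ref{lem:WCPNonNegativePotential}) via the monotone Picard iteration embedded in the proof of the sandwiching lemma, Lemma~\ref{lem:WCPEinkesselungh>0}. The assumption $\lambda_0(\mathcal K)>0$ plays a double role: it upgrades the finite-set max-principle, and it supplies the uniqueness of Dirichlet solutions that ultimately forces $u\leq v$.

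\textit{Preliminary step (non-negativity on $K$).} From $\lambda_0(\mathcal K)>0$ and Corollary~\ref{cor:principal} I obtain $\lambda_0(K')>0$ for every connected component $K'$ of $K$ with $K'\subsetneq\mathcal K$, while $\lambda_0(K')=\lambda_0(\mathcal K)>0$ in the degenerate case. Proposition~\ref{prop:PP} then provides the strong maximum principle on each such $K'$. Combined with $Hv\geq 0$ on $K$ and $v\geq 0$ on $\partial K$, this gives $v\geq 0$ on $K$. In case (b), the same argument applied to $u$ yields $u\geq 0$ on $K$.

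\textit{Set-up and uniqueness.} In case (a) set $g:=Hv\geq 0$ on $K$ and $f:=v\rvert_{X\setminus K}\in C_c(\mathcal K\setminus K)$; in case (b) set $g:=Hu\geq 0$ and $f:=u\rvert_{X\setminus K}$. In either case $\supp f\cup K\subseteq\mathcal K$, so monotonicity of $\lambda_0$ in the domain yields $\lambda_0(\supp f\cup K)\geq\lambda_0(\mathcal K)>0$, and the hypotheses of Lemma~\ref{lem:J} are met. In case (a), $v$ is \emph{the} unique non-negative solution of the Dirichlet problem $(Hw=g$ on $K$, $w=f$ on $X\setminus K)$; in case (b), $u$ plays this role.

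\textit{Iteration.} Following the proof of Lemma~\ref{lem:WCPEinkesselungh>0}, introduce the auxiliary non-negative potential operator $H_{|c|}$ and, for $w\geq 0$, the source $G(\cdot,w):=g+2(c_-/m)w^{p-1}\geq 0$. Let $Tw$ denote the unique non-negative solution of $H_{|c|}Tw=G(\cdot,w)$ on $K$ with $Tw=f$ on $X\setminus K$, which exists by Lemma~\ref{lem:J} applied to $H_{|c|}$ (here $|c|\geq 0$, so the hypotheses are automatic). By Lemma~\ref{lem:WCPNonNegativePotential}, $T$ is monotone, and fixed points of $T$ are exactly solutions of the original Dirichlet problem. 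In case (a), $v=Tv$ is a fixed point; starting the iteration from the truncated subsolution $\underline u:=(u\vee 0)\cdot 1_K+f\cdot 1_{X\setminus K}$ one shows by induction that $T^n\underline u\leq v$ on $K$ for all $n$, and the monotone sequence $T^n\underline u$ converges pointwise to a solution of the original Dirichlet problem. Uniqueness from the previous step identifies this limit with $v$, so $\underline u\leq v$ and hence $u\leq v$ on $K$. Case (b) is symmetric: the iteration is started from above with $\overline v:=v$ (already a supersolution thanks to $Hv\geq Hu=g$ and $v\geq u=f$ on $\partial K\cup\supp f$), and the monotone decreasing limit must equal $u$.

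\textit{Main obstacle.} The delicate part is verifying in case (a) that the truncation $\underline u$ satisfies $H_{|c|}\underline u\leq G(\cdot,\underline u)$ on $K$, and in case (b) the analogous supersolution property for $\overline v$. Since $H$ is non-linear, truncation does not automatically preserve sub-/super-solutions; the verification proceeds pointwise, splitting into points where $u\geq 0$ (where $\underline u$ agrees with $u$ and the hypothesis $Hu\leq g$ is used directly after absorbing the $c_-$ contribution into $G$) and points where $u<0$ (where $\underline u(x)=0$ and one uses the monotonicity of $t\mapsto\p t$ together with the fact that $\underline u(y)\geq u(y)$ at every neighbour $y$, including the boundary points where $\underline u=f=v\geq u$). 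The auxiliary source $G$ has been engineered precisely to absorb the difference between $H$ and $H_{|c|}$, making this check work.
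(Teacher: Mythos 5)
Your overall strategy — run the monotone Picard iteration $T$ from the proof of the sandwiching lemma (Lemma~\ref{lem:WCPEinkesselungh>0}) and identify the limit via uniqueness of the Dirichlet solution — is essentially the paper's strategy, since the paper invokes Lemma~\ref{lem:WCPEinkesselungh>0} as a black box and that lemma's proof is precisely this iteration. However, your argument as written contains a genuine circularity. In case~(a) you start the iteration from $\underline u := (u\vee 0)\cdot 1_K + f\cdot 1_{X\setminus K}$ and propose to show $T^n\underline u\leq v$ on $K$ for all $n$ by induction. The base case $n=0$ reads $u\vee 0\leq v$ on $K$; since you have already established $v\geq 0$ on $K$ by the strong maximum principle, this base case is exactly the inequality $u\leq v$ on $K$ that you are trying to prove. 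The monotonicity step has the same problem: to get $T\underline u\leq v$ from $H_{|c|}T\underline u = G(\cdot,\underline u)$ and $H_{|c|}v\geq G(\cdot,v)$ via Lemma~\ref{lem:WCPNonNegativePotential}, you must first know $G(\cdot,\underline u)\leq G(\cdot,v)$, i.e.\ $\underline u\leq v$. Moreover, the operator $T$ in Lemma~\ref{lem:WCPEinkesselungh>0} is defined only on the sandwich set $\{w : 0\leq u\leq w\leq v\}$, so even feeding $\underline u$ into $T$ presupposes $\underline u\leq v$. The same circularity resurfaces in case~(b) when you claim the decreasing iterates $T^n v$ remain $\geq u$: since $u$ is the fixed point, this again requires $v\geq u$.

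The missing ingredient, which is the key move in the paper's proof, is the $(p-1)$-homogeneity of $H$. After first dispatching the degenerate case $v\equiv 0$ on $K$ separately (then $v\equiv 0$ on $K\cup\partial K$ by the Harnack inequality, $Hu\leq 0$, and the weak maximum principle applied to $-u$ gives $u\leq 0$), one introduces $C := 1\vee(\max_K u/\min_K v)$. By homogeneity $Cv$ is still a supersolution of the Dirichlet problem, and the base inequality $\underline u\leq Cv$ on $K$ now holds trivially by the choice of $C$, so the sandwiching lemma applies to the pair $(\underline u, Cv)$ and yields a solution $w\in C(\KK)$ with $\underline u\leq w\leq Cv$ on $K$ and $w=f$ off $K$. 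The uniqueness clauses of Lemma~\ref{lem:J}, Lemma~\ref{lem:WCPEinkesselungh>0}, and Proposition~\ref{prop:PP} then identify $w$ with $v$, giving $u\leq\underline u\leq w = v$ on $K$. Two smaller points: you should open the proof by reducing to connected components of $K$, since the strong maximum principle, the Harnack inequality, and the $\min_K v>0$ you need for the constant $C$ all require connectedness; and in the trivial sub-case $c\geq 0$ on $K$ one should short-circuit directly to Lemma~\ref{lem:WCPNonNegativePotential} without any iteration. Your "main obstacle" paragraph about the subsolution property of the truncation $\underline u$ is on the right track and does close (using $g\geq 0$ at the points where $u<0$), but it was never the bottleneck — the upper bound was.
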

\begin{proof}
	If $c\geq 0$ on $K$ and arbitrary on $X\setminus K$ then the statement follows from Lemma~\ref{lem:WCPNonNegativePotential}. Thus, we can assume without loss of generality, that $c\neq 0$ on $K$.
	
	
	Assume, initially, that $K$ is also connected.
	
	Note that $\lambda_{0}(K)\geq\lambda_{0}(\KK)>0$. By the strong maximum principle, Proposition~\ref{prop:PP} (\ref{prop:PP0} $\implies$ \ref{prop:PP2}), we conclude that either $v=0$ or $v>0$ on $K$. If $v=0$ on $K$, then by the connectedness of $K$, we can apply the Harnack inequality, Lemma~\ref{lem:harnackIneq}, and get $v=0$ on $K\cup \partial K$, and thus, $Hv=0$ on $K$. Hence, $Hu\leq 0$ on $K$ and $u\leq 0$ on $\partial K$. Applying the weak maximum principle to $-u$, we get that $u\leq 0$ on $K$.
	
	Now assume that $v>0$ on $K$ and define $C=1\vee (\max_K u/\min_K v)$, then using the assumptions on $u$ and $v$, we see that $u\leq C\, v$ and $C^{-1}u\leq v$ in $K\cup \partial K$. Moreover, by  Proposition~\ref{prop:PP}, we can assume that $Hv \gneq 0$.
	
	Firstly, assume that \ref{prop:WCPh>0v} holds. Furthermore, let $0\lneq g:=Hv$ and $f:=v$ on $X$,  and consider for a function $\tilde{v}\in \FF(K)$ the problem 
		\begin{equation}
	\begin{aligned}\label{eq:WCP31}
	\begin{cases}
		H\tilde{v}= g \quad \text{in } K, \\
		\phantom{H}\tilde{v}= f \quad \text{on } X\setminus K.
	\end{cases}
	\end{aligned}
	\end{equation}	
	Then  $C\, v$ is a supersolution of \eqref{eq:WCP31}. By Lemma~\ref{lem:WCPEinkesselungh>0}, there exists a unique solution $w\in C(\KK)$ of \eqref{eq:WCP31} such that $u\leq w\leq C\, v$ on $K$ and $w=v$ on $X\setminus K$. Again by the strong maximum principle, $w=0$ or $w>0$ on $K$. If $w=0$ on $K$, then arguing as above, we get that $w=v=0$ on $\partial K$ and also $u\leq 0$ on $K\cup \partial K$. If $w>0$ on $K$ and $w=v=0$ on $\partial K$, then we have uniqueness of the solutions by Proposition~\ref{prop:PP}, i.e., $w=v$ and hence, $u\leq v$ in $K$. If $w>0$ on $K$ and $w=v\gneq 0$ on $\partial K$, then we have uniqueness of the solutions by Lemma~\ref{lem:WCPEinkesselungh>0}, i.e., $w=v$ and hence, $u\leq v$ in $K$.
	
	Secondly, assume that \ref{prop:WCPh>0u} holds. The proof is similar to \ref{prop:WCPh>0v}, but here are the details: Let $g:=Hu$ and $f:=u$ on $X$, and consider for a function $\tilde{u}\in \FF(K)$ the problem 
		\begin{equation}
	\begin{aligned}\label{eq:WCP32}
	\begin{cases}
		H\tilde{u}= g \quad \text{in } K, \\
		\phantom{H}\tilde{u}= f \quad \text{on } X\setminus K.
	\end{cases}
	\end{aligned}
	\end{equation}	
	Then  $C^{-1} u$ is a subsolution of \eqref{eq:WCP32}. By Lemma~\ref{lem:WCPEinkesselungh>0}, there exists a unique solution $w\in C(\KK)$ of \eqref{eq:WCP31} such that $C^{-1}u\leq w\leq v$ on $K$ and $w=u$ on $X\setminus K$. By the strong maximum principle, $w=0$ or $w>0$ on $K$. If $w=0$ on $K$, then arguing as above, we get that $w=u=0$ on $\partial K$ and also $u\leq 0$ on $K\cup \partial K$ which is a contradiction to \ref{prop:WCPh>0u}. If $w>0$ on $K$ and $w=u=0$ on $\partial K$, then we have uniqueness of the solutions by Proposition~\ref{prop:PP}, i.e., $w=u$ and hence, $u\leq v$ in $K$. If $w>0$ on $K$ and $w=u\gneq 0$ on $\partial K$, then we have uniqueness of the solutions by Lemma~\ref{lem:WCPEinkesselungh>0}, i.e., $w=u$ and hence, $u\leq v$ in $K$.
	
	Now, let $K$ be possibly disconnected. Then, we can apply the previous consideration to every connected component of $K$. This yields the result.
\end{proof}

\begin{remark}
We say that the \emph{strong comparison principle} holds true for $h$, if the conditions in Proposition~\ref{prop:WCPh>0} imply $u<v$ on $K$ unless $u=v$ on $K$. 	

In the linear case, i.e., $p=2$, it is shown in \cite[Lemma~5.14]{KePiPo1}, that the strong maximum principle holds true for $h$ on any finite subset. For $p\neq 2$ it is not known if the strong and the weak comparison principle are equivalent (apart from the trivial case $u=0$ where it is a consequence of the Harnack inequality). In the continuum, a very nice discussion is given in \cite[Section~3]{FP11}.

However, if $c\geq 0$ on a connected and finite $K$, then Lemma~\ref{lem:WCPNonNegativePotential}, says that the strong comparison principle holds true for $h$. So further investigations are needed  on not non-negative potentials and $u\neq 0$.

We come back to this notion of strong comparison in Proposition~\ref{prop:SCP} in the context of minimal growth.
\end{remark}

\subsection{Consequences for the Variational Capacity}\label{sec:capa}

As the name suggests, there are many different capacities. A good overview is the monograph \cite{Bjoern}, where many of them are discussed in detail in the quasi-linear Lapacian setting on specific metric spaces which include also metric graphs. For results on local $p$-Schrödinger operators connecting variational capacity and criticality see \cite{PT07, PT08, PT}. Results for the standard $p$-Laplacian on locally finite graphs can be found in \cite{Prado}.

The weak comparison principle allows us to prove the statement that if the capacity vanishes at some vertex, it vanishes at all vertices. Using this result, we can obtain a Green's function globally. But before showing this, we need the following auxiliary lemma.
\begin{lemma}\label{lem:convergenceCap}
	Let $h\geq 0$ on $C_c(V)$ for some $V\sse X$. Let $(V_n)$ be an increasing exhaustion of $V$, and $o\in V_1$. Then, \[\cc_{h}(o,V)=\lim_{n\to\infty}\cc_{h}(o,V_n).\] 
\end{lemma}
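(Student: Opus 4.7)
The plan is to prove the equality $\cc_h(o,V)=\lim_{n\to\infty}\cc_h(o,V_n)$ by establishing both inequalities separately, starting from the direct observation that the inclusion of feasible test spaces gives a monotone capacity sequence.

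First I would note that since $(V_n)$ is an increasing exhaustion of $V$, we have $C_c(V_n)\sse C_c(V_{n+1})\sse C_c(V)$ for all $n\in\NN$, and each of these sets contains the admissible function $1_o$ (noting $h(1_o)<\infty$ by local summability of $b$), so every infimum is taken over a non-empty set. Taking the infimum of $h$ over the test functions $\phi$ with $\phi(o)=1$ in the successively larger sets $C_c(V_n)$ yields that $n\mapsto \cc_h(o,V_n)$ is non-increasing. Moreover, for each $n$ we have $\cc_h(o,V_n)\geq \cc_h(o,V)\geq 0$, using the assumption $h\geq 0$ on $C_c(V)$. Hence the limit $\lim_{n\to\infty}\cc_h(o,V_n)$ exists in $[0,\infty)$ and satisfies $\lim_{n\to\infty}\cc_h(o,V_n)\geq \cc_h(o,V)$.

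For the reverse inequality, I would use an approximation argument exploiting that elements of $C_c(V)$ have finite support. Fix an arbitrary $\phi\in C_c(V)$ with $\phi(o)=1$. Since $\supp(\phi)$ is a finite subset of $V=\bigcup_n V_n$ and the sets $V_n$ are increasing, there exists $n_0\in\NN$ such that $\supp(\phi)\sse V_n$ for all $n\geq n_0$. This means $\phi\in C_c(V_n)$ for $n\geq n_0$, so $\cc_h(o,V_n)\leq h(\phi)$, and passing to the limit gives $\lim_{n\to\infty}\cc_h(o,V_n)\leq h(\phi)$. Taking the infimum over all such $\phi\in C_c(V)$ with $\phi(o)=1$ yields $\lim_{n\to\infty}\cc_h(o,V_n)\leq \cc_h(o,V)$, which combined with the previous paragraph completes the proof.

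There is no substantive obstacle here: the lemma is a direct consequence of the definition of capacity together with the fact that test functions have finite support. The non-negativity hypothesis enters only to guarantee that the capacities are finite and non-negative so that the monotone limit is well-defined and non-trivial.
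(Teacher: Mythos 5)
Your proof is correct and takes essentially the same approach as the paper's: establish monotonicity and the lower bound for the limit, then use the finite support of compactly supported test functions to show they eventually belong to some $C_c(V_n)$, giving the reverse inequality. The only cosmetic difference is that the paper fixes an $\epsilon$-near-minimiser and lets $\epsilon\to 0$, whereas you take the infimum over test functions at the end; these are interchangeable.
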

\begin{proof}
	By the monotonicity of the infimum we have for all $n\in\NN$ 
	\[\cc_{h}(o,V_{n})\geq \cc_{h}(o,V_{n+1})\geq \cc_{h}(o,V)\geq 0.\]
	Thus, $(\cc_{h}(o,V_n))$ is a decreasing and bounded sequence. Let $\epsilon> 0$ and choose $\phi\in C_c(X)$ such that $h(\phi)\leq \cc_{h}(o,V)+\epsilon$. Then, there is a $n_{0}\in \NN$ such that for all $n\geq n_0$, we have  $\supp \phi \in V_n$, and thus
	\[\cc_{h}(o,V_n)\leq h(\phi)\leq \cc_{h}(o,V)+ \epsilon.\]
	Letting $\epsilon \to 0$ yields the result.
\end{proof}
Now, we can prove the main result of this subsection.

\begin{proposition}\label{prop:cap=0}
	Let $h\geq 0$ on $C_c(V)$ for some connected $V\sse X$. If there is $x\in V$ such that $\cc_{h}(x,V)=0$ then $\cc_{h}(y,V)=0$ for all $y\in V$.
	
	In particular, if $h$ is subcritical in $V$, then $\cc_{h}(y,V)>0$ for all $y\in V$.
\end{proposition}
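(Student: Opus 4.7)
The plan is to exhaust $V$ by finite connected subsets containing $x$, realize the capacity as an attained minimum on each, and then use the Harnack inequality to propagate the vanishing of the capacity from $x$ to an arbitrary $y \in V$. Let $(V_n)$ be such an exhaustion with $x \in V_1$. By Lemma~\ref{lem:convergenceCap}, $\cc_h(x,V_n) \to \cc_h(x,V) = 0$. On each finite $V_n$, Corollary~\ref{cor:AP} tells me that $h$ is subcritical on $C_c(V_n)$ (since $V_n \subsetneq X$), which provides the coercivity needed to attain the infimum defining $\cc_h(x,V_n)$ on the closed affine set $\{\phi \in C(V_n): \phi(x)=1\}$ in the finite-dimensional space $C(V_n)$. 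I denote such a minimizer by $u_n$, and since $h(|u_n|) \leq h(u_n)$ with $|u_n|(x)=1$, I may assume $u_n \geq 0$.

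Next I would derive the Euler--Lagrange equations for $u_n$ via Lemma~\ref{lem:GateauxDerivative}. For $z \in V_n \setminus \{x\}$, the admissible variation $u_n + t\cdot 1_z$ forces $Hu_n(z) = 0$, so $u_n$ is $p$-harmonic off $x$. At the constrained vertex $x$, the variation $\phi_t := u_n + t\cdot 1_x$ has $\phi_t(x)=1+t$; for $t>-1$ the rescaling $\phi_t/(1+t)$ is admissible, so the function $f(t) := h(\phi_t) - (1+t)^p h(u_n)$ is non-negative on $(-1,\infty)$ and vanishes at $t=0$. Hence $f'(0)=0$, which together with Lemma~\ref{lem:GateauxDerivative} yields $Hu_n(x) = \cc_h(x,V_n)/m(x) \geq 0$. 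Thus $u_n$ is a non-negative superharmonic function on all of $V_n$.

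Now fix any $y \in V$ and choose a finite connected $K \sse V$ with $x,y \in K$. For $n$ large enough $K \sse V_n$, and Lemma~\ref{lem:harnackIneq} applied with $f \equiv 0$ provides a constant $C_K>0$ \emph{independent of $n$} such that $1 = u_n(x) \leq \max_K u_n \leq C_K \min_K u_n \leq C_K u_n(y)$. Hence $\psi_n := u_n/u_n(y) \in C_c(V)$ satisfies $\psi_n(y)=1$ and
\[
\cc_h(y,V) \leq h(\psi_n) = \frac{h(u_n)}{u_n(y)^p} \leq C_K^p\,\cc_h(x,V_n) \longrightarrow 0,
\]
giving $\cc_h(y,V)=0$. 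For the ``in particular'' assertion, subcriticality of $h$ in $V$ furnishes a positive weight $w$ with $h(\phi) \geq \norm{\phi}_{p,w}^p$ on $C_c(V)$, so every $\phi \in C_c(V)$ with $\phi(y)=1$ satisfies $h(\phi) \geq w(y) > 0$, whence $\cc_h(y,V) \geq w(y) > 0$. I expect the main obstacle to be the Euler--Lagrange derivation at the ``pole'' $x$: the superharmonicity of $u_n$ at the constrained vertex must be extracted from the $p$-homogeneity of the constraint $\phi(x)=1$, and only then does the Harnack inequality become available; the remainder of the argument is a routine combination of Harnack and the exhaustion principle from Lemma~\ref{lem:convergenceCap}.
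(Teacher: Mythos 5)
Your argument is essentially correct, and the core of it follows a genuinely different route from the paper's. The paper obtains the minimizer $\phi_n$ from Lemma~\ref{lem:J}, gets its strict positivity from Proposition~\ref{prop:PP}, and then uses the weak comparison principle, Proposition~\ref{prop:WCPh>0}, to show that $(\phi_n)$ is increasing, so that $(1/\phi_n(y))$ is eventually bounded. You instead derive superharmonicity of the minimizer at the pole $x$ via a rescaling Euler--Lagrange argument (exploiting the $p$-homogeneity of $h$ against the constraint $\phi(x)=1$), and then apply the Harnack inequality, Lemma~\ref{lem:harnackIneq}, with $f\equiv 0$, to obtain a lower bound on $u_n(y)$ uniform in $n$. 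Both routes are valid; yours trades the weak comparison principle for Harnack plus a slightly cleverer variational step at the constrained vertex, which is arguably the lighter-weight toolkit. The rescaling computation is correct and gives $Hu_n(x)=\cc_{h}(x,V_n)/m(x)\geq 0$, and the Harnack constant $C_K$ does depend only on $K$ and $H$, not on $n$, so the final estimate goes through.

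Two points need repair. First, Corollary~\ref{cor:AP} does give subcriticality of $h$ on $C_c(V_n)$, but subcriticality with a merely positive (rather than strictly positive) Hardy weight does not by itself yield coercivity of $h$ on the affine set $\set{\phi\in C(V_n):\phi(x)=1}$, since such a weight may vanish at all but one vertex. The right reference here is Corollary~\ref{cor:principal}: since $h\geq 0$ on $C_c(V)$ and $V_n\subsetneq V$ is finite and connected, $\lambda_0(V_n)>0$, hence $h(\phi)\geq\lambda_0(V_n)\norm{\phi}^p_{p,m}$ on $C(V_n)$, which is coercive in finite dimensions. Second, your proof of the ``in particular'' clause contains a wrong step: subcriticality furnishes a positive Hardy weight $w$, but in the paper's conventions ``positive'' means $w\gneq 0$, not $w>0$ pointwise, so the claimed inequality $w(y)>0$ is unjustified for a general $y$. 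What you actually get directly is $\cc_{h}(o,V)\geq w(o)>0$ only for some $o\in V$ at which $w$ does not vanish; the conclusion for arbitrary $y\in V$ must then be obtained by invoking the first part of the proposition, as the paper does (if $\cc_{h}(y,V)=0$ for some $y$, the first part forces $\cc_{h}(o,V)=0$, a contradiction).
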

\begin{proof}
	Let $(K_n)$ be an increasing exhaustion of $V$ with finite and connected sets such that $x\in K_1$. Since $h\geq 0$ on $C_c(V)$, we have using Corollary~\ref{cor:principal} that $\lambda_{0}(K_n)>0$ for all $n\in \NN$. Thus, we can use Lemma~\ref{lem:J} and get the existence of a function $\phi_n\in C(K_n)$ which minimises $h$ on $K_{n,o}:=\set{\phi\in C(K_n): \phi(x)=1}$. By Proposition~\ref{prop:PP}, we get that $\phi_n>0$ on $K_n$. By the weak comparison principle, Proposition~\ref{prop:WCPh>0}, we get that $(\phi_n)$ is increasing. Moreover, using Lemma~\ref{lem:convergenceCap} we get
\[0=\cc_{h}(x,V)=\lim_{n\to\infty}\cc_{h}(x,K_n)=\lim_{n\to\infty}h(\phi_n).\]
	Furthermore, for any $y\in V$, there exists $n_{0}$ such that $y\in K_n$ for all $n\geq n_{0}$. Then, using that $(1/\phi_n (y))_{n\geq n_0}$ is bounded and decreasing,  we compute using Lemma~\ref{lem:convergenceCap}, 
\[ 0\leq\cc_{h}(y,V)=\lim_{n\to\infty}\cc_{h}(y,K_n)\leq \lim_{n\to\infty}\frac{ h(\phi_n)}{\phi^{p}_n(y)}=0.\]
Thus, $\cc_{h}(y,X)=0$.

Let now $h$ be subcritical in $V$, then there exists $o\in V$ such that $w(o)>0$ for some non-negative function $w\in C(V)$, and $h\geq \norm{\cdot}_{p,w}^{p}$ on $C_c(V)$. Thus, $\cc_{h}(o,V)\geq w(o)>0$. Assume that there is $x\in V$ such that $\cc_{h}(x,V)=0$. Then, by the first part, $\cc_{h}(y,V)=0$ for all $y\in V$, which is a contradiction. Hence, $\cc_{h}(x,V)>0$.
\end{proof}

A consequence of the previous proposition is the following statement.

\begin{corollary}\label{cor:w>0}
	Let $V\sse X$ be connected. If $h$ is subcritical in $V$ with corresponding positive Hardy weight $w\in C(V)$, then $w$ can chosen to be strictly positive on $V$.
\end{corollary}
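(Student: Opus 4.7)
The plan is to bypass the given weight $w$ entirely and manufacture a \emph{new}, strictly positive Hardy weight directly from the capacity information provided by Proposition~\ref{prop:cap=0}. Since $h$ is subcritical in $V$, that proposition supplies $\cc_h(y,V)>0$ for every $y\in V$. Unwinding the definition of the variational capacity and using the $p$-homogeneity of $h$, for each fixed $y\in V$ and every $\phi\in C_c(V)$ with $\phi(y)\neq 0$ one obtains $h(\phi)\geq \cc_h(y,V)|\phi(y)|^p$, and this inequality is of course trivially valid when $\phi(y)=0$ since $h\geq 0$ on $C_c(V)$.

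Next, since $X$ (and hence $V$) is countable, I would fix any function $\alpha\colon V\to (0,\infty)$ with $\sum_{y\in V}\alpha(y)=1$ (for instance $\alpha(y_k)=2^{-k}$ after enumerating $V=\{y_k\}$). Define
\[\tilde w(y):=\alpha(y)\,\cc_h(y,V),\qquad y\in V,\]
which is strictly positive on $V$ by construction. For any $\phi\in C_c(V)$ the support of $\phi$ is finite, so the sum on the right below has only finitely many nonzero terms, and the pointwise estimate above yields
\[h(\phi)=\Bigl(\sum_{y\in V}\alpha(y)\Bigr)h(\phi)=\sum_{y\in V}\alpha(y)\,h(\phi)\geq \sum_{y\in V}\alpha(y)\,\cc_h(y,V)|\phi(y)|^p=\norm{\phi}_{p,\tilde w}^{p}.\]
This exhibits $\tilde w$ as a strictly positive Hardy weight, proving the corollary.

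The only real step that needed justification was the strict positivity of the capacity at every vertex, and that was already handled in Proposition~\ref{prop:cap=0}; the rest is a weighted-average trick exploiting $p$-homogeneity and countability of $V$. If one prefers to retain the original weight $w$ in the statement, one can instead set $\tilde w=\tfrac{1}{2}w+\tfrac{1}{2}w'$ with $w'$ produced as above; then $\tilde w>0$ on $V$ and $h\geq \tfrac{1}{2}(\norm{\cdot}_{p,w}^{p}+\norm{\cdot}_{p,w'}^{p})=\norm{\cdot}_{p,\tilde w}^{p}$, so no genuine obstacle appears.
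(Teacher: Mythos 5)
Your proof is correct and takes essentially the same route as the paper: Proposition~\ref{prop:cap=0} gives strict positivity of the capacity at every vertex, the definition of the variational capacity gives the single-point Hardy inequalities $h(\phi)\geq \cc_h(y,V)|\phi(y)|^p$, and a convex combination with summable strictly positive weights $\alpha$ produces the strictly positive Hardy weight.
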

\begin{proof}
	By Proposition~\ref{prop:cap=0}, we have that $\cc_{h}(x,V)> 0$ for all $x\in V$. Moreover, $\cc_{h}(x,V)\cdot 1_x$ is a possible $w$ since $\cc_{h}(x,V)\abs{\phi(x)}^{p}\leq h(\phi)$ for all $\phi\in C_c(V)$. Furthermore, let $\alpha_x>0$ such that $\sum_{x\in V}\alpha_x=1$, then also $\sum_{x\in V}\alpha_x \cc_{h}(x,V)\cdot 1_x$ is a possible $w$ (which is bounded from above pointwise by the  first possible weight and thus the sum is convergent), and thus $w$ can chosen to be strictly positive on $V$.
\end{proof}
 
 By Corollary~\ref{cor:w>0}, we know that any subcritical energy functional has a strictly positive Hardy weight. If we know a little bit more about the lower bound, we get a connection to the principal eigenvalue. This is specified next. In the case of finite $V\sse X$, it gives another characterisation of the maximum principle and continues Proposition~\ref{prop:PP}.
 
\begin{proposition}\label{prop:subLam}
 	Let $V\sse X$. Then the following holds:
		\begin{enumerate}[label=(\alph*)]
			\item\label{prop:subLam1} If $\lambda_{0}(V)>0$, then $h$ is subcritical in $V$ with strictly positive Hardy weight $w=\lambda_{0}(V)\cdot m$. 
			\item\label{prop:subLam2} If $h$ is subcritical in $V$ with Hardy weight $w$ such that $\inf_V(w/m)> 0$, then $\lambda_{0}(V)>0$. Moreover, if $K\sse X$ is finite and $h$ is non-negative in $C_c(K)$, then $\lambda_{0}(K)>0$.
		\end{enumerate}		 	
 	In particular, if $h$ is non-negative in $C_c(X)$, then $ \lambda_{0}(K)>0$ in every finite  subset $K\sse X$.
 \end{proposition}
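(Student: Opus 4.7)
The plan is to obtain (a) directly from the variational definition of $\lambda_{0}(V)$, establish the first half of (b) by a pointwise comparison of Hardy weights, and handle the ``Moreover'' part by reducing to connected components so that the corollaries of the previous sections become applicable.

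For (a), by \eqref{eq:lambda_0} we have $h(\phi)\geq\lambda_{0}(V)\norm{\phi}_{p,m}^{p}=\norm{\phi}_{p,\lambda_{0}(V)m}^{p}$ for every $\phi\in C_{c}(V)$, and $w=\lambda_{0}(V)m$ is strictly positive on $V$ because $m>0$ and $\lambda_{0}(V)>0$. For the first half of (b), set $\alpha:=\inf_{V}(w/m)>0$; then $\norm{\phi}_{p,w}^{p}\geq\alpha\norm{\phi}_{p,m}^{p}$ for all $\phi\in C_{c}(V)$, and combining with the Hardy inequality $h(\phi)\geq\norm{\phi}_{p,w}^{p}$ and dividing by $\norm{\phi}_{p,m}^{p}$ yields $\lambda_{0}(V)\geq\alpha>0$.

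For the ``Moreover'' part I would first decompose the finite set $K$ into its finitely many connected components $K=K_{1}\sqcup\cdots\sqcup K_{n}$. Because there is no edge between distinct components, $b(x,y)=0$ whenever $x\in K_{i}$ and $y\in K_{j}$ with $i\neq j$, so for $\phi=\sum_{i}\phi_{i}\in C_{c}(K)$ with $\phi_{i}\in C_{c}(K_{i})$ the functional decouples as $h(\phi)=\sum_{i}h(\phi_{i})$, and similarly $\norm{\phi}_{p,m}^{p}=\sum_{i}\norm{\phi_{i}}_{p,m}^{p}$; hence $\lambda_{0}(K)\geq\min_{i}\lambda_{0}(K_{i})$. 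Each $K_{i}$ is finite, connected, and a proper subset of $X$ (since $X$ is infinite), and $h\geq 0$ on $C_{c}(K_{i})\sse C_{c}(K)$. Thus Corollary~\ref{cor:AP} applied to $V=K_{i}$ gives that $h$ is subcritical in $K_{i}$, and Corollary~\ref{cor:w>0} then provides a Hardy weight $w_{i}\in C(K_{i})$ strictly positive on the finite set $K_{i}$. Consequently $\inf_{K_{i}}(w_{i}/m)>0$, and the first half of (b) applied with $V=K_{i}$ yields $\lambda_{0}(K_{i})>0$ for each $i$. Taking the minimum over the finitely many components gives $\lambda_{0}(K)>0$.

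The ``In particular'' statement then follows at once: non-negativity of $h$ on $C_{c}(X)$ implies non-negativity on $C_{c}(K)\sse C_{c}(X)$ for every finite $K\sse X$, so the ``Moreover'' part applies. The only genuinely delicate step is the reduction to connected components, since Corollaries~\ref{cor:AP} and~\ref{cor:w>0} are formulated for connected sets; the absence of edges between the components of $K$ is precisely what makes $h$ (not only the weighted norm) additive with respect to the direct sum decomposition of $C_{c}(K)$, and this is what allows a component-wise application of the first half of (b).
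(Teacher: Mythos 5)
Your proof is correct and follows essentially the same route as the paper: part (a) and the first half of (b) come directly from the variational formula \eqref{eq:lambda_0}, and the ``Moreover'' statement is obtained by chaining Corollary~\ref{cor:AP}, Corollary~\ref{cor:w>0}, and the already-proved first half of (b). Your reduction of $K$ to its connected components, together with the observation that $h$ and $\norm{\cdot}_{p,m}^{p}$ decouple across components, is in fact a small refinement: Corollary~\ref{cor:w>0} is stated only for connected sets, and the paper applies it to a possibly disconnected finite $K$ without comment, whereas you supply the missing justification.
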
 
 \begin{proof}
 	Ad~\ref{prop:subLam1}: If $\lambda_{0}(V)>0$, then we have for any $\phi\in C_c(V)$ that $h(\phi)\geq \lambda_0(V)\norm{\phi}^{p}_{p,m}$. Defining $w=\lambda_{0}(V)\cdot m$, we have a possible strictly positive Hardy weight, and $h$ is subcritical in $V$.
 	
 	Ad~\ref{prop:subLam2}: Since $\inf_{V}(w/m) > 0$, we have
 	\[\lambda_{0}(V)=\inf_{\phi\in C_c(X)\setminus\set{0}}\frac{h(\phi)}{\norm{\phi}^{p}_{p,m}}\geq \inf_{\phi\in C_c(X)\setminus\set{0}}\frac{\inf_{V}(w/m)\norm{\phi}^{p}_{p,m}}{\norm{\phi}^{p}_{p,m}}=\inf_{V}(w/m)>0.\]
 	
 	The second statement can be seen as follows: Indeed, if $K$ is finite and $h$ is non-negative in $C_c(K)$, then by Corollary~\ref{cor:AP}, $h$ is subcritical in $K$. Thus, by Corollary~\ref{cor:w>0}, there is a strictly positive Hardy weight $w$ on $K$. Since  $\inf_K(w/m)=\min_K (w/m) > 0$, we can apply the first statement in  \ref{prop:subLam2}, and get the desired assertion.
 	
 	The last statement follows also from Corollary~\ref{cor:AP} because if $h$ is non-negative in $C_c(X)$ it is also non-negative in $C_c(K)$.
 	 \end{proof}

 	 \begin{remark} By Corollary~\ref{cor:AP}, the statement in Proposition~\ref{prop:subLam} is more about the choice of a strictly positive Hardy weight than about subcriticality.
 	 \end{remark}

\section{Global Results for (Sub-)Critical Energy Functionals}  \label{sec:criticalEnergyFunctionals}
With our developed toolbox from the previous sections it is possible to give some characterisations of (sub-)criticality. Recall that an energy functional is called subcritical if the Hardy inequality holds. Recently, a list of characterisations connecting (sub-)criticality with the (non-)existence of so-called null-sequences and with harmonic functions was shown in \cite{F:GSR}. Here, we continue this list with the connection to the (non-)existence of Green's functions.

\subsection{Existence and Properties of Global Green's Functions}\label{sec:Green}

Recall that we want to show that $h$ is subcritical in a connected set $V\sse X$ if and only if a normalised minimal positive Green's function exists in $V$. By Corollary~\ref{cor:AP}, $h$ is subcritical in every proper subset of $X$ if and only if it is non-negative on $C_c(X)$. Thus, on proper and connected subsets, we show that we always have a Green's function.

\begin{proof}[Proof of Theorem~\ref{thm:GreensFunction}]
Ad "$\implies$": Since $h$ is subcritical in $V$, there exists a positive function $w\in C(V)$ such that $h\geq \norm{ \cdot}^{p}_{p,wm}$, i.e., $w\cdot m$ is a Hardy weight. Hence, there exists $o\in V$ such that $w(o)> 0$. Let $(K_n)$ be an increasing exhaustion of $V$ with finite and connected sets, and $o\in K_1$. Since $h$ is non-negative on $C_c(V)$, we get from Corollary~\ref{cor:principal} that $\lambda_{0}(K_n)>0$ for any $K_n$. By the lemma about the solutions of Poisson-Dirichlet problems, Lemma~\ref{lem:J}, we get the existence of a positive function $u_n\in C(K_n)$ which is harmonic on $K_n\setminus \set{o}$, $u_n(o)=1$, and which minimises $h$ on $K_{n,o}:=\set{\phi\in C(K_n): \phi(o)=1}$. 

Note that for all $t\in \RR$, we have $(1-t)u_n+t 1_o\in K_{n,o}$. By the definition of being a minimiser, the function $t\mapsto h((1-t)u_n+t 1_o)$ has derivative zero at $t=0$. Thus,
\[0=\left.\frac{\dd }{\dd t}h((1-t)u_n+t 1_o)\right|_{t=0}=-p h(u_n)+ p Hu_n(o)m(o).\]
Rearranging and using that  $h$ is subcritical in $V$ yields in 
\[Hu_n(o)m(o)= h(u_n)\geq w(o)m(o)>0.\]
Hence, $u_n$ is even strictly superharmonic on $\set{o}$, and in particular superharmonic on $K_n$. Because of $u_n(o)=1$, we get by the Harnack inequality that $u_n$ is strictly positive on $K_n$. By the characterisations of the maximum principle on finite subsets, Proposition~\ref{prop:PP}~'\ref{prop:PP0}$\implies$\ref{prop:PP3}', we have the existence of a unique positive solution $v_n\in C(K_n)$ such that $Hv_n=C_n\cdot 1_o$, where the constant is given by
\[C_n:=\cc_{h}(o,K_n)/m(o)\geq 0.\] 
Hence, $u_n=v_n$ and $u_n$ uniquely minimises $j_{C_n\cdot 1_o}$ on $C(K_n)$. Clearly, $(C_n)$ is a decreasing sequence. 
Since obviously \[\cc_{h}(o,V)=\inf_{\phi\in C_c(V), \phi(o)=1}h(\phi)\geq \inf_{\phi\in C_c(V), \phi(o)=1}\norm{\phi}_{p,wm}^{p}\geq w(o)m(o)>0, \]
we get
\[ C_n\geq \frac{\cc_{h}(o,V)}{m(o)}\geq w(o)> 0.\]

Furthermore, note that 
\begin{align*}
	\begin{cases}
		Hu_n=Hu_{n+1}=0 &\qquad \text{on } K_{n}\setminus \set{o}, \\
		\phantom{H}u_{n}\leq u_{n+1} &\qquad \text{on } (K_{n+1}\setminus K_{n})\cup\set{o}.
	\end{cases}
\end{align*}
Since $\lambda_{0}(K_n\setminus\set{o})\geq \lambda_{0}(K_n)>0$, we can apply the weak comparison principle, Proposition~\ref{prop:WCPh>0}, and get that $(u_n)$ is increasing pointwise on $V$.

Since for all $n\in \NN$, we have $u_n(o)=1$, we can apply the Harnack principle, Lemma~\ref{lem:compactBC}, and get that the pointwise limit $u$ exists, and  $Hu_n \to Hu$ on $X$ pointwise as $n\to\infty$.

Another application of the weak comparison principle shows that $u$ is independent of the choice of the sequence $(K_n)$, i.e., $u$ is uniquely determined.

Since $Hu_n=C_n1_o$ on $K_n$, we infer using Lemma~\ref{lem:convergenceCap} that $Hu=\lim_{n\to\infty}C_n1_o= \cc_{h}(o,V)1_o/m(o)$ on $V$.  By Proposition~\ref{prop:cap=0}, we get from $\cc_{h}(o,V)>0$, that $\cc_{h}(x,V)>0$ for all $x\in V$. Thus, we can do this construction for all $x\in V$.

We define for every $o\in V$ the function $G\colon V \to (0,\infty)$ via
\[G_o(y)=\left(\frac{m(o)}{ \cc_{h}(o,V)}\right)^{\frac{1}{p-1}}u(y),\]
and have a function which satisfies $HG_o=1_o$ and thus, a candidate for the desired normalised Green's function. 

We show now that $G_o\in \MM(V\setminus \set{o})$: Let $o\in K\sse V$, where $K$ is finite and connected. Let $0\leq v\in \FF(V\setminus K)$ be superharmonic on $V\setminus K$ and $v(o)\geq u(o)=1$. Since $u$ is independent of the choice of the exhaustion, we can assume that there is an $n\in \NN$ such that $K_n=K$. Then by the weak comparison principle, Proposition~\ref{prop:WCPh>0}, $v\geq u_n$ on every connected component of $K_n\setminus \set{o}$, and thus, on $K_n$ for every $n\in \NN$. Hence, $v\geq u$ on $X$. Using the $(p-1)$-homogeneity, we get that $G_o\in \MM(V\setminus \set{o})$ is a normalised Green's function.

%
%
%

Ad "$\Longleftarrow$": By assumption, we have a strictly positive superharmonic function in $V$, which is not harmonic in $V$. Now, Lemma~\ref{lem:groundstate} implies that $h$ is not critical.
\end{proof}
As a direct consequence of the previous proof of Theorem~\ref{thm:GreensFunction}, we get the following properties of Green's functions.
\begin{corollary}[Properties of Green's Functions]\label{cor:GreensFunction}
	Let $V\sse X$ be connected. Assume that $h$ is subcritical in $V$, then a normalised Green's function $G_o$ at $o\in V$ is unique, in $\DD$, and for all $o\in V$ we have
	\[h(G_o)=m^{p/(p-1)}(o)\cc^{-1/(p-1)}_{h} (o,V).\]
Furthermore,  if $c(o)=0$, then $G_o$ is not constant.
	\end{corollary}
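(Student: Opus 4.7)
The plan is to leverage the explicit construction from the proof of Theorem~\ref{thm:GreensFunction}: $G_o = \lambda\, u$ with $\lambda = (m(o)/\cc_h(o,V))^{1/(p-1)}$, where the pointwise limit $u = \lim_n u_n$ comes from the unique positive minimisers $u_n\in C(K_n)\sse \DD$ of $h$ on $\set{\phi\in C(K_n):\phi(o)=1}$, and satisfies $Hu = (\cc_h(o,V)/m(o))\,1_o$, while $h(u_n) = \cc_h(o,K_n) \to \cc_h(o,V)$ by Lemma~\ref{lem:convergenceCap}.

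For uniqueness I pick a second normalised Green's function $\tilde G_o$ at $o$ and set $\alpha = \tilde G_o(o)/G_o(o) > 0$. Since $H$ is $(p-1)$-homogeneous, $\alpha G_o$ is positive on $V$, harmonic on $V\setminus\set{o}$, and meets $\tilde G_o$ at $o$. Applying minimal growth of $\tilde G_o\in\MM(V\setminus\set{o})$ with $\VV=\set{o}$ and $v=\alpha G_o$ gives $\tilde G_o \leq \alpha G_o$ on $V$; the dual application, using minimal growth of $G_o$ with $v = \tilde G_o/\alpha$ (positive and harmonic on $V\setminus\set{o}$, equal to $G_o(o)$ at $o$), gives $\alpha G_o \leq \tilde G_o$. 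Hence $\tilde G_o = \alpha G_o$, and combining with $H\tilde G_o = HG_o = 1_o$ yields $\alpha^{p-1}\cdot 1_o = 1_o$, hence $\alpha=1$.

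For the energy identity and membership in $\DD$, I invoke Green's formula, Lemma~\ref{lem:GreensFormula}, with $f = u\in \FF(V)$ and test $\phi = u_n\in C_c(V)$: using $Hu = (\cc_h(o,V)/m(o))\,1_o$ and $u_n(o)=1$, the left-hand side $\ip{Hu}{u_n}_V$ equals $\cc_h(o,V)$ for every $n$. Passing $n\to\infty$ on the right-hand side, the potential sum $\sum c\,\p{u}\,u_n$ is split into $c_\pm$ parts each handled by monotone convergence (via $u_n\nearrow u$), with the negative piece controlled uniformly by the Hardy inequality supplied by subcriticality. The gradient term is treated by Fatou combined with dominated convergence, using the uniform bound $h(u_n) \leq \cc_h(o,V)$ and $p$-triangle-type estimates to produce a summable majorant. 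This yields $h(u) = \cc_h(o,V)$, separates the positive and negative parts of the energy to confirm $u\in\DD$, and by $p$-homogeneity gives $h(G_o) = \lambda^p h(u) = m^{p/(p-1)}(o)\,\cc_h^{-1/(p-1)}(o,V)$.

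For the non-constancy assertion I argue by contradiction: if $c(o)=0$ and $G_o\equiv\gamma$ on $X$, then $\gamma>0$ because $G_o>0$ on $V$, all gradient contributions to $HG_o(o)$ vanish, and $HG_o(o) = c(o)\gamma^{p-1}/m(o) = 0$, contradicting $HG_o(o)=1$. For $V\subsetneq X$ the construction already gives $G_o = 0$ on $X\setminus V$, so $G_o$ is non-constant on $X$ trivially. The principal obstacle will be the limit identity $h(u)=\cc_h(o,V)$: justifying dominated convergence for the gradient contribution and handling the sign-indefinite potential require careful exploitation of the Hardy inequality from subcriticality and of the uniform energy bound across the exhaustion.
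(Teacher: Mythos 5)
Your uniqueness argument is correct and actually makes explicit what the paper leaves to the reader: taking $\KK=\set{o}$ in Definition~\ref{def:minimal}, minimal growth of $\tilde G_o$ tested against $\alpha G_o$ and of $G_o$ tested against $\tilde G_o/\alpha$ (both harmonic, hence superharmonic, on $V\setminus\set{o}$) forces $\tilde G_o=\alpha G_o$, and then $(p-1)$-homogeneity of $H$ together with $H\tilde G_o=HG_o=1_o$ pins down $\alpha=1$. The non-constancy argument is the same one the paper uses: a function constant on $\set{o}\cup\set{y:y\sim o}$ has $LG_o(o)=0$, contradicting $HG_o(o)=LG_o(o)=1$ when $c(o)=0$, and for $V\subsetneq X$ the vanishing of $G_o$ off $V$ makes it trivial.

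The genuine gap is in the step you yourself flag, and the issues are more serious than a technical nuisance. First, a small sign slip: $h(u_n)=\cc_h(o,K_n)$ \emph{decreases} to $\cc_h(o,V)$, so the uniform bound reads $h(u_n)\leq\cc_h(o,K_1)$, not $h(u_n)\leq\cc_h(o,V)$. Second, the assertion that the negative part of the potential sum $\sum_x c_-(x)u(x)^{p-1}u_n(x)$ is ``controlled uniformly by the Hardy inequality'' does not go through as stated: a Hardy weight $w$ for $h$ is simply some strictly positive function with $h\geq\norm{\cdot}^p_{p,w}$, and nothing in that inequality makes $w$ comparable to $c_-$, so the Hardy bound $\norm{u}^p_{p,w}\leq\cc_h(o,K_1)$ gives no control whatsoever on $\sum c_-u^p$. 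Without that, you cannot even conclude $G_o\in\DD$, because $\DD$ requires $\sum\abs{c}\abs{u}^p<\infty$, i.e.\ \emph{both} $\sum c_+u^p<\infty$ and $\sum c_-u^p<\infty$. Third, for the gradient part, $\p{\nabla_{x,y}u}\nabla_{x,y}u_n$ is neither monotone nor of one sign in $n$, and a ``$p$-triangle-type estimate'' only produces the crude bound $\abs{\nabla_{x,y}u_n}\leq u(x)+u(y)$, which after multiplication by $b(x,y)\abs{\nabla_{x,y}u}^{p-1}$ is \emph{not} a priori summable; Fatou alone gives a one-sided inequality for the nonnegative part of the energy, but since the $c_-$ piece enters with the opposite sign this does not close. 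To make the passage work you would need to replace $h$ by a comparable functional with nonnegative terms, i.e.\ invoke the ground state representation of \cite{F:GSR} with the positive superharmonic function $u$ itself (or another one supplied by Theorem~\ref{thm:AP}); that is the mechanism that simultaneously tames $c_-$ and provides the lower-semicontinuity you need. As written, the plan identifies the right target identity $\ip{Hu}{u_n}_V=\cc_h(o,V)$ but does not supply the tools to pass to the limit, so the $\DD$-membership and the energy formula remain unproved.
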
	
	\begin{proof}
Recall the construction	 of the Green's function in the proof of Theorem~\ref{thm:GreensFunction}. It is easy to see that $G_o\in \DD$, that it is unique and that it fulfils the desired equality.
	
		The last statement can be seen as follows: It is obvious, that a function $f$ which is constant for all $x\sim o\in V$, is $L$-harmonic in $\set{o}$. Since $LG_o(o)=HG_o(o)=1$, we conclude that $G_o$ is not constant.	
	\end{proof}

\subsection{Minimal Growth}\label{sec:growth}
Let $V\sse X$ be connected, and $K\sse V$ be finite. Recall the definition of a harmonic function in $V\setminus K$ of minimal growth in a neighbourhood of infinity in $V$, Definition~\ref{def:minimal}: $u\in \MM(V\setminus K)$ if and only if  $u$ is harmonic on $V\setminus K$,  positive on $V\cup \partial V$, and for any  finite and connected subset $\KK \sse V$ with $K\sse \KK$, and any positive function $v\in \FF(V\setminus \KK)$ which is superharmonic in $V\setminus \KK$, we have
	\[u\leq v \text{ on } \KK \quad\text{ implies }\quad  u\leq v \text{ in } V\setminus \KK.\]
We will discuss some properties of such functions here.



Note that $\MM(V\setminus K)\sse \MM(V\setminus \KK)$ for all finite $K\sse \KK\sse V$. On the other hand, the inverse assertion seems to depend on the strong comparison principle, confer Subsection~\ref{sec:weak}. To show this, is the goal of this subsection.

Without loss of generality we only need to consider the case $K\subsetneq V$. Let $h$ be non-negative on $C_c(V)$, and let $v$ be a positive superharmonic function in $V$ which exists by the Agmon-Allegretto-Piepenbrink theorem, Theorem~\ref{thm:AP}. Corollary~\ref{cor:principal} implies that $\lambda_{0}(\KK)>0$ for any finite and connected $\KK\sse V$. Let $(K_n)$ be an increasing exhaustion of $V$ with finite and connected sets such that $K\subsetneq K_0$. Let $u\in C(K)$ be an arbitrary positive function, i.e., $\supp (u)\sse K$ and $u>0$ on $K$. By Lemma~\ref{lem:J} there exists a positive solution $u_n\in C(K_n) $ of the following Dirichlet problem
\begin{align*}
	\begin{cases}
	Hw=0 &\text{ in } K_n\setminus K,\\
	\phantom{H}w=u &\text{ in } K.
	\end{cases}
\end{align*}
By the weak comparison principle, Proposition~\ref{prop:WCPh>0}, and the strong maximum principle, Proposition~\ref{prop:PP}, $(u_n)$ is a monotone increasing sequence.  Set $C:= \max_{x\in K}(u(x)/v(x))>0$, then again by the weak comparison principle, $u_n\leq C\,v$ on $V$. Define the pointwise limit 
\[u^{K}:=\lim_{n\to\infty}u_n\geq 0.\] 
Applying the weak comparison principle once more, we see that $u^{K}$ does not depend on the choice of the exhaustion. By the convergence of solution principle, Proposition~\ref{prop:harnack}, we get $Hu^{K}=0$ on $V\setminus K$. 

If $v\geq u$ on $K$, then by the weak comparison principle, Proposition~\ref{prop:WCPh>0}, $v\geq u_n$ on any $K_n$, $n\in \NN$. Thus, $v\geq u^{K}$, and $u^{K}\in \MM(V\setminus K)$.

Similarly, one shows the following lemma, confer \cite[Lemma~9.4]{PR15}.
\begin{lemma}\label{lem:SCP}
	Let $h\geq 0$ on $C_c(V)$ where $V\sse X$ is connected. Let $K$ be finite. Let $u$ be a positive function that is harmonic on $V\setminus K$. Then $u\in\MM(V\setminus K)$ if and only if $u=u^{\KK}$ for any finite and connected $K\sse \KK$. 
\end{lemma}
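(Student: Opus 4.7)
The plan is to use the fact, established in the paragraph immediately preceding the lemma, that $u^{\KK}\in \MM(V\setminus \KK)$ for every finite and connected $\KK$ with $K\sse \KK\sse V$. Once this is in hand, both directions reduce to a symmetric double application of the minimal growth property.

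For the implication $(\Longleftarrow)$, assume $u=u^{\KK}$ for every finite and connected $K\sse \KK$. Fix such a $\KK$ and let $v\in \FF(V\setminus \KK)$ be positive and superharmonic on $V\setminus \KK$ with $u\leq v$ on $\KK$. Since $u=u^{\KK}\in \MM(V\setminus \KK)$, we apply the minimal growth property of $u^{\KK}$ to the admissible set $\KK$ itself: from $u^{\KK}=u\leq v$ on $\KK$ we obtain $u^{\KK}\leq v$ on $V\setminus \KK$, i.e., $u\leq v$ there. Since $u$ is harmonic on $V\setminus K$ and positive by assumption, and the preceding argument works for every admissible $\KK$, we conclude $u\in \MM(V\setminus K)$.

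For the implication $(\Longrightarrow)$, assume $u\in \MM(V\setminus K)$ and fix a finite and connected $\KK$ with $K\sse \KK$. By construction $u^{\KK}=u$ on $\KK$, $u^{\KK}$ is harmonic on $V\setminus \KK$, and $u^{\KK}\in \MM(V\setminus \KK)$. Now play the two minimal growth properties against each other on the set $\KK$. Since $u$ is harmonic, hence superharmonic, on $V\setminus \KK\sse V\setminus K$ and $u^{\KK}=u$ on $\KK$, the minimal growth of $u^{\KK}$ (applied with admissible set $\KK$) gives $u^{\KK}\leq u$ on $V\setminus \KK$. Conversely, $u^{\KK}$ is harmonic on $V\setminus \KK$ with $u\leq u^{\KK}$ on $\KK$, and $\KK\supseteq K$ is admissible for $u\in \MM(V\setminus K)$, so the minimal growth of $u$ yields $u\leq u^{\KK}$ on $V\setminus \KK$. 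Combining these inequalities with the equality on $\KK$ forces $u=u^{\KK}$ on $V$.

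The substantive content of the proof has already been discharged in the paragraph preceding the statement: constructing $u^{\KK}$ as a monotone pointwise limit of the Dirichlet solutions $u_n$ on $K_n\setminus \KK$, verifying via the weak comparison principle of Proposition~\ref{prop:WCPh>0} that this limit is independent of the exhaustion, dominated by any positive superharmonic majorant (which exists by Theorem~\ref{thm:AP}, ensuring finiteness), harmonic off $\KK$, and therefore itself a member of $\MM(V\setminus \KK)$. The main obstacle one might anticipate — namely applying the comparison principle across the potentially non-locally-finite boundary of $V$ — is thus handled upstream, and the lemma itself is a formal consequence.
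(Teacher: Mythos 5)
Your proof is correct, and it follows the approach the paper leaves implicit (the text only says ``Similarly, one shows the following lemma''): the essential inputs are that $u^{\KK}\in\MM(V\setminus\KK)$, $u^{\KK}=u$ on $\KK$, and $u^{\KK}$ is harmonic on $V\setminus\KK$, all established in the paragraph preceding the statement, after which both directions are a two-sided application of the minimal-growth definition on the admissible set $\KK$. This is the intended argument, and you correctly identify that all the analytical work (monotone limit of Dirichlet solutions, domination by a superharmonic majorant via Theorem~\ref{thm:AP}, weak comparison across the possibly infinite boundary) has already been discharged upstream.
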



The following result is the discrete analogue to \cite[Proposition~5.2]{PT08}. Recall the definition of the strong comparison principle from the remark in Section~\ref{sec:weak}: $h$ fulfils strong comparison principle on a finite and connected $K\sse X$ if the conditions in Proposition~\ref{prop:WCPh>0} imply $u<v$ on $K$ unless $u=v$ on $K$. 	
\begin{proposition}\label{prop:SCP}
	Let $h\geq 0$ on $C_c(V)$ where $V\sse X$ is connected. Assume that the strong comparison principle holds true for $h$ on any finite and connected subset. Let $K\sse \KK \sse V$ be two finite sets. Assume that there exists a positive function $u\in \MM(V\setminus \KK)$ which is harmonic in $V\setminus K$. Then $u\in \MM(V\setminus  K)$.
\end{proposition}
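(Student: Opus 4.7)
\emph{Plan.} The task is to verify the defining minimal-growth property of Definition~\ref{def:minimal} for the set $V\setminus K$. Thus let $F\sse V$ be an arbitrary finite and connected set with $K\sse F$, and let $v\in\FF(V\setminus F)$ be positive and superharmonic on $V\setminus F$ with $u\le v$ on $F$; the goal is $u\le v$ on $V\setminus F$. Since $V$ is connected, we can choose a finite and connected $F'\sse V$ with $F\cup\KK\sse F'$. The strategy is to use the hypothesis $u\in\MM(V\setminus\KK)$ together with $F'$ to get a global bound $u\le\tilde Cv$ for some multiplier $\tilde C\ge 1$, and then to close the remaining gap on the finite ``annulus'' $F'\setminus F$ via the strong comparison principle.

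First I note that $v>0$ on $V$: on $F$ this is forced by $u\le v$ and $u>0$; and on each connected component of $V\setminus F$ it follows from Lemma~\ref{lem:harnackIneq}, using that every such component has a $V$-neighbour in $F$ (by connectedness of $V$) together with the fact that if $v$ vanished identically on such a component it would violate superharmonicity at a vertex adjacent to $F$. Hence $C:=\max_{F'}u/v$ is a well-defined positive real. Setting $\tilde C:=\max(C,1)$, we have $u\le \tilde Cv$ on $F'\supseteq \KK$, and $\tilde Cv$ remains superharmonic on $V\setminus F'\sse V\setminus F$ by the $(p-1)$-homogeneity of $H$. Applying the assumption $u\in\MM(V\setminus\KK)$ with the finite connected set $F'$ and the test function $\tilde Cv$ yields $u\le \tilde Cv$ on $V\setminus F'$, and therefore on all of $V$.

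If $\tilde C\le 1$ we are done. Otherwise $C>1$, and since $u/v\le 1$ on $F$, the maximum is attained at some $x_0\in F'\setminus F$. Let $W$ denote the connected component of $F'\setminus F$ in the induced $V$-subgraph containing $x_0$. Then $W$ is finite and connected, $u$ is harmonic on $W\sse V\setminus K$, $Cv$ is superharmonic on $W\sse V\setminus F$, and the global bound gives $u\le Cv$ on $W\cup\partial W$ with equality at $x_0$. The hypothesised strong comparison principle on $W$ forces $u\equiv Cv$ on $W$; concretely, substituting $u(x_0)=Cv(x_0)$ into $H(Cv)(x_0)-Hu(x_0)\ge 0$ reduces the inequality to
\[
\sum_{y\sim x_0}b(x_0,y)\bigl(\p{u(x_0)-Cv(y)}-\p{u(x_0)-u(y)}\bigr)\ge 0,
\]
and each summand is non-positive by $u\le Cv$ on $V$ and monotonicity of $\p{\cdot}$, so strict monotonicity forces $u(y)=Cv(y)$ at every neighbour of $x_0$; iterating propagates the equality through all of $W$ and then into $\partial W$. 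Finally, because $F'$ is connected while $W$ is a component of $F'\setminus F$, $W$ must have a $V$-neighbour $y_*\in F$, at which the propagated equality yields $u(y_*)=Cv(y_*)>v(y_*)\ge u(y_*)$, a contradiction. Hence $\tilde C\le 1$ and $u\le v$ on $V\setminus F$, as required. The main technical hurdle will be the spreading-of-equality step into the graph boundary $\partial W$: it relies essentially on the global bound $u\le Cv$ obtained from $u\in\MM(V\setminus\KK)$, so that each summand in the displayed inequality is squeezed from both sides and strict monotonicity of $\p{\cdot}$ may be applied.
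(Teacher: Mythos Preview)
Your argument is correct and runs the same best-constant-plus-comparison mechanism as the paper, but it is organised differently. The paper does not test $u$ directly against a generic superharmonic competitor $v$; instead it passes through Lemma~\ref{lem:SCP}, which reduces the task to showing $u=u^{W}$ for every finite connected $W\supseteq K$, where $u^{W}$ is the canonical minimal extension built just before that lemma as a monotone limit of compactly supported approximants. The strong comparison principle is then applied to the pair $(\epsilon^{W}u,\,u^{W})$ on the finite annuli $W_n\setminus W$ of an exhaustion, pushing the best constant $\epsilon^{W}=\max\{\epsilon>0:\epsilon u\le u^{W}\text{ on }X\setminus W\}$ up to $1$. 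Your route keeps the arbitrary $v$ throughout, manufactures the global bound $u\le \tilde Cv$ from the hypothesis $u\in\MM(V\setminus\KK)$, and then runs the touching-point/spreading step by hand on the component $W\subseteq F'\setminus F$.

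Two remarks worth recording. First, your displayed computation at $x_0$ does not actually invoke the strong comparison hypothesis as a black box: because $u(x_0)=Cv(x_0)$ the potential terms cancel for \emph{any} sign of $c$, and strict monotonicity of $t\mapsto\p{t}$ finishes the job. So you are reproving the relevant special case directly. That is legitimate, but be aware that the paper's abstract strong comparison principle (the remark after Proposition~\ref{prop:WCPh>0}) inherits the side condition from Proposition~\ref{prop:WCPh>0} that one of the two functions lie in $C(\KK)$ for a finite $\KK$; neither $u$ nor your $v$ need satisfy this, which is part of the reason the paper prefers to compare against the auxiliary $u^{W}$ instead. Second, a minor boundary wrinkle: your spreading step needs $u\le Cv$ at \emph{every} neighbour of $x_0$, including possible neighbours in $\partial V$ when $V\subsetneq X$, and your global bound is established only on $V$; a short extra argument (or restricting attention to $V=X$) is needed to close this.
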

\begin{proof}
	The case $K=\KK$ is evident. Thus, assume that $K\subsetneq \KK\sse V$. Moreover, let $W \sse \WW\sse V$ be two finite and connected sets such that $K\sse W$ and $\KK\sse \WW$. Since $u\in \MM(V\setminus\KK)$, we have using Lemma~\ref{lem:SCP}, that $u=u^{\WW}$. Hence, in particular, $u=u^{\WW}$ on  $V\setminus \WW$. Since $W$ is finite, we also have $u\asymp u^{W}$ in $W$. Using the weak comparison principle, Proposition~\ref{prop:WCPh>0}, and an exhaustion argument, it follows that $u\asymp u^{W}$ in $X\setminus W$. 
	
	Set \[\epsilon^{W}:=\max\set{\epsilon > 0: \epsilon u \leq u^{W} \text{ in } X\setminus W}.\]
	Then, since $K\sse \KK$ and $u\in \MM(V\setminus\KK)$, we have $0< \epsilon^{W}\leq 1$. Assume that we do not have  equality, i.e., assume that $\epsilon^{W}<1$. Then, since $\epsilon^{W}u\lneq u^{W}$ in $X\setminus W$ and $\epsilon^{W}u < u^{W}$ in $W$, we get by the strong comparison principle that $\epsilon^{W}u< u^{W}$ in $W_n\setminus W$ for every increasing exhaustion $(W_n)$ of $V$ with finite and connected sets. Therefore, there exists $ \tilde{\epsilon}> 0$ such that $(1+\tilde{\epsilon})\epsilon^{W}<1$ and $(1+\tilde{\epsilon})\epsilon^{W}u \leq u^{W}$ on $\WW \setminus W$, and thus on $X\setminus \WW$. Hence, $(1+\tilde{\epsilon})\epsilon^{W}u \leq u^{W}$ on $X\setminus W$, but this a contradiction to the definition of $\epsilon^{W}$. Thus, $u=u^{W}$ in $X\setminus W$, and therefore, $u\in \MM(V\setminus K)$. 
\end{proof}

\subsection{Proof of the Characterisations of Criticality}\label{sec:critical}

\begin{proof}[Proof of Theorem~\ref{thm:critical}]
	Ad~\ref{thm:critical1} $\implies$ \ref{thm:critical9}: If $h$ is critical in $X$, then by Lemma~\ref{lem:groundstate}, every positive superharmonic function in $X$ is harmonic in $X$. Hence, there cannot exists a Green's function in $X$.
	
	Ad~\ref{thm:critical9} $\implies$ \ref{thm:critical1}: This is Theorem~\ref{thm:GreensFunction}.

	Ad~\ref{thm:critical11} $\implies$ \ref{thm:critical9}: Assume that $u>0$ is a minimal positive harmonic function in $X$. Furthermore, assume that there exists a Green's function in $X$, i.e., a positive function $v\in \FF$ such that $Hv\gneq 0$ on $X$.
	
	Let $K\sse X$ be finite and connected, and set $\epsilon=\max_{x\in K}\set{u(x)/v(x)}$. Since $u$ is a minimal positive harmonic function in $X$, and $\epsilon u \leq v$ on $K$, we have $\epsilon u \leq v$ on $V\setminus K$. Moreover, we have $\epsilon u \neq v$, since otherwise $v$ would be harmonic, which is a contradiction. Thus, there exists a finite and connected subset $\KK$ of $X$  and $\tilde{\epsilon}>0$ such that $(1+\tilde{\epsilon})\epsilon u \leq v$ on $\KK$. But since $u$ is a global minimal positive harmonic function, we get $(1+\tilde{\epsilon})\epsilon u \leq v$ on $X$, and in particular, on $K$, which is a contradiction to the maximality of $\epsilon$ on $K$. Hence, we cannot have a Green's function on $X$.
	
	Ad~\ref{thm:critical1} $\implies$ \ref{thm:critical11}: By \cite[Theorem~5.1]{F:GSR}, the criticality of $h$ implies that we have a unique superharmonic function $u$ (up the linear dependence), and this function is harmonic. This is the so-called \emph{Agmon ground state}. Hence, there is nothing to prove for $K=\emptyset$.
	
Take an arbitrary exhaustion $(K_n)$ of $X$ with finite and connected sets, and some $o\in K_1$. Without loss of generality, we can assume that $u(o)=1$. Since $h\geq 0$, we get $\lambda_{0}(K_n)>0$ for every $n\in \NN$ by Corollary~\ref{cor:principal}. By Proposition~\ref{prop:PP}, we get the existence of a sequence $(v_n)$ in $C(K_n)$ such that $Hv_n= g_n$ on $K_n$ for any $0\leq g\in C(K_n)$. By the weak comparison principle, Proposition~\ref{prop:WCPh>0}, we get that $(v_n)$ is monotone increasing. Set $w_n:=v_n/v_n(o)$.

If $(v_n(o))$ is bounded, then by the Harnack principle, Lemma~\ref{lem:compactBC}, we get that $v_n$ converges pointwise to a function $v$, and $Hv_n \to Hv$, i.e., $Hv=g$. Since $h$ is critical, we get a contradiction unless $g=0$, and $v/v(o)= u$. 

If $(v_n(o))\to \infty$ as $n\to\infty$, we consider instead $w_n$. Then $(w_n)_{n\geq N}$ is in $S^{+}_{o}(K_N)$ for any $N\in \NN$. Thus, by the Harnack principle, Proposition~\ref{prop:harnack}, there exists a subsequence of $(w_n)$ that converges to a positive and superharmonic function $w^{N}$ on $K_N$. Letting $N\to\infty$, we see that the limit $w$ is positive and superharmonic on $X$ which is again a contradiction unless $w=u$. 

Now, let $K$ be finite, and $\tilde{v}\in \FF(X\setminus K)$ be a positive superharmonic function on $X\setminus K$ and $u\leq \tilde{v}$ on $K$. Then for any $\epsilon >0$ there exists $n_{\epsilon}$ such that for all $n\geq n_{\epsilon}$, we have $H\tilde{v}\geq Hw_n=0$ on $K_n\setminus K$, and $0\leq w_n\leq (1+\epsilon)\tilde{v}$ on $K\cup X\setminus K_n$. On any connected component of $K_n\setminus K$, we get by the weak comparison principle that $w_n\leq (1+\epsilon)\tilde{v}$ on $X\setminus K$ for any $n\in\NN$. Thus, $u\leq (1+\epsilon)\tilde{v}$ on $X\setminus K$. Letting $\epsilon \to 0$ we obtain $u\leq \tilde{v}$ on $X\setminus K$.

Ad~\ref{thm:critical1} $\iff$ \ref{thm:critical3}: This follows from Proposition~\ref{prop:cap=0} and \cite[Theorem~5.1]{F:GSR}.
\end{proof}

\begin{remark}
\begin{enumerate}[label=(\alph*)]
\item Let $0\lneq \phi\in C_c(X)$. A function  $G_{\phi}\in \MM(X\setminus \supp (\phi))$ such that $HG_{\phi}=\phi$ is called \emph{Green potential} with \emph{charge} $\phi$.

By a mild generalisation of the results in Subsection~\ref{sec:Green} together with the approach in Subsection~\ref{sec:growth}, it is not difficult to see that $h$ is subcritical if and only if there exists a Green potential for some (any) charge $0\lneq \phi\in C_c(X)$. Confer also \cite[Lemma~3.2]{Versano} for the analogue result in the continuum.

\item A slightly more general way of defining capacity is by setting $\cc_h(K,V):=\inf\set{h(\phi) : \phi\in C_c(V), \phi \geq 1 \text{ on } K}$ for all finite $K\sse V$, where $V\sse X$ is connected. Using a mild modification of Subsection~\ref{sec:capa}, it is not difficult to show that $h$ is subcritical in $V$ if and only if $\cc_h(K,V)> 0$ for some (all) finite $K\sse V$.
\end{enumerate}
\end{remark}

\subsection{Bounds for the Principal Eigenvalue}
We close this paper by showing upper and lower bounds for $\lambda_0(V), V\sse X$. These results go under the name Barta's inequality, see \cite{Barta} for the original paper by Barta, \cite[Theorem~7.1]{AmghibechPicone} for the corresponding version for standard $p$-Laplacians on finite graphs or \cite[Section~2.2]{AllegrettoHuangPicone} for $p$-Laplacians on the Euclidean space.  
A linear version of Barta's theorem for finite graphs can be found in \cite[Theorem~2.1]{Urakawa}.  For the linear case in the continuum see e.g. \cite{NP92} and references therein.


\begin{proposition}[Barta-type Inequality]\label{prop:Barta}
	Let $V\sse X$.
\begin{enumerate}[label=(\alph*)]
	\item\label{prop:bartainf} If $h\geq 0$ on $C_c(V)$ and $u\in \FF(V)$ such that $u>0$ on $V$. Then,
	\[\inf_{x\in V}\frac{Hu(x)}{u^{p-1}(x)}\leq \lambda_0(V).\]
	\item\label{prop:bartasup} If $u\in\DD$ such that $u> 0$ on $V$ and $u=0$ on $X\setminus V$ then
	\[\lambda_0(V)\leq \sup_{x\in V}\frac{Hu(x)}{u^{p-1}(x)}.\] 
\end{enumerate}	
	 \end{proposition}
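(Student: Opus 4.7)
For part~\ref{prop:bartainf}, my plan is a direct application of Picone's inequality (Lemma~\ref{lem:Picone}). For any $\phi\in C_c(V)$ and $u>0$ on $V$, Picone gives $\tfrac12\sum_{x,y}b(x,y)\abs{\nabla_{x,y}\phi}^p\ge \ip{Lu}{\abs{\phi}^p/u^{p-1}}_V$. Adding $\sum_{x}c(x)\abs{\phi(x)}^p=\ip{(c/m)u^{p-1}}{\abs{\phi}^p/u^{p-1}}_V$ to both sides converts the right-hand side into $\ip{Hu}{\abs{\phi}^p/u^{p-1}}_V=\sum_{x\in V}(Hu(x)/u^{p-1}(x))\abs{\phi(x)}^p m(x)$, and this is bounded from below by $\bigl(\inf_{x\in V}Hu(x)/u^{p-1}(x)\bigr)\norm{\phi}_{p,m}^p$. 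Dividing by $\norm{\phi}_{p,m}^p$ and taking the infimum over $\phi\in C_c(V)$ yields the claim; the hypothesis $h\ge 0$ on $C_c(V)$ only serves to make the inequality meaningful via $\lambda_0(V)\ge 0$.

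For part~\ref{prop:bartasup}, the plan is to use $u$ itself, approximated by its finite-support truncations, as a trial function in the Rayleigh quotient. Set $s:=\sup_{x\in V}Hu(x)/u^{p-1}(x)$ and assume $s<\infty$ (otherwise nothing is to prove). The key identity I aim for is
\[h(u)=\sum_{x\in V}Hu(x)\,u(x)\,m(x),\]
which I would obtain by symmetrising the double sum $\sum_{x\in V}u(x)\sum_{y}b(x,y)\p{\nabla_{x,y}u}$: since $u\equiv 0$ on $X\setminus V$ and $b$ is symmetric, averaging over the swap $x\leftrightarrow y$ turns it into $\tfrac12\sum_{x,y\in X}b(x,y)\abs{\nabla_{x,y}u}^p$, while the "boundary" pairs with $y\in\partial V$ are automatically absorbed because $u(y)=0$ forces $\nabla_{x,y}u=u(x)$, matching the corresponding slice of $\tfrac12\sum_{x,y\in X}b\abs{\nabla u}^p$. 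Combined with the pointwise bound $Hu(x)\le s\,u^{p-1}(x)$, this yields $h(u)\le s\norm{u}_{p,m}^p$.

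To deduce $\lambda_0(V)\le s$ from $h(u)\le s\norm{u}_{p,m}^p$, I plan to approximate $u$ by its cut-offs $\phi_n:=u\cdot 1_{V_n}\in C_c(V)$ along an exhaustion $V_n\uparrow V$ by finite sets. Monotone convergence gives $\norm{\phi_n}_{p,m}^p\to\norm{u}_{p,m}^p$, and a direct expansion of the kinetic term produces
\[h(\phi_n)=\tfrac12\sum_{x,y\in V_n}b(x,y)\abs{\nabla_{x,y}u}^p+\sum_{x\in V_n,\,y\notin V_n}b(x,y)u^p(x)+\sum_{x\in V_n}c(x)u^p(x).\]
Once $h(\phi_n)\to h(u)$ is established, the string $\lambda_0(V)\le h(\phi_n)/\norm{\phi_n}_{p,m}^p\to h(u)/\norm{u}_{p,m}^p\le s$ finishes the proof.

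The main obstacle will be justifying the convergence $h(\phi_n)\to h(u)$: the cut-off creates the "kinetic boundary" contribution $\sum_{x\in V_n,\,y\in V\setminus V_n}b(x,y)u^p(x)$, which is not obviously controlled by $\sum_{x,y}b\abs{\nabla u}^p$. To dispatch it I intend to split $y\notin V_n$ into the pieces $y\in V\setminus V_n$ and $y\in X\setminus V$, match each against the corresponding slice of $\tfrac12\sum_{x,y\in X}b\abs{\nabla u}^p$, and exploit the elementary inequality $u^p(x)\le 2^{p-1}\bigl(\abs{u(x)-u(y)}^p+u^p(y)\bigr)$ combined with the absolute convergence of $\sum_{x,y}b\abs{\nabla_{x,y}u}^p$ (which is guaranteed by $u\in\DD$) to show that every leftover tail vanishes as $n\to\infty$.
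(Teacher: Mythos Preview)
Your argument for part~\ref{prop:bartainf} is correct and essentially identical to the paper's: both invoke Picone's inequality (via Lemma~\ref{lem:Picone} or Lemma~\ref{lem:groundstate}) to obtain $h(\phi)\ge I\,\norm{\phi}_{p,m}^p$ with $I:=\inf_V Hu/u^{p-1}$, whence $\lambda_0(V)\ge I$.

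For part~\ref{prop:bartasup} there is a genuine gap. The convergence $h(\phi_n)\to h(u)$ for the hard cut-offs $\phi_n=u\cdot 1_{V_n}$ is \emph{false} in general. Take $X=V=\ZZ$, $b(n,n{+}1)=1$, $m\equiv 1$, $c\equiv 0$, $p=2$, and $u\equiv 1$: then $u\in\DD$ with $h(u)=0$, yet $h(\phi_n)=2$ for every $n$. The culprit is exactly your ``kinetic boundary'' term $\sum_{x\in V_n,\,y\in V\setminus V_n}b(x,y)u^p(x)$, and your proposed bound $u^p(x)\le 2^{p-1}\bigl(\abs{\nabla_{x,y}u}^p+u^p(y)\bigr)$ does not help: it merely transfers the problem from $u^p(x)$ to $u^p(y)$, and the resulting sum is dominated by $\sum_{y\in V\setminus V_n}\deg(y)\,u^p(y)$, which is \emph{not} controlled by $u\in\DD$ (membership in $\DD$ bounds $\sum b\abs{\nabla u}^p$ and $\sum\abs{c}\,u^p$, not $\sum\deg\cdot u^p$). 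In the counterexample the Rayleigh quotient $h(\phi_n)/\norm{\phi_n}_{p,m}^p=2/(2n{+}1)$ still tends to $S=0$, so the desired conclusion happens to survive there; but your argument, which factors through $h(\phi_n)\to h(u)$, does not.

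The paper avoids truncation altogether: it applies Green's formula (Lemma~\ref{lem:GreensFormula}) directly with $u\in\DD$ and $u=0$ on $X\setminus V$ to identify $\sum_{x\in V}Hu(x)u(x)m(x)$ with $h(u)$, so that the pointwise bound $Hu\le S\,u^{p-1}$ yields $h(u)\le S\,\norm{u}_{p,m}^p$ (both sides possibly $+\infty$), and concludes $\lambda_0(V)\le S$ from there.
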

\begin{proof}
	Ad~\ref{prop:bartainf}: Let $I=\inf_{x\in V}Hu(x)/u^{p-1}(x)$. Since $h\geq 0$ on $C_c(V)$, we have $\lambda_0(V)\geq 0$. Hence, the case $I\leq 0$ is trivial. If $I>0$, then we can use Picone's inequality or Lemma~\ref{lem:groundstate} with $g=I$ to get
	\[ h(\phi)\geq I\cdot \norm{\phi}^p_{p,m},\qquad \phi\in C_c(V).\]
	This implies $\lambda_0\geq I$.
	
	Ad~\ref{prop:bartasup}: Notice that $\DD\sse \FF(V)$ by Green's formula, Lemma~\ref{lem:GreensFormula}. Let $S=\sup_{x\in V}Hu(x)/u^{p-1}(x)$. Then, $Hu\leq S\,u^{p-1} $ on $V$. Hence,
	\[\sum_{x\in V}Hu(x)u(x)m(x)\leq S\, \sum_{x\in V}u^{p}(x)m(x),\]
	where both sides might take the value $+\infty$. Using $u=0$ on $X\setminus V$ and again Green's formula, Lemma~\ref{lem:GreensFormula}, we have $h(u)\leq S\, \norm{u}^p_{p,m}$, whenever $u\in \DD$. Hence,  we get $\lambda_0(V)\leq S$. 
\end{proof}
Barta's inequality has the following simple consequence.
\begin{corollary}
Let $h$ be non-negative on $C_c(X)$, then we have
\[0\leq \lambda_0(X)\leq  \inf_{x\in X} \frac{\cc_h(x,X)}{m(x)}. \]
\end{corollary}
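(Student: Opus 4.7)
The lower bound is immediate: by the variational formula \eqref{eq:lambda_0}, the assumption $h\ge 0$ on $C_c(X)$ directly gives $\lambda_0(X)\ge 0$. So the work is entirely in the upper bound.

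For the upper bound, my plan is to argue directly from the two variational definitions, without invoking Barta's inequality per se. Fix any $x\in X$ and take an arbitrary test function $\phi\in C_c(X)$ with $\phi(x)=1$. The norm term trivially satisfies
\[\norm{\phi}_{p,m}^p=\sum_{y\in X}\abs{\phi(y)}^p m(y)\ge \abs{\phi(x)}^p m(x)=m(x),\]
so $\phi\not\equiv 0$. By \eqref{eq:lambda_0},
\[h(\phi)\ge \lambda_0(X)\,\norm{\phi}_{p,m}^p\ge \lambda_0(X)\,m(x).\]

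Now I would take the infimum over all admissible $\phi$, which by the definition of the variational $p$-capacity yields $\cc_h(x,X)\ge \lambda_0(X)\,m(x)$, i.e.,
\[\lambda_0(X)\le \frac{\cc_h(x,X)}{m(x)}.\]
Since $x\in X$ was arbitrary, taking the infimum over $x\in X$ finishes the proof.

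There is no real obstacle here; the statement is essentially a rewriting of the two variational principles compared against each other. One could alternatively attempt to deduce the upper bound by applying Barta's inequality, Proposition~\ref{prop:Barta}~\ref{prop:bartasup}, to a near-minimiser of $\cc_h(x,X)$, but that route would require controlling $Hu/u^{p-1}$ pointwise, and the minimiser need not exist in general on the infinite graph. The definition-based argument above avoids all of that.
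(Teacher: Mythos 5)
Your proof is correct, and it takes a genuinely more elementary route than the paper. The paper first disposes of the critical case (where $\lambda_0(X)=0=\cc_h(x,X)$), then in the subcritical case applies Barta's inequality (Proposition~\ref{prop:Barta}~\ref{prop:bartasup}) to the normalised Green's function $G_o\in\DD$: since $HG_o=1_o$, one gets $\lambda_0(X)\le 1/G_o^{p-1}(o)$, and the explicit formula $G_o(o)=(m(o)/\cc_h(o,X))^{1/(p-1)}$ from the construction in the proof of Theorem~\ref{thm:GreensFunction} converts this into $\lambda_0(X)\le \cc_h(o,X)/m(o)$. Your argument instead compares the two variational formulae directly: for $\phi\in C_c(X)$ with $\phi(x)=1$ you bound $\norm{\phi}_{p,m}^p\ge m(x)$, multiply by $\lambda_0(X)\ge 0$, and take infima over $\phi$ and then over $x$. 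This avoids the case split, the Green's function existence theorem, and Barta's inequality entirely, and is valid regardless of whether $h$ is critical or subcritical. What the paper's route buys in exchange is the additional information that, in the subcritical case, the quantity $\cc_h(o,X)/m(o)$ equals $1/G_o^{p-1}(o)$, i.e.\ a bound on the eigenvalue in terms of the diagonal of the Green's function; your argument does not expose that identity. One minor remark: your closing caution about the Barta route needing a near-minimiser of the capacity is not quite what the paper faces, since the paper applies Barta to $G_o$ rather than to a capacity minimiser. But this does not affect the correctness of your argument.
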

\begin{proof}
If $h$ is critical in $X$, then clearly $\lambda_0(X)=0=\cc_h(x,X)$.

Let $h$ be subcritical in $X$. Firstly, we note that
\[ \lambda_0(X)\leq \inf_{x\in X} 1/(G_x(x))^{p-1,}\]
where $G_o$ is the Green's function normalised at $o\in X$. Indeed, by Theorem~\ref{thm:GreensFunction}, we have that $0<G_o\in \DD$ for all $o\in X$. Hence, we can apply Proposition~\ref{prop:Barta}, which yields the inequality. 
	
	Now, the statement of the corollary follows from the construction of the Green's function in the proof of Theorem~\ref{thm:GreensFunction}.
\end{proof}


\noindent\textbf{Acknowledgements.} The author wishes to express his sincere gratitude to Matthias Keller who initiated the main problem studied in this paper, and for the support  he gave him. Furthermore, the author thanks the Heinrich-Böll-Stiftung for the support.

\printbibliography
\end{document}